\newcommand{\mypar}[1]{{\bf #1.}}
\theoremstyle{definition}
\newtheorem{myLem}{Lemma}
\newtheorem{myThm}{Theorem}
\newtheorem{myCorollary}{Corollary}
\newcommand{\R}{\ensuremath{\mathbb{R}}}
\DeclareMathOperator{\Id}{I}
\DeclareMathOperator{\KL}{KL}
\def\x{\mathbf{x}}
\def\y{\mathbf{y}}
\def\e{\mathbf{e}}
\def\w{\mathbf{w}}
\def\s{\mathbf{s}}
\def\N{\mathcal{N}}
\def\V{\mathcal{V}}
\def\E{\mathcal{E}}
\def\Ep{\mathbb{E}}
\def\one{{\bf 1}}
\DeclareMathOperator{\TV}{TV}
\DeclareMathOperator{\Adj}{A}
\DeclareMathOperator{\W}{W}
\begin{document}
\title{ \huge{Detecting Localized Categorical Attributes on Graphs} }
\author{Siheng~Chen,~\IEEEmembership{Student~Member,~IEEE},
  Yaoqing~Yang,~\IEEEmembership{Student~Member,~IEEE}, Shi Zong, Aarti
  Singh,
  Jelena~Kova\v{c}evi\'c,~\IEEEmembership{Fellow,~IEEE}
  \thanks{S. Chen (sihengc@andrew.cmu.edu), Y. Yang
    (yaoqingy@andrew.cmu.edu) and S. Zong (szong@andrew.cmu.edu) are
    with the Dept. of Electrical and Computer Engineering, A. Singh
    (aarti@cs.cmu.edu) is with the Dept. of Machine Learning and
    J. Kova\v{c}evi\'c (jelenak@cmu.edu) is with the Depts. of
    Electrical and Computer Engineering and Biomedical Engineering,
    Carnegie Mellon University, Pittsburgh, PA.
  }
  \thanks{The authors gratefully acknowledge support from the NSF
    through awards 1130616 and 1421919 and the University
    Transportation Center grant DTRT12-GUTC11 from the US Dept. of
    Transportation.}
}
 \maketitle


\begin{abstract}
  Do users from Carnegie Mellon University form social communities on
  Facebook? Do signal processing researchers from tightly collaborate
  with each other? Do Chinese restaurants in Manhattan cluster
  together? These seemingly different problems share a common
  structure: an attribute that may be localized on a graph. In other
  words, nodes activated by an attribute form a subgraph that can be
  easily separated from other nodes. In this paper, we thus focus on
  the task of detecting localized attributes on a graph. We are
  particularly interested in categorical attributes such as attributes
  in online social networks, ratings in recommender systems and
  viruses in cyber-physical systems because they are widely used in
  numerous data mining applications. To solve the task, we formulate
   a statistical hypothesis testing problem to decide whether a
  given attribute is localized or not. We propose two statistics:
  graph wavelet statistic and graph scan statistic, both of which are
  provably effective in detecting localized attributes.  We validate
  the robustness of the proposed statistics on both simulated data and
  two real-world applications: high air-pollution detection and
  keyword ranking in a co-authorship network collected from IEEE
  Xplore. Experimental results show that the proposed graph wavelet
  statistic and graph scan statistic are effective and efficient.
\end{abstract}
\begin{keywords}
  attribute graph, graph wavelet basis, graph scan statistic, ranking
\end{keywords}

\section{Introduction}
\label{sec:intro}
Massive amounts of data being generated from various sources including
social networks, citation, biological, and physical infrastructure
have spurred the emerging area of analyzing data supported on
graphs~\cite{Jackson:08, Newman:10} giving rise to a variety of
scientific and engineering studies; for example, selecting
representative training data to improve semi-supervised learning with
graphs~\cite{ChenVSK:15c}; detecting communities in communication or
social networks~\cite{GirvanN:02}; ranking the most important websites
on the Internet~\cite{BrinP:1998}; and detecting anomalies in sensor
networks~\cite{XieHTP:2011}.

\begin{figure}[htb]
  \begin{center}
    \begin{tabular}{cc}
      \includegraphics[width=0.4\columnwidth]{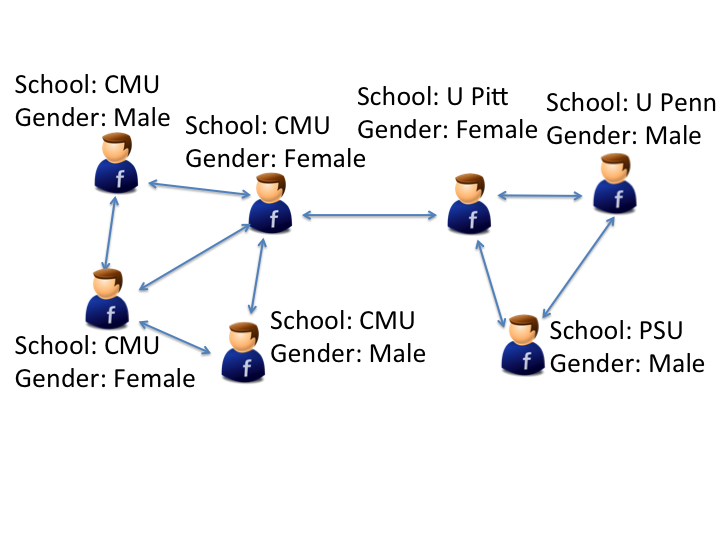} 
      &
      \includegraphics[width=0.4\columnwidth]{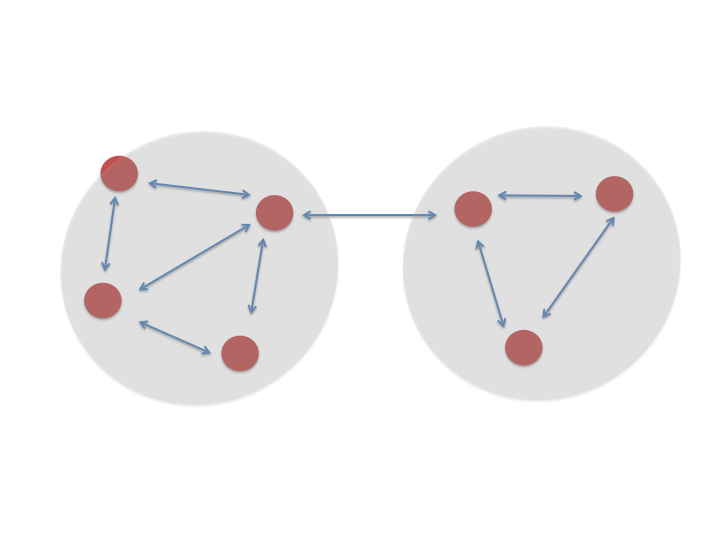} 
      \\
      {\small (a) Graph with two attributes.} &  {\small (b) Graph with two
    communities.} 
      \\
      \includegraphics[width=0.4\columnwidth]{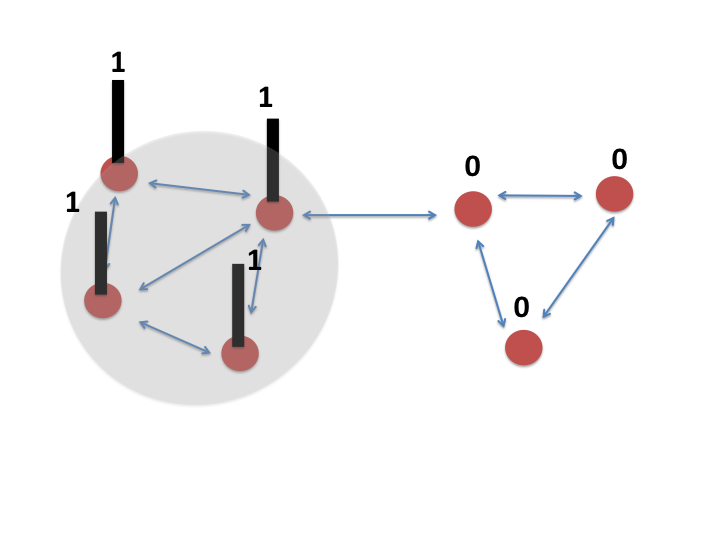}  &
      \includegraphics[width=0.4\columnwidth]{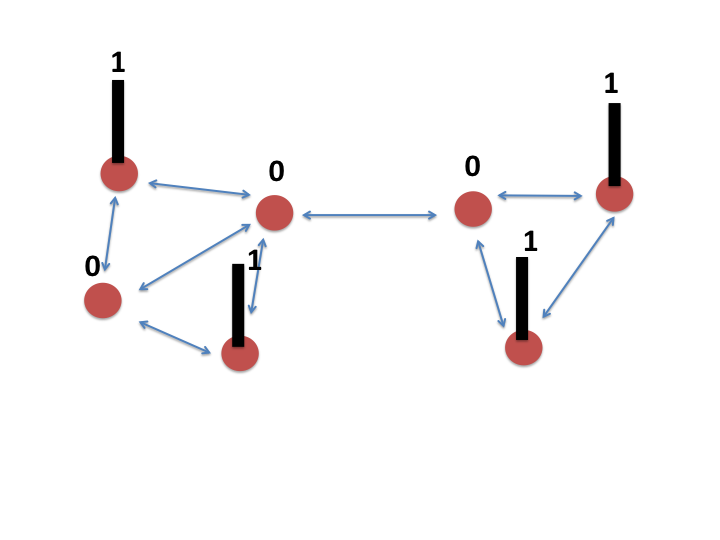} 
      \\
      {\small (c)  Attribute \emph{from CMU?}}  &  {\small (c)   Attribute \emph{male?}} 
    \end{tabular}
  \end{center}
  \caption{\label{fig:intuition} Detecting localized categorical
    attributes. (a) Graph with two attributes. (b) Graph with two
    communities. (c) Attribute \emph{Is this user from CMU?} is
    localized. (d) Attribute \emph{Is this user male?} is not
    localized. The goal of community detection is to identify
    subgraphs as in (b), while the goal of the localized attribute
    detection is to identify whether an attribute is localized
    (yes in (c), no in (d)). }
\end{figure}

Graph signal processing is a theoretical framework for the analysis of
high-dimensional data with complex, irregular
structure~\cite{ShumanNFOV:13,SandryhailaM:13}. It extends classical
signal processing concepts such as signals, filters, Fourier
transform, frequency response, low- and highpass filtering, from
signals residing on regular lattices to data residing on general
graphs; for example, a graph signal models the data value assigned to
each node in a graph. Recent work involves sampling for graph
signals~\cite{ChenVSK:15,AnisGO:15,WangLG:14,MarquesSGR:15}, recovery
for graph signals~\cite{NarangGO:13, ChenSMK:14,RomeroMG:16,
  KotzagiannidisD:16}, representations for graph
signals~\cite{ZhuM:12, ChenVSK:16}, uncertainty principles on
graphs~\cite{AgaskarL:13, TsitsveroBL:15}, stationary graph signal
processing~\cite{PerraudinV:16, MarquesSGR:16}, graph dictionary
construction~\cite{ThanouSF:14}, graph-based filter
banks~\cite{NarangO:12,NarangO:13, EkambaramFAR:15, ZengCO:16},
denoising on graphs~\cite{NarangO:12, ChenSMK:14a}, community
detection and clustering on graphs~\cite{Tremblay:14, DongFVN:14,
  ChenO:14}, distributed computing~\cite{KarM:09, DuMWP:14} and graph-based
transforms~\cite{HammondVG:11,NarangSO:10,ShumanFV:16}.

We here consider detecting~\emph{localized categorical attributes} on
graphs.  A categorical attribute is defined as a variable that can be
put into a countable number of categories. It can be represented by
several binary attributes and is widely used in data and graph mining
applications~\cite{AkogluTK:2015}.  We model categorical attributes by
binary graph signals\footnote{Attributes and binary graph
    signals are the same in this context.}: when a signal coefficient
is one, the corresponding node is activated by the attribute, and vice
versa.  A localized categorical attribute, or a localized pattern,
  is defined as an attribute whose activated nodes form a subgraph
  that can be easily separated from the rest of nodes; in other words,
  the cut cost is small. In practice, detecting a localized attribute
is nontrivial because an observed attribute is often corrupted by
noise. The goal of localized attribute detection is to identify
  localized attributes hidden in noisy attributes using graph
topology.  This task is relevant to many real-world applications such
as identifying localized attributes in online social networks,
activity in the brain connectivity networks and viruses in
cyber-physical systems.

This localized attribute detection task is related to, yet different
from conventional community detection in network
science~\cite{AhnBL:10, GirvanN:02, AbbeBH:16}. The goal of community
detection is to identify modules and hierarchical organization by
using the information encoded in the \emph{graph topology
  only}~\cite{Fortunato:10}. A module is typically considered to be a
node set with dense internal and sparse external
  connections. The difference between the two is that community
detection considers detecting patterns in graph topology while
localized attribute detection considers detecting patterns in an
attribute (binary graph signal).  For example, suppose that we want to
identify whether users from Carnegie Mellon University (CMU) form a
localized attribute on Facebook. The binary answer to \emph{Is this
  user from CMU?} is an attribute on a graph, which activates a
subgraph with few external connections. These activated users thus
form a localized attribute; see Figure~\ref{fig:intuition}.

To describe the localization level of an attribute, we consider
external connections of the subgraph activated by the attribute,
leading to scalable detection algorithms.  We define the localization
level of a localized attribute as the difficulty of separating the
corresponding subgraph from the rest of the nodes, which is quantified
by the total variation on graphs.  We then formulate a hypothesis
testing problem to decide whether a categorical attribute is localized
and propose two statistics: graph wavelet statistic and graph scan
statistic. Similarly to detecting transient changes in time-series
signals by using wavelet techniques, we design a graph wavelet
statistic based on a Haar-like graph wavelet basis. Since the graph
wavelet basis is preconstructed, the computational cost is linear with
the number of nodes. We also formulate a generalized likelihood test
and propose a graph scan statistic, which can be efficiently solved by
a standard graph-cut algorithm. The intuition behind the proposed
statistics is to find the underlying localized attribute in a graph,
which is equivalent to denoising the given attribute based on the
graph structure, and then calculating the statistic values based on
the denoised attribute. We demonstrate the effectiveness and
robustness of the proposed methods through validation on simulated and
real data.

We here consider only one attribute per test; when given multiple
attributes, we can rank their localization levels according to the
proposed statistics. Note that the proposed statistics does not
address correlation among attributes.

\mypar{Contributions} The main contributions of the paper are:
\begin{itemize}
\item A novel hypothesis testing framework for \emph{detecting localized
  attributes} corrupted by Bernoulli noise;
\item A novel, effective, and scalable \emph{graph wavelet statistic} for
  detecting localized attributes with analysis of detection error;
\item A novel, effective, and scalable \emph{graph scan statistic} for
  detecting localized attributes with analysis of detection error; and
\item Validation on both \emph{simulated and real datasets} with
  applications to detection of high air pollution and ranking keywords
  in a co-authorship network.
\end{itemize}

\mypar{Paper outline} Section~\ref{sec:formulation} formulates
the problem; Section~\ref{sec:relatedwork} reviews related work;
Section~\ref{sec:methodology} proposes graph wavelet and scan
statistics; Section~\ref{sec:experiments} validates the proposed
methods on simulated and real data; and Section~\ref{sec:conclusions}
concludes the paper.

\section{Problem Formulation}
\label{sec:formulation}

We consider a weighted, undirected graph $G = (\V, \E, \Adj)$, with
$\V = \{v_1,\ldots, v_{N}\}$ the set of nodes, $\E = \{e_1,\ldots,
e_{M}\}$ the set of edges and $\Adj \in \R^{N \times N}$ a weighted
adjacency matrix.  A~\emph{graph signal} is defined as the map that
assigns the signal coefficient $s_n \in \R$ to the graph node $v_n$;
it can be written as a vector $ \s \ = \ \begin{bmatrix} s_1 & \ldots
  & s_{N}\end{bmatrix}^T \in \R^N$.  The edge weight $\Adj_{i,j}$
between nodes $v_i$ and $v_j$ quantifies the underlying relation
between the $i$th and the $j$th signal coefficients, such as a
similarity, a dependency, or a communication pattern. In this paper,
all graph signals are binary ($\s \in \{0, 1\}^N$) and represent attributes; we thus use the
word \emph{attribute} instead of \emph{binary graph signal} in what
follows.

Let $\Delta \in \R^{|\E| \times |\V|}$ be the~\emph{graph incidence
  matrix}, whose rows correspond to
edges~\cite{SharpnackRS:12,WangSST:15}; for example, if $e_i$ is the
edge that connects the $j$th node to the $k$th node ($j < k$), the
elements of the $i$th row of $\Delta$ are
\begin{equation*}
  \label{eq:Delta}
  \Delta_{i, \ell} = 
  \left\{ 
    \begin{array}{rl}
      \Adj_{j,k}, & \ell = j;\\
      -\Adj_{j,k}, & \ell = k;\\
      0, & \mbox{otherwise}.
    \end{array} \right.
\end{equation*}

An~\emph{activated} node set $C \subseteq \V$ is denoted by its
indicator function (attribute) ${\bf 1}_{C} \in \{0, 1\}^N$,
\begin{equation*}
\left(  {\bf 1}_{C}  \right)_i = 
  \left\{ 
    \begin{array}{rl}
      1, & v_i \in C;\\
      0, & \mbox{otherwise}.
  \end{array} \right.
\end{equation*}
When $C$ forms a connected subgraph, we call $C$ a~\emph{local set}.

We consider the localization level of the attribute $\one_C$ as the
difficulty of separating $C$ from $\overline{C} = \V \backslash C$,
and use the $\ell_p$-norm-based total variation to quantify it,
\begin{equation}
  \label{eq:TV1}
  \TV_p ( \one_C ) = \left\| \Delta \one_C \right\|_p.
\end{equation} 
While $\TV_0 ( {\bf 1}_{C} )$ counts the number of edges connecting
$C$ and $\overline{C}$, $\TV_1 ( {\bf 1}_{C} )$ takes edge weights
into account; when edges are unweighted, the two are the same. Total
variation builds a connection between an attribute and graph structure
and measures the localization level of an attribute on a specific
graph; that is, an attribute with smaller total variation is more
localized on a graph because it is easier to separate its activated
part $C$ from its nonactivated part $\overline{C}$.

The task of localized attribute detection is made harder when noise is
present.  Given a noisy attribute $\y \in \{0, 1\}^N$, the general
statistical testing formulation is:
\begin{eqnarray}
  \label{eq:testing}
  H_0^N & : &    \y \sim f( 0 , \epsilon ),
  \\
  \nonumber
  H_1^N & : &    \y \sim f( \s, \epsilon) \text{ with }  \s \in  \mathcal{S}_N,
\end{eqnarray}
where $N$ indicates that the observation is $N$-dimensional, with $N$
the number of nodes in the graph, and is independent in each
dimension, $\mathcal{S}_N$ is a predefined class of localized
attributes, $\epsilon > 0$ is the noise level and the link function
$f(\cdot, \cdot)$ specifies the noise model. For example, if a signal
is corrupted by Gaussian noise, $f( \s, \epsilon) = \s + \e$, where
$\e \sim \N(0, \epsilon \Id )$.

The null (default) hypothesis thus represents no particular
localization for the attribute and the alternative hypothesis
represents a localized attribute. The two key factors in
\eqref{eq:testing} are the noise model and the localized attribute, and
we can make independent assumptions on these two. For example, noise
can follow Gaussian or Bernoulli distribution and the localization
level can be described by small cut costs or cliques~\cite{Newman:10}.

Let the test be a mapping $T(\y) = \{0, 1\}$, where 1 indicates
rejecting the null hypothesis. It is imperative that we control both
the probability of false positives (incorrectly rejecting a true null
hypothesis, \emph{type-1 error}) and false negatives (incorrectly
retaining a false null hypothesis, \emph{type-2 error}).  We thus
  define the risk to be
\begin{equation*}
  R_N(T) = \underbrace{\mathbb{P} (T = 1 | H_0^N \text{ is true})}_{\text{type-1 error}} 
  +  \underbrace{\sup_{\s \in \mathcal{S}_N  } \mathbb{P} (T = 0 | H_1^N \text{ is true})}_{\text{type-2 error}},
\end{equation*}
where the class of localized attributes $\mathcal{S}_N$ is related to
the number of nodes $N$.
Using the definition from~\cite{CastroCD:11,SharpnackRS:13,
  SharpnackKS:13a}, we say that $H_0^N$ and $H_1^N$ are
\emph{asymptotically distinguishable} by a test $T$, if $\lim_{N
  \rightarrow \infty} R_N(T)=0$. In other words, when the number of
nodes goes to infinity and the detection risk goes to zero, $H_0^N$
and $H_1^N$ are asymptotically distinguishable.

\mypar{Bernoulli noise model} In this paper, we are particularly
interested in~\eqref{eq:testing} with the Bernoulli noise model,
 $ f(\s, \epsilon) = \text{Bernoulli}( \s + \epsilon \one_\V) \in
  \R^N, $ where each element $f(\s, \epsilon)_i$ is an independent
  Bernoulli random variable with mean $(\s + \epsilon)_i$,
\begin{eqnarray}
  \label{eq:BernoullieM}
  H_0^N & : &    \y \sim \text{Bernoulli}( \epsilon \one_\V),
  \\
  \nonumber
  H_1^N & : &    \y \sim \text{Bernoulli}( \mu \one_C + \epsilon \one_{\overline{C}} )  \text{ for all } \TV_p( \one_C) \leq \rho,
\end{eqnarray}
$\mu$ is the activation probability within the localized
attribute, $\epsilon$ is the noise level and $0 \leq \epsilon < \mu
\leq 1$. The number of external edges of a localized attribute,
$\rho$, reflects the shapes of candidate localized attributes and
characterizes the alternative hypothesis $H_1^N$. Here, the average
value under $H_1^N$ is larger than the average value under $H_0^N$. A
naive approach is to use the average of the observation as the
statistic (see Appendix~\ref{sec:Appendices4}). We set the class of
localized attributes $\mathcal{S}_N$ to model correlation among
nodes as
\begin{displaymath}
  \mathcal{S}_N \ = \ \bigg\{ \s: \s = (\mu - \epsilon) \one_C, C \in
  \mathcal{C} \bigg\},
\end{displaymath}
with the localized attributes that the user is testing for specified
through the class $ \mathcal{C} = \{ C \subseteq \V: \TV_p ( {\bf
  1}_{C} ) \leq \rho \}$, while $\rho, p$ control the cut cost of the
activated node set.  The cut cost $\rho$ is a user-defined parameter:
when $\rho$ is large, all candidate localized attributes are allowed
to have any number of external edges and the test always succeeds,
while when $\rho$ is small, all candidate localized attributes have
few external edges. Note that the Bernoulli model here is similar to
the setting in community detection with categorical
attributes~\cite{YangML:13, AkogluTK:2015}. For example, suppose that
we want to identify whether users who graduated from CMU form a social
community on Facebook.  The binary value \emph{Is this user from CMU?}
is an attribute on Facebook. When this attribute leads to a community,
we should find a subgraph such that (1) most nodes are activated
within the subgraph and few nodes are activated outside the subgraph;
(2) the connection between this subgraph and its complement is
weak. \emph{We describe a binary attribute by the Bernoulli noise
  model and a localized attribute by an attribute with small
  total-variation.}

\section{Related Work}
\label{sec:relatedwork}
In classical signal processing, a localized signal is constant over
local connected regions separated by lower-dimensional boundaries.  It
is often related to concepts such as impulse function, step function,
square wave and Haar basis~\cite{VetterliKG:12}. Detecting localized
signals has been considered through signal/noise
discrimination~\cite{North:63}, edge detection~\cite{Gonzalez:02},
pattern matching~\cite{MahalanobisKC:87} and support recovery of
sparse signals~\cite{HauptCN:09,CevherIHB:09}, among others. We here
look at the counterpart problem on graphs. A localized attribute
(graph signal) is constant over a subgraph that is easily separated
from the rest of the nodes. Similarly to localized signals in
classical signal processing, a localized attribute emphasizes fast
transitions (corresponding to boundaries) and localization in the
graph vertex domain (corresponding to attributes that are nonzero in a
local neighborhood).

Our detection problem bears resemblance to many detection problems in
the current graph-related literature, such as detecting a smooth graph
signal or a localized graph signal under a specific noise model. For
example,~\cite{CastroCD:11,ZouLP:2016} detects a cluster in a lattice
graph that exhibits unusual behavior;~\cite{HuCSFLL:13} constructs a
generalized likelihood test to detect smooth graph
signals;~\cite{Krishnamurthy:15} considers a general graph-structured
normal means test;~\cite{HeydariTP:15} considers combining data gathering and decision-making to design the quickest detection in the markov random field;~\cite{SharpnackKS:13}, constructs the uniform
spanning tree wavelet statistic to approximate the epsilon scan
statistic; and ~\cite{SharpnackRS:13, SharpnackKS:13a}, considers the
Lovasz extended scan statistic and spectral relaxation as relaxations
of the combinatorial scan statistic.

The uniform spanning tree wavelet statistic and the Lovasz extended
scan statistic lay a foundation for this paper; we extend the Gaussian
noise model to the Bernoulli one, that is, we deal with binary instead
of real values to address categorical attributes.  Although one could
do this by thresholding a real value, the process leads to information
loss. Thus, handling binary-valued attributes is a nontrivial task.

Our detection problem is also related to community detection, which,
as one of the key topics in network science and graph mining, aims to
extract tightly connected subgraphs in a network, also known as graph
partitioning and graph clustering~\cite{GirvanN:02, BoykovVZ:01,
  Luxburg:07}. While the traditional community detection algorithms
focus on the graph structure only~\cite{AhnBL:10,YangL:13}, some
recent studies tried to combine the knowledge of both graph structure
and node attributes~\cite{YangML:13} as such attributes not only
improve the accuracy of community detection, but also provide the
interpretation of detected communities. However, as not all attributes
are relevant for all communities, community detection accuracy may
suffer. It is also computational inefficient to include a large number
of attributes in the training phase~\cite{ZhangLZ:15}. We here aim to
find useful attributes for improving community detection and aiding
interpretation; as an example, in Section~\ref{sec:experiments}, the
proposed statistics select useful keywords in a co-authorship
network.

\section{Methodology}
\label{sec:methodology}
We now propose two statistics for testing the
  hypothesis in~\eqref{eq:BernoullieM}: graph wavelet statistic and
graph scan statistic. The first is based on a graph wavelet basis;
when a given attribute has large graph wavelet coefficients, the
attribute agrees with the graph structure and is localized.  The
second is based on matching all possible node sets to a given
attribute via an optimization problem; when we find such a
feasible node set, the attribute is localized. The first statistic is
more efficient as the graph wavelet basis is pre-constructed, while
the second is more accurate as it adaptively searches for localized
attributes.

\subsection{Graph Wavelet Statistic} In classical signal
  processing, one way of detecting a transient change in a time-series
  signal is by projecting it on the wavelet
  basis~\cite{VetterliKG:12}; when a high-frequency coefficient is
  large, a transient change is present.  Similarly, we detect a
  boundary in a localized attribute by projecting it on a Haar-like
  graph wavelet basis; when a large graph wavelet coefficient exists
  in the high-frequency wavelet matrix, the boundary is present, and
  we reject the null hypothesis.

 Similarly to how we construct the Haar basis in classical signal
  processing~\cite{VetterliK:95}, we construct the \emph{graph wavelet
    basis} as in~\cite{SharpnackKS:13,ChenVSK:16}. The idea is to
  recursively partition each local parent set into two disjoint local
  child sets of roughly similar sizes, irrespective of the connections
  between them. We start from the entire node set $\V$, corresponding
  to the coarsest resolution in the graph vertex domain, and finish
  with each local set being either an individual node or an empty set,
  corresponding to the finest resolution in the graph vertex domain as
  illustrated in Figure~\ref{fig:decomposition}. For each partition, a
  new basis vector $\w$ is added as in Algorithm~\ref{alg:wavelet}.
  The decomposition level, or the depth of a decomposition tree, is
  the maximum number of partitions to reach an individual node.  As
  the proposed decomposition provides a series of redundant local sets
  with various sizes at various positions, we can either exactly
  localize attributes or approximate them by using local sets.

\begin{algorithm}[htb]
  \footnotesize
  \caption{\label{alg:wavelet} Local-set-based graph wavelet basis}
  \begin{tabular}{@{}lll@{}}
    \addlinespace[1mm]
    {\bf Input} 
    & $G(\V, \E, \Adj )$~~~~~graph \\
    {\bf Output}  
    & $\W \in \R^{N \times N}$~~~wavelet basis \\
    \addlinespace[2mm]
    {\bf Function} & &\\
    \multicolumn{3}{l}{~initialize a stack of node sets $\mathbb{S}$ and a set of basis vectors $\W$} \\
    \multicolumn{3}{l}{~set $\mathbb{S} = \{S = \V\}$} \\ 
    \multicolumn{3}{l}{~set $\w_1 = \frac{1}{ \sqrt{|S|}} {\bf 1}_S$ as the first column of $\W$} \\
    \multicolumn{3}{l}{~while the cardinality of the largest element of $\mathbb{S}$ is larger than $1$ } \\ 
    \multicolumn{3}{l}{~~~~~take one element from $\mathbb{S}$ at a time as $S$}\\
    \multicolumn{3}{l}{~~~~~partition $S$ into two disjoint local sets $S_1, S_2$  by 2-means clustering~\cite{ChenVSK:16}} \\
    \multicolumn{3}{l}{~~~~~if $|S_1|$ and/or $|S_2|$ is larger than 1,  put that local set(s) into $\mathbb{S}$} \\
    \addlinespace[0.5mm]
    \multicolumn{3}{l}{~~~~~add $\w = \sqrt{ \frac{|S_1|  |S_2|}{|S_1| +  |S_2|}} \left(  \frac{1}{|S_1|} {\bf 1}_{S_1} - \frac{1}{|S_2|} {\bf 1}_{S_2}\right)$ as a new column of $\W$}\\
    \addlinespace[0.5mm]
    \multicolumn{3}{l}{~~~{\bf return} $\W$} \\  
    \addlinespace[1mm]
  \end{tabular}
\end{algorithm}



To ensure the detection property of the graph wavelet basis, we impose
three requirements on each partition: (1) the two local child sets are
disjoint; (2) the union of the two local child sets is the local
parent set; and (3) the cardinalities of the two local child sets are
as close as possible. The first two requirements lead to the
orthogonality of the graph wavelet basis while the third promotes the
sparsity for all attributes with small $\ell_0$-norm-based total
variation. In general, any algorithm that satisfies these three
requirements can be used to generate a graph wavelet basis;~\cite{ChenVSK:16} introduces three such algorithms. Here we use
  2-means clustering, which approximately partitions a local set
  evenly.  Inspired by $K$-means clustering~\cite{Bishop:06}, for each
  local set, we select two nodes with the longest geodesic distance from each other as the cluster heads and assign every other
  node to its nearest cluster head based on the geodesic distance. We
  then recompute the cluster head for each cluster by minimizing the
  geodesic distance sum to all other nodes in the cluster and assign
  node to its nearest cluster head again, until convergence.

\begin{figure}[t]
  \begin{center}
     \includegraphics[width= 0.8\columnwidth]{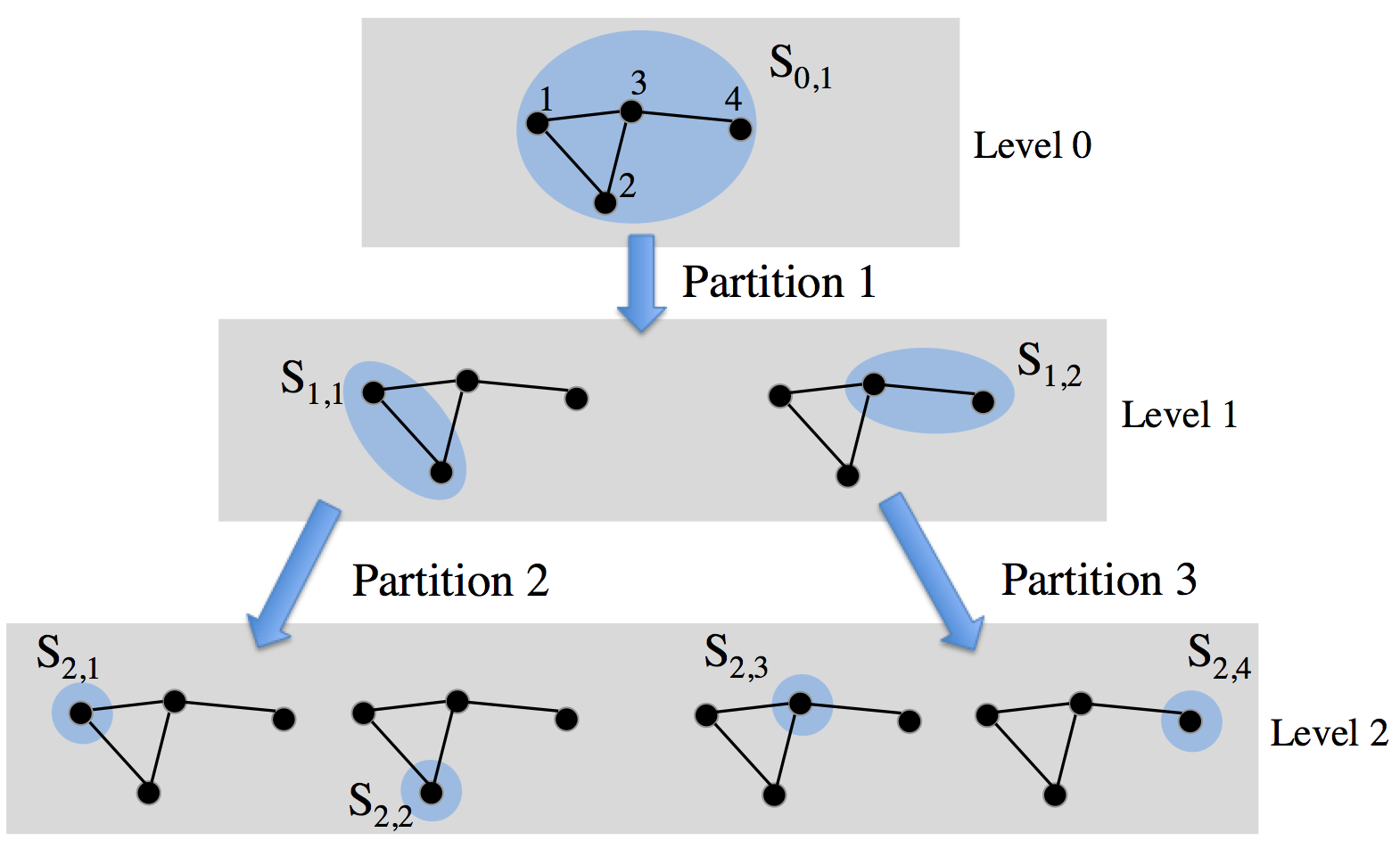}
  \end{center}
  \caption{\label{fig:decomposition} Wavelet decomposition tree. In
    each partition, we decompose a local set into two disjoint local
    sets and generate a wavelet basis vector. For example, in
    Partition 1, we partition $S_{0,1} = \{1, 2, 3, 4 \}$ into
    $S_{1,1} = \{1, 2 \}, S_{1,2} = \{3, 4 \}$ and generate a wavelet
    basis vector $[ 1~1~-1~-1 ]/2$. Note that the decomposition is not
    unique. }
\vspace{-3mm}
\end{figure}

 The output of Algorithm~\ref{alg:wavelet} is the graph wavelet
  basis $\W \in \R^{N \times N}$. It is orthonormal and
  preserves the energy of any input.
\begin{myLem}{\bf (Orthogonality~\cite{ChenVSK:16})}
  \label{lem:orthonormal}
  Let $\W$ be the output of Algorithm~\ref{alg:wavelet}.  $\W$ is an
  orthonormal basis; that is,
  \begin{equation*}
    \W^T \W  \ = \ \W \W^T \ =  \ \Id.
  \end{equation*}
\end{myLem}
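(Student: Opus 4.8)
The plan is to show the two identities $\W^T\W = \Id$ and $\W\W^T = \Id$ by first establishing that the columns of $\W$ form an orthonormal set, and then observing that an orthonormal set of $N$ vectors in $\R^N$ is automatically a basis, so the second identity follows from the first. I would begin by indexing the columns of $\W$ by the nodes of the decomposition tree produced by Algorithm~\ref{alg:wavelet}: column $\w_1 = |S|^{-1/2}\one_S$ with $S = \V$ is the root (scaling) vector, and every other column $\w$ is associated with a partition of some local parent set $S$ into $S_1, S_2$, with $\w = \sqrt{|S_1||S_2|/(|S_1|+|S_2|)}\,(|S_1|^{-1}\one_{S_1} - |S_2|^{-1}\one_{S_2})$. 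First I would check normalization: since $\one_{S_1}$ and $\one_{S_2}$ have disjoint supports, $\|\w\|_2^2 = \frac{|S_1||S_2|}{|S_1|+|S_2|}\big(|S_1|^{-2}|S_1| + |S_2|^{-2}|S_2|\big) = \frac{|S_1||S_2|}{|S_1|+|S_2|}\big(|S_1|^{-1}+|S_2|^{-1}\big) = 1$, and similarly $\|\w_1\|_2^2 = |S|^{-1}|S| = 1$.

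Next I would prove orthogonality of any two distinct columns $\w, \w'$ by a case analysis on the relative position of their associated tree nodes. If the two parent sets are disjoint (neither is an ancestor of the other, and likewise for the root case), then all four of $\one_{S_1},\one_{S_2},\one_{S_1'},\one_{S_2'}$ have pairwise disjoint supports among the two groups, so $\w^T\w' = 0$ trivially; the same holds when comparing $\w_1$'s support $\V$ is not the issue since every non-root node sits inside $\V$, so the only genuine case is nesting. Suppose then that the parent set $S'$ of $\w'$ is a (strict) descendant in the tree of one of the child sets, say $S_1$, of $\w$ — so $S' \subseteq S_1$ and $S'$ is disjoint from $S_2$. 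On the support $S_2$, $\w'$ is identically zero, while on $S_1$ the vector $\w$ is the constant $c := \sqrt{|S_1||S_2|/(|S_1|+|S_2|)}\,/|S_1|$. Hence $\w^T\w' = c\,\one_{S_1}^T\w' = c\,\one_\V^T\w'$, because $\w'$ is supported inside $S_1$. But $\one_\V^T\w' = \sqrt{|S_1'||S_2'|/(|S_1'|+|S_2'|)}\big(|S_1'|^{-1}|S_1'| - |S_2'|^{-1}|S_2'|\big) = 0$ for every non-root $\w'$, and this same computation shows $\w_1^T\w' = |S|^{-1/2}\one_\V^T\w' = 0$ as well. Thus $\w^T\w' = 0$ in every case, proving $\W^T\W = \Id$.

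Finally, since $\W$ has exactly $N$ columns (Algorithm~\ref{alg:wavelet} adds one scaling vector plus one wavelet vector per partition, and a binary tree whose leaves are the $N$ singletons has exactly $N-1$ internal partitions, giving $1 + (N-1) = N$ vectors), the identity $\W^T\W = \Id$ says the columns are $N$ orthonormal vectors in $\R^N$, hence a basis, so $\W$ is an orthogonal matrix and therefore $\W\W^T = \Id$ as well. The only step requiring genuine care is the nesting case of the orthogonality argument; the key observation that unlocks it is the zero-mean property $\one_\V^T\w = 0$ of every wavelet vector, which I would isolate as the crux. I would also remark that the counting argument for the number of columns quietly uses requirements (1) and (2) on the partition (disjointness and covering), which guarantee the decomposition tree is a genuine binary partition tree of $\V$.
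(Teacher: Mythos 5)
Your proof is correct. Note that the paper itself does not prove this lemma at all --- it imports it by citation from~\cite{ChenVSK:16} --- so your self-contained argument (unit norm of each column, the disjoint-versus-nested case analysis resolved by the zero-mean property $\one_\V^T\w = 0$ of every non-root wavelet vector, and the count of $1+(N-1)=N$ columns to upgrade $\W^T\W=\Id$ to $\W\W^T=\Id$) is exactly the standard argument one would expect in the cited reference, and it correctly identifies where the algorithm's requirements of disjointness and covering are used.
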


 We use this graph wavelet basis with balanced splits to
  construct a sparse representation for localized attributes (which
  are not communities in general).  The upper bound in
  Lemma~\ref{lem:sparse} below establishes the worst-case scenario,
  because the graph wavelet basis once constructed is fixed and thus
  should work for any attribute. The decomposition level $L$ is
  the bottleneck in a sparse representation and is minimized when we
  partition each local set evenly, leading to balanced splits.
\begin{myLem}{\bf (Sparsity~\cite{ChenVSK:16})}
  \label{lem:sparse}
  Let $\W$ be the output of Algorithm~\ref{alg:wavelet} and $L$ be the
  total decomposition level.
  For all $\y \in \R^N$,
  \begin{eqnarray*}
    \left\| \W^T \y \right\|_0 \leq  1 + \left\|  \Delta \y \right\|_0  L.
  \end{eqnarray*}
\end{myLem}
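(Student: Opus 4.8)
The plan is to count the nonzero entries of $\W^T\y$ by charging each one either to the scaling vector $\w_1$ or to an edge on which $\y$ changes value, exploiting the nested structure of the decomposition tree produced by Algorithm~\ref{alg:wavelet}. Recall that, apart from the first column $\w_1 = N^{-1/2}\one_\V$, every column $\w$ of $\W$ is attached to a single split of some local set $S$ (a node of the decomposition tree) into two nonempty disjoint local sets $S_1, S_2$, and has the form $\w = \sqrt{|S_1||S_2|/(|S_1|+|S_2|)}\,(|S_1|^{-1}\one_{S_1} - |S_2|^{-1}\one_{S_2})$. Hence $\w^T\y$ is a strictly positive multiple of (average of $\y$ over $S_1$) minus (average of $\y$ over $S_2$); in particular $\w^T\y = 0$ whenever $\y$ is constant on $S$. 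Therefore $\|\W^T\y\|_0$ is at most $1$ plus the number of internal tree nodes $S$ on which $\y$ is not constant, the leading $1$ absorbing the possibly-nonzero coefficient $\w_1^T\y$.

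The remaining step I would carry out is to bound the number of these ``bad'' internal nodes by $\|\Delta\y\|_0\, L$. Write $R = \|\Delta\y\|_0$, the number of edges $(j,k)$ with $y_j \neq y_k$ (using $\Adj_{j,k}>0$ on edges). If $S$ is a local set, hence a connected subgraph, and $\y$ is non-constant on $S$, then following a path in $S$ between two nodes with different values exhibits an edge of $S$ whose endpoints disagree; so each bad internal node $S$ contains both endpoints of at least one such edge, and we may select one, $e(S)$. Now fix a disagreeing edge $e=(j,k)$ and count the internal nodes $S$ with $j,k\in S$: these are exactly the common ancestors of the leaves $\{j\}$ and $\{k\}$ in the tree, they form a chain by inclusion (they all lie on the root-to-$\{j\}$ path), and they do not include the leaf $\{j\}$ itself, so there are at most $L$ of them, where $L$ is the tree depth. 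Thus $S\mapsto e(S)$ maps the bad internal nodes into the $R$-element set of disagreeing edges with all fibers of size at most $L$, giving at most $RL$ bad internal nodes and hence $\|\W^T\y\|_0 \leq 1 + RL = 1 + \|\Delta\y\|_0\, L$.

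The main obstacle, and the only place where care is needed, is the connectivity used in the path argument: concluding ``non-constant on $S$ $\Rightarrow$ $S$ contains a disagreeing edge'' requires each split to yield genuinely connected local sets, which is precisely the standing requirement on Algorithm~\ref{alg:wavelet}; if $G$ itself is disconnected one simply applies the bound componentwise. The remaining items are routine bookkeeping: the decomposition tree has exactly $N$ leaves, so $\W$ indeed has $N$ columns; the phrase ``decomposition level $L$'' must be read as ``every root-to-leaf path carries at most $L$ internal nodes,'' which is what pins down the ``at most $L$'' count above; and the degenerate case $\Delta\y=0$ is consistent, since then every split-coefficient vanishes and the bound correctly reduces to $\|\W^T\y\|_0 \leq 1$.
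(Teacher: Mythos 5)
The paper itself does not prove this lemma; it is imported from~\cite{ChenVSK:16} without proof. Your argument is correct and is essentially the standard one behind that cited result: a split coefficient can be nonzero only if $\y$ is non-constant on the parent local set, connectivity of local sets lets you charge each such set to a disagreeing edge (a nonzero row of $\Delta \y$), and each edge is charged at most once per level of the chain of common ancestors, giving $1+\|\Delta\y\|_0 L$. One small quibble: your closing remark that a disconnected $G$ is handled ``componentwise'' would actually replace the additive constant $1$ by the number of connected components (constant-per-component signals can have $\|\Delta\y\|_0=0$ yet a nonzero top-level split coefficient), so the lemma as stated does require $\V$ and all local sets to be connected --- which, as you note, is exactly the standing assumption on Algorithm~\ref{alg:wavelet}.
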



Combining Lemmas~\ref{lem:orthonormal} and~\ref{lem:sparse}, we see
that the graph wavelet representation concentrates the energy of an attribute
into a few wavelet coefficients; it is thus indeed a sparse representation,
\begin{eqnarray*}
  \left\| \W_{(-1)}^T \y \right\|_{\infty}^2 
  \ \stackrel{(a)}{\geq} \
  \frac{ \left\| \W_{(-1)}^T \y \right\|_{2}^2  }{ \left\| \W_{(-1)}^T \y \right\|_{0}  } 
  \ \stackrel{(b)}{\geq} \
  \frac{ \left\| \y \right\|_{2}^2 - \left( \frac{1}{\sqrt{N}} \one_{\V}^T \y \right)^2  }{ 1 + \left\|  \Delta \y \right\|_0  L  },
\end{eqnarray*}
where $\W_{(-1)} \in \R^{ N \times (N-1)}$ is $\W$ without its first
(constant) column as that column only calculates the mean of $\y$,
which is not informative for detection; we call it~\emph{high-frequency wavelet matrix}. The inequality $(a)$ follows
from the basic norm inequality, and $(b)$ from
Lemmas~\ref{lem:orthonormal}
and~\ref{lem:sparse}. Theorem~\ref{thm:wavelet_statistic} will show
that the largest nontrivial wavelet coefficient is an important metric
in distinguishing whether $\y$ is a localized attribute or not.  We
see that the lower bound on that largest nontrivial wavelet
coefficient is related to the total decomposition level.  To lift the
largest nontrivial wavelet coefficient up, we need to minimize the
total decomposition level $L$, which is satisfied with even local set
partition, with $L = O(\log_2 N)$. For a localized attribute with a
small $\ell_0$-norm-based total variation, the corresponding graph
wavelet coefficients are sparse and the energy of the original
attribute concentrates in a few graph wavelet coefficients. However,
for a noisy attribute with a large $\ell_0$-norm-based total
variation, the energy of the original attribute spreads over all graph
wavelet coefficients.

The projection of an attribute on a graph wavelet basis vector
  calculates the absolute difference between its average values on two
  local node sets (see Algorithm~\ref{alg:wavelet}); for example, the
  projection on a basis vector stemming from the partition into $S_1$
  and $S_2$ is $\sqrt{ \frac{|S_1| |S_2|}{|S_1| + |S_2|}} \y^T \left(
    {\bf 1}_{S_1}/{|S_1|} - {\bf 1}_{S_2}/{|S_2|} \right)$. When one
local set captures significantly larger average value than the other
local set, that local set detects a localized attribute. Because of
the multiresolution construction, the graph wavelet basis searches for
localized attributes of different sizes. Thus, the maximum value of
the graph wavelet coefficient identifies whether the original
attribute contains a localized attribute.

We thus define the~\emph{graph wavelet statistic} as the maximum absolute value over the high-frequency wavelet coefficients,
\begin{equation}
\label{eq:gws}
\widehat{w} = \left\|  \W_{(-1)}^T \y \right\|_\infty,
\end{equation}
where  $\y$ is the noisy observation.
When $\widehat{w}$ is larger than a threshold, we reject the null hypothesis. 

 To analyze the graph wavelet statistic,
  Lemma~\ref{lem:wavelet_statistic} shows that given a threshold
  related to the graph size $N$ and a user-defined error tolerance
  $\delta$, the type-1 error is upper bounded,
  Theorem~\ref{thm:wavelet_statistic} shows that when the attribute strength
  is sufficiently large, both type-1 and type-2 errors can be upper
  bounded and Corollary~\ref{cor:wavelet_statistic} states the
  condition for the asymptotic distinguishability.
\begin{myLem}
    \label{lem:wavelet_statistic}
    Let the graph wavelet statistic be $\widehat{w}$
    in~\eqref{eq:gws}.  Under the statistical
    test~\eqref{eq:BernoullieM} with $p = 0$, we reject the null
    hypothesis for all $\widehat{w} > \tau$, with threshold $\tau = \sqrt{\log
      N} + \sqrt{ 2 \log ({2}/{\delta})}$. The corresponding type-1
    error is $\mathbb{P} \{ T = 1 | H_0^N \text{ is true} \} \leq
    \delta$.
\end{myLem}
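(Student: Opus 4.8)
The plan is to bound the type-1 error by a standard concentration-plus-union-bound argument, exploiting that the basis $\W$ is built from the graph alone (Algorithm~\ref{alg:wavelet}) and is therefore deterministic, so that under $H_0^N$ the statistic $\widehat{w}=\|\W_{(-1)}^T\y\|_\infty$ is the maximum of $N-1$ fixed linear functionals of the random vector $\y$, whose coordinates $y_i$ are independent $\mathrm{Bernoulli}(\epsilon)$.

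First I would record the two structural facts about the columns $\w_2,\dots,\w_N$ of $\W_{(-1)}$, both immediate from Lemma~\ref{lem:orthonormal}: each has unit $\ell_2$-norm, and each is orthogonal to the first (constant) column $\frac{1}{\sqrt N}\one_\V$, i.e.\ $\one_\V^T\w_j=0$. The latter is what makes every nontrivial wavelet coefficient centered under $H_0^N$: since $\mathbb{E}[y_i]=\epsilon$ for all $i$, we get $\mathbb{E}[\w_j^T\y]=\epsilon\,\one_\V^T\w_j=0$, and $\w_j^T\y=\sum_i (\w_j)_i(y_i-\epsilon)$ is a sum of independent, zero-mean random variables, the $i$-th of which lies in an interval of length $|(\w_j)_i|$.

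Next I would apply Hoeffding's inequality to this sum. Because $\sum_i(\w_j)_i^2=\|\w_j\|_2^2=1$, the two-sided tail bound is $\mathbb{P}(|\w_j^T\y|>t)\le 2\exp(-2t^2)$ for every $t>0$, uniformly over $j$. A union bound over the $N-1$ columns of $\W_{(-1)}$ then gives $\mathbb{P}(\widehat{w}>\tau)\le 2(N-1)\exp(-2\tau^2)\le 2N\exp(-2\tau^2)$. Substituting the threshold $\tau=\sqrt{\log N}+\sqrt{2\log(2/\delta)}$ and using $(\sqrt a+\sqrt b)^2\ge a+b$, we have $\tau^2\ge\log N+2\log(2/\delta)$, whence $2N\exp(-2\tau^2)\le 2N\cdot N^{-2}(2/\delta)^{-4}=\delta^4/(8N)\le\delta$ for all $N\ge 1$ and $\delta\le 1$. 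Since $\{T=1\}=\{\widehat{w}>\tau\}$, this is exactly the claimed bound $\mathbb{P}\{T=1\mid H_0^N\text{ is true}\}\le\delta$. (The stated $\tau$ is in fact conservative — already $2\tau^2\ge\log N+\log(2/\delta)$ suffices — presumably chosen to keep a convenient form for Theorem~\ref{thm:wavelet_statistic}.)

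I do not expect a genuine obstacle here: the one point that needs care is step two, observing that discarding the constant column $\w_1$ is precisely what removes any mean drift under $H_0^N$, so that Hoeffding can be invoked with no residual bias term and with the clean variance proxy coming from orthonormality; after that the argument is a one-line tail estimate and elementary algebra. The more substantive work — handling the type-2 error and the interplay between $\mu$, $\rho$, and the decomposition level $L$ — is deferred to Theorem~\ref{thm:wavelet_statistic}.
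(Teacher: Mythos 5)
Your proof is correct, and it takes a genuinely different route from the paper's. The paper argues coefficient-by-coefficient that $\w_i^T\y$ is a $\tfrac{\sqrt2}{2}$-sub-Gaussian variable by writing it as a difference of the two child-set averages (using the even-partition identity $\sqrt{|S_1||S_2|/(|S_1|+|S_2|)}=\sqrt{|S|}/2$ and Hoeffding on each average), then bounds $\Ep\bigl(\max_{2\le i\le N}\w_i^T\y\bigr)\le\sqrt{\log N}$ via the standard sub-Gaussian maximal inequality, and finally invokes the Cirelson--Ibragimov--Sudakov concentration inequality to push the maximum above its expectation by at most $\sqrt{\log(1/\delta)}$ with probability $\delta$; that two-step ``expectation of the max plus deviation'' structure is what produces the additive form $\sqrt{\log N}+\sqrt{\,\cdot\,}$ of the threshold. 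You instead treat each coefficient directly as $\sum_i(\w_j)_i(y_i-\epsilon)$, a weighted sum of independent centered Bernoullis with unit-norm weights (centering justified by $\one_\V^T\w_j=0$, both facts from Lemma~\ref{lem:orthonormal}), apply Hoeffding once per column, and finish with a union bound over the $N-1$ columns plus elementary algebra showing the stated $\tau$ makes the failure probability at most $\delta$. Your route buys several things: it is fully elementary (no appeal to a Gaussian-process concentration result for what is only a sub-Gaussian maximum), it does not rely on the balanced-split simplification, it handles the two-sided nature of $\|\cdot\|_\infty$ explicitly via the factor $2$ in the tail bound, and it verifies the lemma exactly as stated — indeed with room to spare — whereas the paper's displayed bound ends with the slightly mismatched constant $\sqrt{\log(1/\delta_1)}$ rather than $\sqrt{2\log(2/\delta)}$. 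What the paper's approach buys is the expectation-plus-deviation decomposition that it reuses directly when balancing against the alternative hypothesis in Theorem~\ref{thm:wavelet_statistic}; your union-bound tail would serve equally well there, so nothing downstream is lost.
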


\begin{myThm}
  \label{thm:wavelet_statistic}
  Let the attribute strength be sufficiently large, 
  \begin{eqnarray}
    \label{eq:wavelet_condition}
    && \sqrt{ |C| \left(1- \frac{|C|}{N} \right) }  \left( \mu - \epsilon \right) \  \geq  \
    \\
    && \sqrt{1+ \rho \log N } \Bigg( \sqrt{ \log N } +  \sqrt{ 2 \log(\frac{2}{\delta_1}) } + \sqrt{ 2 \log(\frac{2}{\delta_2})  }  \Bigg). \nonumber 
  \end{eqnarray}
  Then, by using the graph wavelet statistic $\widehat{w}$
  in Lemma~\ref{lem:wavelet_statistic}, the type-1 error is $\mathbb{P} ( T = 1 | H_0^N
  \text{ is true} ) \leq \delta_1$ and the type-2 error is $\mathbb{P}
  ( T = 0 | H_1^N \text{ is true} ) \leq 1-(1-\delta_2)^4$.
\end{myThm}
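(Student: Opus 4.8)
The plan is to reduce the statement to three ingredients: the type-1 bound from Lemma~\ref{lem:wavelet_statistic}, a lower bound on the graph wavelet statistic $\widehat{w}$ under $H_1^N$, and a concentration argument controlling the fluctuation of $\widehat{w}$ around its mean. First I would fix the threshold $\tau = \sqrt{\log N} + \sqrt{2\log(2/\delta_1)}$ exactly as in Lemma~\ref{lem:wavelet_statistic}, which immediately gives the type-1 error bound $\mathbb{P}(T=1\mid H_0^N)\le \delta_1$; nothing more is needed there. The real work is the type-2 bound: I must show that when the attribute strength condition~\eqref{eq:wavelet_condition} holds, $\widehat{w} > \tau$ with probability at least $(1-\delta_2)^4$.

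For the type-2 direction, the key observation is that $\widehat{w} = \|\W_{(-1)}^T\y\|_\infty$ is at least the absolute value of the single wavelet coefficient associated with the partition that first separates $C$ from $\overline{C}$ in the decomposition tree — or, more carefully, the coefficient of whichever basis vector best aligns with $\one_C$. Because the decomposition is by balanced 2-means splits, the local set $C$ (or a set close to it) appears after at most $O(\log N)$ levels, and by Lemma~\ref{lem:sparse} the representation of $\one_C$ is $(1+\rho\log N)$-sparse when $\TV_0(\one_C)\le\rho$; this is the source of the $\sqrt{1+\rho\log N}$ factor. I would pick the basis vector $\w$ maximizing $|\w^T\one_C|$ and write $\w^T\y = \w^T\Ep[\y] + \w^T(\y - \Ep[\y])$. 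Under $H_1^N$, $\Ep[\y] = \mu\one_C + \epsilon\one_{\overline{C}}$, so $\w^T\Ep[\y] = (\mu-\epsilon)\,\w^T\one_C$ (the constant part is annihilated since $\w\perp\one_\V$), and using the sparse-energy inequality displayed just before the theorem together with $\|\one_C\|_2^2 - (\one_\V^T\one_C)^2/N = |C|(1-|C|/N)$, I get the deterministic lower bound $|\w^T\one_C|^2 \ge |C|(1-|C|/N)/(1+\rho\log N)$, i.e. the left-hand side of~\eqref{eq:wavelet_condition} divided by $\sqrt{1+\rho\log N}$ is a lower bound on the signal part $|\w^T\Ep[\y]|$.

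Next I would bound the noise part $|\w^T(\y-\Ep[\y])|$. Each coordinate of $\y - \Ep[\y]$ is a centered Bernoulli variable, bounded in $[-1,1]$, and the $\w$ are unit-norm; a Hoeffding / sub-Gaussian bound gives $\mathbb{P}(|\w^T(\y-\Ep[\y])| > \sqrt{2\log(2/\delta_2)}) \le \delta_2$ for the fixed $\w$, and similarly the term $\sqrt{\log N}$ absorbs a union bound over the (at most $N$) basis vectors exactly as in the proof of Lemma~\ref{lem:wavelet_statistic}. Combining: with probability at least $1-\delta_2$, $\widehat{w} \ge |\w^T\y| \ge |\w^T\Ep[\y]| - |\w^T(\y-\Ep[\y])| \ge \sqrt{1+\rho\log N}\big(\sqrt{\log N} + \sqrt{2\log(2/\delta_1)} + \sqrt{2\log(2/\delta_2)}\big)/\sqrt{1+\rho\log N} - \dots$; rearranging with~\eqref{eq:wavelet_condition} yields $\widehat{w} > \tau$. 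The exponent $4$ in $1-(1-\delta_2)^4$ presumably arises because the argument invokes four independent high-probability events (the concentration of $\widehat{w}$ plus the three tail terms, or the separate handling of the two local sets in the chosen partition and two noise blocks); I would track exactly which four events are union-bounded and take the product of their success probabilities.

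The main obstacle I anticipate is the geometric claim that the balanced-split decomposition tree actually contains a basis vector well-aligned with $\one_C$ when $\TV_0(\one_C)\le\rho$ — the 2-means partitioning does not respect cuts, so $C$ need not appear as an exact node of the tree. The resolution must go through Lemma~\ref{lem:sparse}: rather than arguing $C$ appears exactly, one argues that $\one_C$ has a $(1+\rho L)$-sparse expansion with $L=O(\log N)$, so by pigeonhole at least one coefficient has squared magnitude $\ge \|\one_C\|_2^2'/(1+\rho L)$ where the primed norm subtracts the mean; that coefficient's index selects the $\w$ I use above. Making this chain of inequalities line up with the precise constants in~\eqref{eq:wavelet_condition} (especially getting $\log N$ rather than $\log_2 N$ and the clean $\sqrt{1+\rho\log N}$) is the delicate bookkeeping step.
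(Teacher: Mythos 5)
Your proposal is correct and follows essentially the paper's own argument: the type-1 bound is exactly Lemma~\ref{lem:wavelet_statistic}, and the type-2 bound rests on the same energy-concentration step (the paper's Lemma~\ref{lem:concentration}, derived from Lemma~\ref{lem:sparse} just as you describe) applied to the basis vector maximizing $|\w^T \one_C|$, with the mean part $(\mu-\epsilon)\w^T\one_C$ isolated and the fluctuation controlled by Hoeffding. The only divergence is bookkeeping: the paper bounds the noise by applying Hoeffding separately to the four binomial pieces $|S^{(i)}_1\cap C|$, $|S^{(i)}_1\cap\overline{C}|$, $|S^{(i)}_2\cap C|$, $|S^{(i)}_2\cap\overline{C}|$ (the source of the exponent in $1-(1-\delta_2)^4$), whereas your single Hoeffding bound on $\w^T(\y-\Ep[\y])$ for the unit-norm $\w$ uses one event and gives type-2 error at most $\delta_2 \leq 1-(1-\delta_2)^4$, so it implies the stated bound.
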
 

We set $\delta_1 = \delta_2 = 1/N$ to obtain the following corollary.

\begin{myCorollary}
  \label{cor:wavelet_statistic}
  Using the graph wavelet statistic $\widehat{w}$ in~\eqref{eq:gws},
  $H_0^N$ and $H_1^N$ are asymptotically distinguishable, that is,
  $\lim_{N \rightarrow \infty} R_N(T)=0$, when
  \begin{displaymath} 
    \sqrt{ |C| \left(1- \frac{|C|}{N} \right) } \left( \mu - \epsilon
    \right) \ \geq \ O \left( \sqrt{\rho} \log N \right).
  \end{displaymath}
\end{myCorollary}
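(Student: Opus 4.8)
The plan is to obtain Corollary~\ref{cor:wavelet_statistic} directly from Theorem~\ref{thm:wavelet_statistic} by substituting $\delta_1 = \delta_2 = 1/N$ and checking that, under the stated attribute-strength condition, both pieces of the risk $R_N(T)$ vanish as $N\to\infty$. Note first that with $\delta = \delta_1 = 1/N$ the threshold of Lemma~\ref{lem:wavelet_statistic} becomes the concrete quantity $\tau = \sqrt{\log N} + \sqrt{2\log(2N)}$, so the test is fully specified; everything else is tracking asymptotics of the quantities already bounded in the theorem.

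First I would substitute $\delta_1=\delta_2=1/N$ into the sufficient condition~\eqref{eq:wavelet_condition}. Its right-hand side becomes $\sqrt{1+\rho\log N}\,\bigl(\sqrt{\log N} + 2\sqrt{2\log(2N)}\,\bigr)$, and I would bound the two factors separately. Since $\log(2N)=\log 2 + \log N$, the second factor equals $\sqrt{\log N}\,\bigl(1 + 2\sqrt{2(\log 2+\log N)/\log N}\,\bigr) = O(\sqrt{\log N})$ with a constant uniform in $N$ for $N\ge 2$; and in the meaningful regime $\rho\ge 1$ (a localized attribute has at least one external edge), $\rho\log N\ge 1$ for $N$ large, so $\sqrt{1+\rho\log N}\le 2\sqrt{\rho\log N} = O(\sqrt{\rho\log N})$. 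Multiplying gives that the right-hand side of~\eqref{eq:wavelet_condition} is $O\bigl(\sqrt{\rho\log N}\cdot\sqrt{\log N}\,\bigr) = O(\sqrt{\rho}\,\log N)$. Hence whenever $\sqrt{|C|(1-|C|/N)}\,(\mu-\epsilon)\ge O(\sqrt{\rho}\,\log N)$ with a sufficiently large hidden constant, condition~\eqref{eq:wavelet_condition} holds for all large $N$ and Theorem~\ref{thm:wavelet_statistic} applies.

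With the theorem in force I would then read off the risk. The type-1 error is at most $\delta_1 = 1/N \to 0$, and the type-2 error is at most $1-(1-\delta_2)^4 = 1-(1-1/N)^4$, which also tends to $0$ because $(1-1/N)^4\to 1$ (indeed it is $O(1/N)$). Therefore
\begin{displaymath}
  R_N(T) \ \le\ \frac{1}{N} + \left( 1 - \left( 1 - \frac{1}{N} \right)^{4} \right) \ \longrightarrow\ 0 \qquad (N \to \infty),
\end{displaymath}
which is precisely the asymptotic distinguishability claimed.

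The only step that requires care — and the main (if mild) obstacle — is the handling of the $\sqrt{1+\rho\log N}$ factor: one must either assume $\rho\ge 1$ so that the additive constant is absorbed into the $O(\cdot)$, or state the bound with $\rho$ understood to be bounded below. One should also confirm that the hidden constant in $O(\sqrt{\rho}\,\log N)$ is independent of $N$, which holds because $\sqrt{2}\,\bigl(\sqrt{\log N} + 2\sqrt{2\log(2N)}\,\bigr)/\sqrt{\log N}$ stays bounded for $N\ge 2$. Everything beyond this is a one-line limit.
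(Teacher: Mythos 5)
Your proposal is correct and matches the paper's own route: the paper proves the corollary simply by setting $\delta_1=\delta_2=1/N$ in Theorem~\ref{thm:wavelet_statistic}, absorbing the resulting threshold $\sqrt{1+\rho\log N}\bigl(\sqrt{\log N}+2\sqrt{2\log(2N)}\bigr)$ into $O(\sqrt{\rho}\log N)$, and noting that the type-1 error $1/N$ and type-2 error $1-(1-1/N)^4$ both vanish, which is exactly your argument (your explicit remark that $\rho$ should be bounded below, e.g.\ $\rho\ge 1$, is a mild point the paper leaves implicit).
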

 The proofs of Lemma~\ref{lem:wavelet_statistic} and
Theorem~\ref{thm:wavelet_statistic} are merged in
Appendix~\ref{sec:Appendices1}. The main idea is to show that under
the null hypothesis, each graph wavelet coefficient is a
  sub-Gaussian random variable whose distribution is similar to a
  Gaussian distribution~\cite{Ledoux:01}, while under the alternative
hypothesis, the maximum value of the graph wavelet coefficients is
large because the energy of the original attribute concentrates in a
few graph wavelet coefficients.

While asymptotic distinguishability in
  Corollary~\ref{cor:wavelet_statistic} cannot be evaluated in practice
  because $\mu$, $\epsilon$ and $C$ are unknown, it quantifies the
  fundamental detection performance of an algorithm and depends
  on $\mu, \epsilon$ and $C$. When the attribute strength is too weak,
  for example, $\mu = \epsilon$ or $C = 0$, it is impossible for any
  algorithm to achieve asymptotic distinguishability. In the
  detection literature, it is common to show that when a predefined
  signal strength is sufficiently large, the null and alternative
  hypotheses are asymptotically distinguishable.
  Theorem~\ref{thm:wavelet_statistic} and Corollary~\ref{cor:wavelet_statistic} follow the same path; see other
  similar examples in~\cite{SinghNC:10,CastroCD:11,
    SharpnackKS:13,SharpnackKS:13a, ZouLP:2016}.

  Theorem~\ref{thm:wavelet_statistic} relates the size of a localized
  attribute $|C|$, the activation probability difference $\mu -
  \epsilon$ and asymptotic distinguishability.  With a constant $|C|$,
  it is easier to detect a localized attribute when the activation
  probability difference $\mu - \epsilon$ is large; with constant
  $\mu, \epsilon$, it is easier to detect a larger localized attribute
  with a small cut cost $\rho$. When $\rho$ is large, all candidate
  localized attributes are allowed to have any number of external
  edges and a larger $|C|$ is required to increase the attribute
  strength, while when $\rho$ is small, all candidate localized
  attributes have few external edges.  When $|C|$ is fairly large;
  that is, $O(N) \gg |C| \gg O(1)$, $(1-|C|/N)\sqrt{ |C|} (\mu -
  \epsilon)$ asymptotically approximates $\sqrt{ |C|} (\mu -
  \epsilon)$; When $|C|$ is too close to $N$, the
  condition~\eqref{eq:wavelet_condition} fails because the observation
  $\y$ is close to an all-one vector and most of the energy is
  captured by the first column vector of the graph wavelet basis,
  causing a small $\widehat{w}$; however, in practice we typically
  consider $|C| \ll O(N)$, because localized attributes are relatively
  small compared to the entire graph.

Since the distribution of graph wavelet statistic does not have an
analytical form, it is hard to calculate the exact $p$-value. Instead,
we use sub-Gaussianity to provide an upper bound on the $p$-value.
Given the graph wavelet statistic $\widehat{w}$ in~\eqref{eq:gws}, the
upper bound on the $p$-value is $\exp{ \left(-(1/2) \left( \widehat{w}
      - \sqrt{\log N} \right)^2 \right)}$. Let our test be level
$\alpha$. When $\alpha$ is larger than the upper bound on the
$p$-value, $\alpha$ is definitely larger than the exact $p$-value. We
then reject the null hypothesis for all $\alpha \geq \exp{\left(-(1/2)
    \left( \widehat{w} - \sqrt{\log N} \right)^2 \right)}$. The
threshold is only related to the size of the graph $N$.

The computational bottleneck in constructing the graph wavelet basis
is the graph partition algorithm from
Figure~\ref{fig:decomposition}. Let the computational cost of the
graph partition algorithm be of the order $O( h(N) )$, where
$h(\cdot)$ is a polynomial function. The total computational cost to
construct a graph wavelet basis behaves as $O \left( \sum_{i=0}^{\log
    N} 2^i h(N/2^i) \right)$; for example, when the cost of a graph
partition algorithm is of the order $O(N \log N)$, the total
computational cost to construct a graph wavelet basis behaves as $O(N
\log^2 N)$. Since the graph wavelet basis is constructed based on the
graph structure only, the construction is performed only once and
works for any attribute supported on this graph. The total
computational cost to obtain the graph wavelet statistic only involves
a matrix-vector multiplication and a search for the maximum value. The
graph wavelet statistic is thus scalable to large-scale graphs.

\subsection{Graph Scan Statistic}
\label{sec:gss}
In the previous subsection, we constructed a graph wavelet
statistic to test whether a given attribute is localized. While the graph wavelet statistic is efficient, the
construction of the graph wavelet basis does not depend on the
attribute. We now propose a data-adaptive approach, which scans all
feasible node sets based on a given attribute. The intuition behind
the proposed statistics is that given the noisy observation $\y$, we
search for an activated node set $C$. If we can find such a $C$, we
reject the null hypothesis, and vice versa.

If we knew the true activated node set $C \in \mathcal{C}$, we could
test the null hypothesis $H_0^N: \s = 0$ against the alternative
$H_1^N: \s = \mu \one_C$ by using the likelihood ratio test.  Given
the observation $\y$ and the Bernoulli noise model, the
likelihood is
\begin{eqnarray*}
  \mathbb{P} (\y |H_1^N \text{ is true})
  \ = \ 
  \prod_{i \in C}  \mu^{y_i}  (1- \mu)^{1-y_i}  
  \prod_{i \in \overline{C} }  \epsilon^{y_i}  (1-\epsilon)^{1-y_i},
\end{eqnarray*}
and we estimate the unknown parameters as $\widehat{\mu} = \one_C^T \y
/ |C|$ and $\widehat{\epsilon} = \one^T \y /N$. The likelihood ratio
is
\begin{eqnarray*}
  &&
  \frac{  \prod_{i \in \V}  \widehat{\epsilon}^{y_i}  (1- \widehat{\epsilon})^{1-y_i}  } { \prod_{i \in C}  \widehat{\mu}^{y_i}  (1- \widehat{\mu})^{1-y_i}  \prod_{i \in \overline{C} }  \widehat{\epsilon}^{y_i}  (1-\widehat{\epsilon})^{1-y_i}  }
  \\
  & = &   \prod_{i \in C}  \left( \frac{\widehat{\epsilon}}{ \widehat{\mu} } \right)^{y_i}  \left( \frac{ 1- \widehat{\epsilon} } {1- \widehat{\mu}} \right)^{1-y_i} .
\end{eqnarray*}
The log likelihood ratio is
\begin{eqnarray*}
&&  \sum_{i \in C}  y_i  \log \left( \frac{\widehat{\epsilon}}{ \widehat{\mu} } \right) +  \sum_{i \in C}  (1-y_i)  \log  \left( \frac{ 1- \widehat{\epsilon} } {1- \widehat{\mu}} \right)
\\
& = & |C| \left(  \widehat{\mu} \log \left( \frac{\widehat{\epsilon}}{ \widehat{\mu} } \right) +  (1- \widehat{\mu}) \log \left( \frac{1-\widehat{\epsilon}}{1- \widehat{\mu} } \right) \right)
\\
& = & - |C| \KL( \widehat{\mu} \| \widehat{\epsilon} ),
\end{eqnarray*}
where $\KL(\cdot \| \cdot)$ is the  Kullback-Leibler divergence~\cite{Bishop:06}.

In practice, however, the true activated node set $C$ is unknown; we then
consider the generalized likelihood ratio
\begin{eqnarray}
  \label{eq:GSS}
  \widehat{g}  & = &  
  \max_C  |C| \KL \left( \frac{ \one_C^T \y} {|C| } \| \frac{ \one^T \y} {N} \right) \\ \nonumber
  & & \text{subject to }  \TV_1 (\one_C) \leq \rho.
\end{eqnarray}
We call $\widehat{g}$~\emph{graph scan statistic}. To maximize the
objective in~\eqref{eq:GSS}, the localized attribute $C$ should
trade-off between its size $|C|$ and the average value inside $\one_C^T
\y /|C|$. When $|C|$ is large, $\one_C^T \y /|C|$ tends to be small;
on the other hand, when $C$ fits the activated nodes in $\y$,
$\one_C^T \y /|C|$ is large; however, due to the cut cost constraint,
$C$ can only fit a few scattered nodes and $|C|$ is small. The goal of
the graph scan statistic is to search for a node set with both large
cardinality and large average value.

When $\widehat{g}$ is larger than some threshold, we detect an
activated node set and reject the null hypothesis. To analyze
  the graph scan statistic, Lemma~\ref{lem:graph_scan_statistic} shows
  that given a threshold related to the size of the graph $N$, a
  user-defined error tolerance $\delta$ and cut cost $\rho$, the
  type-1 error is upper bounded,
  Theorem~\ref{thm:graph_scan_statistic} shows that when the attribute
  strength is sufficiently large, both type-1 and type-2 errors can be
  upper bounded and Corollary~\ref{cor:graph_scan_statistic} states
  the condition of the asymptotic distinguishability.

 \begin{myLem}
    \label{lem:graph_scan_statistic}
    Let the graph scan statistic be $\widehat{g}$ in~\eqref{eq:GSS}.
    Under the statistical test~\eqref{eq:BernoullieM} with $p = 1$, we
    reject the null hypothesis for all $\widehat{g} > \tau$, with
    \begin{eqnarray*}
      \tau & = & 8 \Bigg( \bigg( \sqrt{\rho} + \sqrt{\frac{1}{2} \log N} \bigg) \sqrt{ 2 \log (N-1)}
      \\ 
      && + \sqrt{2 \log 2} + \sqrt{ \frac{9}{2} \log (\frac{2}{\delta})  } \Bigg)^2.
    \end{eqnarray*} 
    The corresponding type-1 error is $\mathbb{P} \{ T = 1 | H_0^N
    \text{ is true} \} \leq 1- (1-\delta)^2$.
\end{myLem}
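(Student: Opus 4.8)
The goal is to bound the type-1 error of the graph scan statistic, i.e., to show that under $H_0^N$, where $\y \sim \mathrm{Bernoulli}(\epsilon\one_\V)$, the probability that $\widehat{g} > \tau$ is small. The key difficulty compared to the wavelet statistic is that $\widehat{g}$ is a \emph{maximum over an uncountable-looking but combinatorially large} family of node sets $C$ with $\TV_1(\one_C)\le\rho$, and the objective is a nonlinear (KL-divergence) functional of $\y$ rather than a linear projection. So the plan is: (i) linearize/bound the KL objective so that $\widehat{g}$ is controlled by a supremum of a \emph{linear} statistic $\one_C^T(\y - \Ep\y)$ over the feasible class; (ii) use the total-variation constraint to control the complexity of that class (this is where the Lovász-extension / spanning-tree-wavelet machinery from \cite{SharpnackRS:13,SharpnackKS:13} enters); and (iii) apply a sub-Gaussian concentration bound plus a union bound over a suitable cover to get the $\delta$-level tail, finally unwinding the algebra to recognize the stated $\tau$.

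\textbf{Step 1: reduce the KL objective to a linear form.} Write $\widehat\epsilon = \one^T\y/N$ and $\widehat\mu_C = \one_C^T\y/|C|$. Under $H_0^N$ both $\widehat\epsilon$ and $\widehat\mu_C$ concentrate around $\epsilon$. Using a second-order (Pinsker-type / Taylor) bound on $\KL(\widehat\mu_C\|\widehat\epsilon)$ around the point $\widehat\epsilon$, one gets $|C|\,\KL(\widehat\mu_C\|\widehat\epsilon) \le c\,|C|(\widehat\mu_C - \widehat\epsilon)^2 / (\widehat\epsilon(1-\widehat\epsilon))$ or, more robustly for the tail, $|C|\,\KL \lesssim \big(\one_C^T(\y-\epsilon\one_\V)\big)^2/|C|$ up to the fluctuation of $\widehat\epsilon$. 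The upshot: $\sqrt{\widehat g}$ is bounded by a constant times $\sup_C \big|\one_C^T(\y-\epsilon\one_\V)\big|/\sqrt{|C|}$, plus a lower-order term coming from $|\widehat\epsilon - \epsilon|$ which is itself $O(\sqrt{\log N / N})$ with high probability (this is the source of the $\sqrt{\tfrac12\log N}$ summand inside $\tau$). The factor $8$ and the squaring in $\tau$ are exactly the constants one accumulates when converting a bound on $\sqrt{\widehat g}$ back to a bound on $\widehat g$.

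\textbf{Step 2: control the feasible class via total variation, then concentrate.} The remaining object, $\sup_{C:\TV_1(\one_C)\le\rho} \one_C^T\z$ with $\z = \y - \epsilon\one_\V$ a zero-mean bounded (hence sub-Gaussian, parameter $\le 1/2$) random vector, is precisely a \emph{graph scan} over cut-bounded sets. I would invoke the standard bound (as in \cite{SharpnackRS:13}, via the Lovász extension or the uniform-spanning-tree wavelet basis) that $\sup_{C:\TV_1(\one_C)\le\rho} \one_C^T\z \le \big(\sqrt{\rho} + \sqrt{\tfrac12\log N}\big)\sqrt{2\log(N-1)} + (\text{sub-Gaussian tail term})$. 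Concretely: embed each feasible $\one_C$ against a spanning-tree wavelet basis, note that the cut constraint $\TV_1(\one_C)\le\rho$ forces the wavelet representation of $\one_C$ to be sparse (an analogue of Lemma~\ref{lem:sparse}), bound the inner product by a sparse-vector $\times$ sub-Gaussian-coefficient inequality, and take a union bound over the $N-1$ wavelet coefficients — yielding the $\sqrt{2\log(N-1)}$ factor. The user-tolerance term $\sqrt{\tfrac92\log(2/\delta)}$ and the $\sqrt{2\log 2}$ come from the final sub-Gaussian deviation inequality applied to the already-reduced scalar statistic (with the $(1-\delta)^2$ form of the bound indicating two independent $\delta/2$-type events being combined — one for $\widehat\epsilon$, one for the scan supremum). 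Assembling Steps 1 and 2 gives $\mathbb{P}\{\widehat g > \tau\} \le 1-(1-\delta)^2$.

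\textbf{Main obstacle.} The delicate part is Step 1: the KL divergence is not globally bounded by a quadratic (it blows up near the boundary $\widehat\mu_C\in\{0,1\}$), so the reduction to a linear statistic must be done carefully on a high-probability event where $\widehat\epsilon$ and all the $\widehat\mu_C$ stay bounded away from $0$ and $1$ — and crucially this control must be \emph{uniform over the exponentially many feasible $C$}. I expect this is handled by again using the TV constraint to limit which $C$ matter (small $|C|$ sets are the binding ones, and for those the fluctuation of $\widehat\mu_C$ is itself controlled by the same scan bound), so that the quadratic approximation and the complexity bound are proved on a common good event. Getting the constants to land exactly on the stated $\tau$ is then bookkeeping; the conceptual content is the uniform linearization plus the TV-to-sparsity complexity reduction.
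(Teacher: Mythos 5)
Your proposal follows essentially the same route as the paper's proof: upper-bound the Bernoulli KL by a quadratic (factor $8$), so that $\widehat g$ is controlled by the scan of the linear sub-Gaussian statistic $\one_C^T(\y-\epsilon\one)/\sqrt{|C|}$ minus a $\sqrt{|C|/N}$-scaled global-mean term, bound that scan via the Lov\'asz-extension/spanning-tree result of Sharpnack et al.\ (which yields the $(\sqrt{\rho}+\sqrt{\tfrac12\log N})\sqrt{2\log(N-1)}+\sqrt{2\log 2}$ part), and intersect two probability-$(1-\delta)$ events (one for the scan supremum, one for the global mean) to obtain the $1-(1-\delta)^2$ type-1 bound. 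Two small points of comparison: the $\sqrt{\tfrac12\log N}$ summand in $\tau$ comes from that scan-complexity bound rather than from the fluctuation of $\widehat{\epsilon}$ (whose only effect is to enlarge the $\delta$-dependent term to $\sqrt{\tfrac92\log(2/\delta)}$), and the boundary behavior of the KL that you flag as the main obstacle is not treated by a careful uniform linearization in the paper --- the factor-$8$ quadratic bound is simply asserted there, so your concern is legitimate but does not reflect a step the paper actually carries out.
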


\begin{myThm}
  \label{thm:graph_scan_statistic}
  Let the attribute strength be sufficiently large,
  \begin{eqnarray}
    \label{eq:graph_scan_condition}
    &&  \left( 1-\frac{|C|}{N}  \right) \sqrt{|C|} (\mu - \epsilon)
    \ \geq  \  \nonumber
    \Bigg( 4 \sqrt{ \log 2} + 6 \sqrt{ \log (\frac{2}{\delta_1} ) } +
    \\  \nonumber
    && 4 \bigg( \sqrt{\rho} + \sqrt{\frac{1}{2} \log N} \bigg) \sqrt{ \log (N-1)}  +  
    \\ 
    &&
    \left(  \sqrt{ \frac{1}{2} } + \sqrt{\frac{|C|}{2N} } \right) \sqrt{ \log (\frac{2}{\delta_2}) }    \Bigg).
  \end{eqnarray}
  Then, by using the graph scan statistic $\widehat{g}$
  in Lemma~\ref{lem:graph_scan_statistic}, the type-1 error is $\mathbb{P} ( T = 1 | H_0^N
  \text{ is true} ) \leq 1-(1-\delta_1)^2$ and the type-2 error is
  $\mathbb{P} ( T = 0 | H_1^N \text{ is true} ) \leq 1-(1-\delta_2)^3$.
\end{myThm}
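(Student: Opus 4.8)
The plan is to bound the two error types separately, reusing the threshold~$\tau$ and the type-1 bound already established in Lemma~\ref{lem:graph_scan_statistic}. For the type-1 error, there is nothing new to do: under $H_0^N$ the probability that $\widehat{g}>\tau$ is at most $1-(1-\delta_1)^2$ by Lemma~\ref{lem:graph_scan_statistic} with $\delta=\delta_1$, so the whole effort goes into the type-2 error. Here the strategy is the standard one for generalized likelihood / scan statistics: under $H_1^N$ we do not need the maximizer of~\eqref{eq:GSS}, we only need one good feasible candidate, namely the true activated set $C$ itself (which is feasible since $\TV_1(\one_C)\le\rho$). Thus $\widehat{g}\ge |C|\,\KL\!\left(\one_C^T\y/|C|\,\|\,\one^T\y/N\right)$, and it suffices to show this particular quantity exceeds $\tau$ with probability at least $(1-\delta_2)^3$.

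First I would control the two empirical means appearing in the KL term. Under $H_1^N$, $\one_C^T\y/|C|$ is an average of $|C|$ independent Bernoulli$(\mu)$ variables, concentrating near $\mu$; $\one^T\y/N$ is an average of $N$ independent Bernoullis with overall mean $\epsilon + (|C|/N)(\mu-\epsilon)$, concentrating near that value. Two Hoeffding (or Bernstein) bounds, each holding with probability $\ge 1-\delta_2$, give $\one_C^T\y/|C|\ge \mu - \sqrt{\log(2/\delta_2)/(2|C|)}$ and $\one^T\y/N \le \epsilon + (|C|/N)(\mu-\epsilon) + \sqrt{\log(2/\delta_2)/(2N)}$; this is where the two factors $(1-|C|/N)$ and the $(\sqrt{1/2}+\sqrt{|C|/(2N)})\sqrt{\log(2/\delta_2)}$ term in~\eqref{eq:graph_scan_condition} come from. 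Next I would lower-bound $\KL(\hat p\|\hat q)$ in terms of $\hat p-\hat q$: using Pinsker's inequality $\KL(\hat p\|\hat q)\ge 2(\hat p-\hat q)^2$, so $|C|\,\KL(\hat p\|\hat q)\ge 2\big(\sqrt{|C|}\,(\hat p-\hat q)\big)^2$. Substituting the concentration bounds, on the intersection of the two good events $\sqrt{|C|}(\hat p-\hat q)$ is at least the left-hand side of~\eqref{eq:graph_scan_condition} minus the $\delta_2$-terms — i.e., at least the bracketed expression built from $\sqrt{\log 2}$, $\sqrt{\log(2/\delta_1)}$ and the $\sqrt{\rho}+\sqrt{\tfrac12\log N}$ term. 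Squaring, multiplying by~$2$, and checking constants against the definition of $\tau$ in Lemma~\ref{lem:graph_scan_statistic} (which has a leading factor $8 = 2\cdot 4^2$, matching $2$ times the square of the bracket) shows $|C|\,\KL \ge \tau$, hence $\widehat{g}>\tau$ and the test rejects. The three ``good events'' — the two concentration events and (implicitly) the event matching the $\delta_1$ slack that makes the arithmetic close — are independent, so they all hold with probability at least $(1-\delta_2)^3$, giving the stated type-2 bound.

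I expect the main obstacle to be bookkeeping rather than conceptual: matching the exact constants in $\tau$ and in~\eqref{eq:graph_scan_condition} requires being careful about (i) the direction of Pinsker's inequality and whether a tighter $\KL$ lower bound is needed near the boundary of $[0,1]$ (the $\KL$ term can behave badly if $\hat q$ is extreme, which is presumably why the $\sqrt{\tfrac12\log N}$ appears — it comes from an additional tail bound on how small $\hat\epsilon=\one^T\y/N$ can be, mirroring the $\sqrt{\rho}$ term that arises from union-bounding the scan over feasible $C$ with $\TV_1(\one_C)\le\rho$ via a covering/combinatorial argument already used in Lemma~\ref{lem:graph_scan_statistic}), and (ii) propagating the factor $\sqrt{|C|}$ cleanly through the square. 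Once the Lemma's threshold is taken as given, the remaining work is to verify that the bracket in~\eqref{eq:graph_scan_condition}, when squared and doubled, dominates $\tau$, which is a direct (if tedious) comparison of the two displayed expressions.
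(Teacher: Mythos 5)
Your plan is correct and follows essentially the same route as the paper: the type-1 bound is quoted from Lemma~\ref{lem:graph_scan_statistic}, and the type-2 bound comes from evaluating the scan objective at the true feasible set $C$, applying Hoeffding-type concentration to $\one_C^T \y/|C|$ and $\one^T \y/N$, lower-bounding the KL term by a quadratic in the mean difference, and comparing with $\tau$ (the paper uses $\KL(p\|q) \ge (p-q)^2$ and a two-binomial split, which is where its $(1-\delta_2)^3$ comes from, so your Pinsker factor of $2$ and single Hoeffding bound only add slack). Your two side remarks — attributing the third good event to a ``$\delta_1$ slack'' and the ``$8 = 2\cdot 4^2$'' arithmetic, and tracing the $\sqrt{\tfrac12 \log N}$ term to a tail bound on $\widehat{\epsilon}$ rather than to the uniform bound on the scan under $H_0^N$ — are harmless bookkeeping inaccuracies, not gaps.
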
 

We set $\delta_1 = \delta_2 = 1/N$ and obtain the following corollary.
\begin{myCorollary}
  \label{cor:graph_scan_statistic}
  Using the graph scan statistic $\widehat{g}$ in~\eqref{eq:GSS},
  $H_0^N$ and $H_1^N$ are asymptotically distinguishable, that is,
  $\lim_{N \rightarrow \infty} R_N(T)=0$, when
  \begin{displaymath} 
    \left( 1-\frac{|C|}{N}  \right) \sqrt{|C|} (\mu - \epsilon) \ \geq \ O \left( \max( \sqrt{\rho}, \sqrt{\log N} ) \sqrt{\log N}  \right).
  \end{displaymath}
\end{myCorollary}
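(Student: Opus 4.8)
The plan is to obtain Corollary~\ref{cor:graph_scan_statistic} as an immediate specialization of Theorem~\ref{thm:graph_scan_statistic}, by fixing the error tolerances to $\delta_1 = \delta_2 = 1/N$ and then simplifying the resulting sufficient condition to the stated order in $N$. First I would plug $\delta_1 = \delta_2 = 1/N$ into the two error bounds of Theorem~\ref{thm:graph_scan_statistic}: the type-1 error is then at most $1-(1-1/N)^2$ and the type-2 error at most $1-(1-1/N)^3$, so the risk obeys $R_N(T) \le \big(1-(1-1/N)^2\big) + \big(1-(1-1/N)^3\big)$, which tends to $0$ as $N \to \infty$. Hence asymptotic distinguishability holds as soon as the attribute-strength condition~\eqref{eq:graph_scan_condition} is satisfied for this choice of tolerances, and the only remaining task is to rewrite that condition in the compact $O(\cdot)$ form.

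Next I would simplify the right-hand side of~\eqref{eq:graph_scan_condition} with $\delta_1=\delta_2=1/N$, namely
$$4\sqrt{\log 2} + 6\sqrt{\log(2N)} + 4\Big(\sqrt{\rho} + \sqrt{\tfrac12\log N}\Big)\sqrt{\log(N-1)} + \Big(\sqrt{\tfrac12} + \sqrt{\tfrac{|C|}{2N}}\Big)\sqrt{\log(2N)},$$
bounding it term by term. The first term is a constant, hence $O(1)$; using $\log(2N) = \log N + \log 2 = O(\log N)$ together with $|C|/N \le 1$, the second and fourth terms are $O(\sqrt{\log N})$; the cross term $4\sqrt{\rho}\,\sqrt{\log(N-1)}$ is $O(\sqrt{\rho\log N})$, and $4\sqrt{\tfrac12\log N}\,\sqrt{\log(N-1)}$ is $O(\log N)$. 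Collecting everything, the right-hand side is $O\big(\sqrt{\rho\log N} + \log N\big)$. Since $\sqrt{\rho\log N} = \sqrt{\rho}\,\sqrt{\log N}$ and $\log N = \sqrt{\log N}\,\sqrt{\log N}$, both surviving contributions have the form (coefficient)$\cdot\sqrt{\log N}$ with coefficient $\sqrt{\rho}$ or $\sqrt{\log N}$, so their sum is $O\big(\max(\sqrt{\rho},\sqrt{\log N})\sqrt{\log N}\big)$, which is exactly the claimed condition; the left-hand side $\big(1-|C|/N\big)\sqrt{|C|}(\mu-\epsilon)$ is carried over unchanged from Theorem~\ref{thm:graph_scan_statistic}.

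There is no genuine obstacle in this argument; it is pure bookkeeping. The one point that warrants a sentence of care is the $|C|$-dependent contribution $\sqrt{|C|/(2N)}\,\sqrt{\log(2N)}$: a priori one might worry it grows with $|C|$, but $|C| \le N$ forces it to be $O(\sqrt{\log N})$, which is strictly dominated by the $\log N$ term and therefore absorbed. I would also note, echoing the discussion following Theorem~\ref{thm:wavelet_statistic}, that this asymptotic condition is not checkable in practice (it involves the unknown $\mu,\epsilon,C$) but serves to characterize the fundamental detection regime of the graph scan statistic.
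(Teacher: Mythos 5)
Your proposal is correct and follows exactly the route the paper intends: the paper simply states ``We set $\delta_1 = \delta_2 = 1/N$ and obtain the following corollary,'' so the corollary is read off from Theorem~\ref{thm:graph_scan_statistic} with that substitution, the risk $R_N(T) \le \bigl(1-(1-1/N)^2\bigr) + \bigl(1-(1-1/N)^3\bigr) \to 0$, and the right-hand side of~\eqref{eq:graph_scan_condition} collapses to $O\bigl(\max(\sqrt{\rho},\sqrt{\log N})\sqrt{\log N}\bigr)$ exactly as in your term-by-term bookkeeping. Your write-up is in fact more explicit than the paper's own one-line derivation, and the handling of the $\sqrt{|C|/(2N)}$ term via $|C|\le N$ is the right (and only) point needing care.
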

 The proofs of Lemma~\ref{lem:graph_scan_statistic} and
Theorem~\ref{thm:graph_scan_statistic} are merged in
Appendix~\ref{sec:Appendices2}. The main idea is to show that under
the null hypothesis, $\one_{C}^T (\y-\epsilon \one) / \sqrt{|C|} $ is
a sub-Gaussian random variable, while under the alternative
hypothesis, the maximum likelihood estimator $\one_C^T \y/|C|$ is
close to $\mu$ with high probability. Similarly to the graph wavelet
statistic,~\eqref{eq:graph_scan_condition} shows that the key to
detecting the activation is related to the properties of the
ground-truth activated node set. When the size of the ground-truth
activated node set is larger and the ground-truth activated node set
has a small $\ell_1$-norm-based total variation, it is easier for
graph scan statistic to detect the activation. Similarly
  to~\eqref{eq:wavelet_condition}, \eqref{eq:graph_scan_condition}
  fails when $|C|$ is too close to $N$, because the two mean values,
  ${ \one_C^T \y}/{|C| }$ and ${ \one^T \y}/{N}$, in~\eqref{eq:GSS}
  are too close. In other words, ${\one^T \y}/{N}$ is a poor estimate
  for the background noise $\epsilon$. Again, in practice we typically
  consider $|C| \ll O(N)$, because localized attributes are relatively
  small compared to the entire graph.

Given the graph scan statistic $\widehat{g}$, the upper bound on the  $p$-value
is $ 2 e^{-\frac{\sqrt{2}}{3} \left( \sqrt{\frac{\widehat{g}}{8}} - 2
    \log 2 - \left( \sqrt{\rho} + \sqrt{\frac{1}{2} \log N} \right)
    \sqrt{ 2 \log(N-1) } \right)^2 }$.  Let our test be level
$\alpha$.  We reject the null hypothesis at all $\alpha \geq 2
e^{-\frac{\sqrt{2}}{3} \left( \sqrt{\frac{\widehat{g}}{8}} - 2 \log 2
    - \left( \sqrt{\rho} + \sqrt{\frac{1}{2} \log N} \right) \sqrt{ 2
      \log(N-1) } \right)^2 }$. The threshold is only related to the
size of graph $N$ and the cut cost $\rho$, which is a user-defined
parameter.

There are two advantages to the graph scan statistic over the graph
wavelet statistic: it is data adaptive and flexible by considering
edge weights. Instead of using a pre-constructed graph wavelet basis,
graph scan statistic actively searches for the activated node
set. Thus, it not only detects whether localized
activated node sets exist, but also localizes such regions. It also
takes into account edge weights by using the $\ell_1$-norm-based total
variation and is more general compared to $\ell_0$-norm-based total
variation used in graph wavelet statistic. Note that the
$\ell_1$-norm-based total variation and the $\ell_0$-norm-based total
variation are the same when we only consider binary edge
weights. Thus, all the results based on the $\ell_1$-norm can be
directly applied to the $\ell_0$-norm.

\mypar{Practical algorithms} In the previous analysis, we used the
global optimum of $\widehat{g}$ in~\eqref{eq:GSS}; this global
  optimum is hard to obtain, however, because the optimization problem
  is combinatorial. We instead consider two practical methods to
compute the graph scan statistic: the first obtains a local optimum of
the original optimization problem and the second a global optimum of a
relaxed optimization problem.

In the first method, we reformulate~\eqref{eq:GSS} and solve
\begin{eqnarray}
\label{eq:GSS_Re}
  \widehat{g}  & = &  \max_t  \max_{\x}  t \KL \left( \frac{ \x^T \y} {t} \| \frac{\one^T \y}{N} \right)
\\
\nonumber
&& \text{subject to } \x \in \{0, 1\}^N, \TV_1(\x) \leq \rho,  \one^T \x \leq t,
\end{eqnarray}
where $\x$ is an auxiliary attribute to represent $\one_C$ and $t$
denotes $|C|$. Since $\one^T \y/N$ is a small constant, for each $t$,
we optimize over $\x$ to move $\x^T \y/t$ as far away from $\one^T
\y/N$ as possible, which is equivalent to maximizing $\x^T \y$ within
the feasible region.\footnote{We implicitly assume that $\one_C^T
  \y/|C| > \one^T \y/N$.} Given a fixed $t$, we solve
\begin{eqnarray}
\label{eq:GSS_Re_t}
 \x_t^*  & = & \arg \min_{\x}  ( -\x^T \y),
\\  \nonumber
&& \text{subject to } \x \in \{0, 1\}^N, \TV_1(\x) \leq \rho,  \one^T \x \leq t.
\end{eqnarray}
The corresponding Lagrange function is  
 \begin{displaymath}
   L( \eta_1, \eta_2 , \x)  =  -\x^T \y + \eta_1  ( \one^T \x - t ) + \eta_2 \left( \left\| \Delta \x \right\|_1 - \rho \right).
 \end{displaymath} The Lagrange dual function is
\begin{eqnarray*}
&& Q(\eta_1, \eta_2) \ = \ \min_{\x \in \{0, 1\}^N} L( \eta_1, \eta_2 , \x)
\\
&= &    \min_{\x \in \{0, 1\}^N}  \left( -\x^T \y + \eta_1 \one^T \x + \eta_2 \left\| \Delta \x \right\|_1 \right) - \eta_1 t - \eta_2 \rho
\\
&= &   q( \eta_1, \eta_2 ) - \eta_1 t - \eta_2 \rho.
\end{eqnarray*}
For given  $\eta_1, \eta_2$, the function $q( \eta_1, \eta_2 )$ can be efficiently solved by $s$-$t$ graph cuts~\cite{BoykovVZ:01,KolmogorovZ:04}. We then maximize $Q(\eta_1, \eta_2)$ by using the simulated annealing and obtain $\x_t^*$ as the optimum of~\eqref{eq:GSS_Re_t}. Finally, we  optimize over $t$ by evaluating each pair of $t$ and $\x^*_t$ in the objective function~\eqref{eq:GSS_Re}. Since $\x$ takes only binary values, the optimization problem~\eqref{eq:GSS_Re_t} is not convex. However, previous works show that even the local minimum provides decent results~\cite{KolmogorovZ:04} and the computation is remarkably efficient.  We call the solution~\emph{local graph scan statistic} (LGSS) because it is a local optimum of the original optimization problem~\eqref{eq:GSS_Re} by using  graph cuts.

In the second method, we compute the graph scan statistic in a convex
fashion by relaxing the original combinatorial optimization
problem~\eqref{eq:GSS_Re},
\begin{eqnarray}
\label{eq:GSS_Re_cvx}
  \widehat{r}  & = &  \max_t  \max_{\x}  t \KL \left( \frac{ \x^T \y} {t} \| \frac{\one^T \y}{N}  \right)
\\
\nonumber
&& \text{subject to } \x \in [0, 1]^N, \TV_1(\x) \leq \rho,  \one^T \x \leq t.
\end{eqnarray}
The only difference between~\eqref{eq:GSS_Re}
and~\eqref{eq:GSS_Re_cvx} is that we relax the feasible set of
$\{0,1\}^N$ to be a convex set $[0,1]^N$. Given a fixed $t$, we
  obtain the optimum $\x_t^*$ by convex programming. We then optimize
  over $t$ by evaluating each pair of $t$ and $\x_t^*$ in the
  objective function~\eqref{eq:GSS_Re_cvx}. We call
$\widehat{r}$ the \emph{convex graph scan statistic} (CGSS). When
$\widehat{r}$ is larger than some threshold, we detect the activated
node set and reject the null hypothesis.  

 To analyze the convex graph scan statistic,
  Lemma~\ref{lem:relax_scan_statistic} shows that given a threshold
  related to the size of the graph $N$, a user-defined error tolerance
  $\delta$ and cut cost $\rho$, the type-1 error is upper bounded,
  Theorem~\ref{thm:relax_scan_statistic} shows that when the attribute
  strength is sufficiently large, both type-1 and type-2 errors can be
  upper bounded and Corollary~\ref{cor:relax_scan_statistic} states
  the condition of the asymptotic distinguishability.
 
\begin{myLem}
    \label{lem:relax_scan_statistic}
    Let the convex graph scan statistic be $\widehat{r}$
    in~\eqref{eq:GSS_Re_cvx}.  Under the statistical
    test~\eqref{eq:BernoullieM} with $p = 0$, we reject the null
    hypothesis for all $\widehat{r} > \tau$, with
    \begin{eqnarray*}
      \tau & = & 8 \Bigg( \frac{ \log 2N +1 }{ \sqrt{ \left( \sqrt{\rho} + \sqrt{\frac{1}{2} \log N} \right)^2 \log N  } }  + \sqrt{2 \log 2} +
      \\
      && 2 \sqrt{ \left( \sqrt{\rho} + \sqrt{\frac{1}{2} \log N}\right)^2 \log N }+ \sqrt{ \frac{9}{2} \log (\frac{2}{\delta})} \Bigg)^2.
    \end{eqnarray*}
    The corresponding type-1 error is $\mathbb{P} \{ T = 1 | H_0^N
    \text{ is true} \} \leq 1- (1-\delta)^2$.
\end{myLem}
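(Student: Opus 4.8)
The plan is to bound the type-1 error $\mathbb{P}\{T=1\mid H_0^N\}=\mathbb{P}\{\widehat r>\tau\mid H_0^N\}$ by exhibiting two ``good events,'' each of probability at least $1-\delta$, on whose intersection $\widehat r\le\tau$ holds deterministically; the stated bound $1-(1-\delta)^2$ is then the union bound. Work under $H_0^N$ of~\eqref{eq:BernoullieM}: $\y$ has independent coordinates with $\Ep\,y_i=\epsilon$, so $\z:=\y-\epsilon\one_\V$ is centered with coordinates of range at most $1$ (sub-Gaussian, and with variance $\epsilon(1-\epsilon)$ for Bernstein-type bounds); write $\widehat\epsilon=\one^T\y/N$. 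The first step is a deterministic reduction of the KL objective in~\eqref{eq:GSS_Re_cvx} to a linear functional of $\z$: a Chernoff-type tail bound for a weighted sum of independent Bernoullis controls $t\,\KL\!\big(\tfrac{(\x_t^*)^T\y}{t}\,\big\|\,\widehat\epsilon\big)$ from above by a quantity into which, after taking square roots, the deviation enters additively---this is where the factor $8$ and the overall squaring in $\tau$ come from---and $\one^T\x\le t$ together with $\epsilon\ge0$ reduces that deviation to $(\x_t^*)^T\z-\tfrac{t}{N}\one^T\z$. Hence $\{\widehat r>\tau\}\subseteq\big\{\max_{t\le N}\big[(\x_t^*)^T\z-\tfrac{t}{N}\one^T\z\big]\text{ is large relative to }\sqrt{t\,\widehat\epsilon(1-\widehat\epsilon)}\big\}$.

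The second ingredient peels off the background-rate fluctuation: $\one^T\z=\sum_i(y_i-\epsilon)$ is a sum of $N$ independent mean-zero variables of range $\le 1$, so Hoeffding gives $\mathbb{P}\{|\one^T\z|>s\}\le 2e^{-2s^2/N}$; choosing $s$ so that the right side equals $\delta$ defines the first good event, on which $\widehat\epsilon=\epsilon+O(\sqrt{\log(1/\delta)/N})$, hence $\widehat\epsilon(1-\widehat\epsilon)\asymp\epsilon$ and the term $\tfrac{t}{N}|\one^T\z|$ is controlled; the degenerate $\epsilon=0$ and $\epsilon\to1$ make $\widehat r=0$ under $H_0^N$. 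The deviations of the two good events will combine into the $\delta$-dependent summand $\sqrt{(9/2)\log(2/\delta)}$ of $\tau$.

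The heart of the argument---and the step I expect to be the main obstacle---is uniform control, over the \emph{continuum} of feasible $\x\in[0,1]^N$ with $\TV_1(\x)\le\rho$, $\one^T\x\le t$, and over all integers $t\le N$, of $(\x_t^*)^T\z$. In Lemma~\ref{lem:graph_scan_statistic} this was a union bound over the finitely many node sets of $\ell_1$-cut at most $\rho$; here I would reduce to that case via the coarea/layer-cake (Lov\'asz-extension) identity $\x=\int_0^1\one_{C_\lambda}\,d\lambda$ with $C_\lambda=\{i:x_i>\lambda\}$, which satisfies $\int_0^1|C_\lambda|\,d\lambda=\one^T\x$ and $\int_0^1\TV_1(\one_{C_\lambda})\,d\lambda=\TV_1(\x)$. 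The first subtask is a single inequality, holding on a second event of probability $\ge 1-\delta$ \emph{simultaneously for all} subsets $C$, of the form
\[
|\one_C^T\z|\ \le\ \sqrt{|C|\,\epsilon}\,\Big(c_1\big(\sqrt{\TV_1(\one_C)}+c_2\sqrt{\log N}\big)\sqrt{\log N}+c_3\sqrt{\log(2/\delta)}\Big)\ +\ (\text{a }|C|\text{-term}),
\]
proved by a dyadic split over $\TV_1(\one_C)$, the estimate that there are at most $\mathrm{poly}(N)^{\rho'}$ node sets of $\ell_1$-cut $\le\rho'$ (fix a spanning tree; this is the combinatorial core of Lemma~\ref{lem:graph_scan_statistic}), and a Bernstein bound whose range term---spliced with the trivial bound $|\one_C^T\z|\le|C|$ for the smallest level sets---yields the ``$|C|$-term''; it is precisely this small-set crossover that, after integration, produces the residual summand $(\log 2N+1)\big/\sqrt{(\sqrt\rho+\sqrt{\tfrac12\log N})^2\log N}$ of $\tau$. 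The second subtask is a concavity argument: since $a\mapsto\sqrt a$ and $(a,b)\mapsto\sqrt{ab}$ are concave and nondecreasing on the positive orthant, Jensen applied to $\lambda\mapsto\big(|C_\lambda|,\TV_1(\one_{C_\lambda})\big)$ converts $(\x_t^*)^T\z=\int_0^1\one_{C_\lambda}^T\z\,d\lambda$ into the same bound with $|C|\to\one^T\x\le t$ and $\TV_1(\one_C)\to\TV_1(\x)\le\rho$, i.e.\ $(\x_t^*)^T\z\lesssim\sqrt{t\,\epsilon}\,\big(c_1(\sqrt\rho+c_2\sqrt{\log N})\sqrt{\log N}+c_3\sqrt{\log(2/\delta)}\big)$ plus the residual term.

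Assembling: on the intersection of the two good events, $(\x_t^*)^T\z-\tfrac{t}{N}\one^T\z$ is at most the last bound plus $\tfrac{t}{N}s$; dividing by $\sqrt{t\,\widehat\epsilon(1-\widehat\epsilon)}\asymp\sqrt{t\epsilon}$---the factors $\epsilon$ cancelling because the Bernstein variance is proportional to $\epsilon$---taking square roots and collecting terms recovers exactly the four summands inside the parentheses in $\tau$ (the $2$ on the leading $(\sqrt\rho+\sqrt{\tfrac12\log N})\sqrt{\log N}$ term and the $\sqrt{2\log 2}$ being union-bound/counting overhead). Thus $\widehat r\le\tau$ on that intersection, so $\{\widehat r>\tau\}$ forces one of the two good events to fail, and the union bound gives $\mathbb{P}\{T=1\mid H_0^N\}\le1-(1-\delta)^2$. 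The whole difficulty sits in the single all-subsets concentration inequality whose rate grows only like $\sqrt{\TV_1}$ (so that concavity collapses the integral to $\sqrt\rho$) while still correctly tracking the low-order $\log N$ and ``$+1$'' contributions from the extreme level sets.
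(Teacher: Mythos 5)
Your overall architecture matches the paper's: bound the KL objective by (a constant times) a squared centered linear statistic, control the global-mean term $\one^T(\y-\epsilon\one)$ and the supremum of $\x^T(\y-\epsilon\one)/\sqrt{t}$ over the relaxed class on two good events, and combine to get type-1 error $1-(1-\delta)^2$. The gap is at the decisive step. The paper never proves the uniform bound over $\{\x\in[0,1]^N:\TV_1(\x)\le\rho,\ \one^T\x\le t\}$ from scratch: it proves only the expected-maximum bound $\Ep\max \x^T(\y-\epsilon\one)/\sqrt{\one^T\x}\le\sqrt{(N\log 2)/2}$ (its Lemma~\ref{lem:subgaussian}, by noting the maximizer over the box is a binary vertex and union-bounding over the $2^N$ vertices) and then invokes Theorem~5 of~\cite{SharpnackKS:13a}, which, fed with that quantity, outputs the high-probability bound whose four summands are exactly those inside the parentheses of $\tau$; in particular the residual term $(\log 2N+1)\big/\sqrt{(\sqrt{\rho}+\sqrt{\tfrac12\log N})^2\log N}$ comes out of that cited machinery, not from a Bernstein range-term/small-level-set crossover as you conjecture. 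Your replacement of the citation by a self-contained layer-cake $+$ spanning-tree-counting $+$ dyadic-peeling $+$ Jensen argument is plausible in outline (it is close in spirit to how such results are proved), but as written it is a sketch: the all-subsets inequality with rate $\sqrt{\TV_1}$, the splice at small level sets, and above all the claim that the assembly ``recovers exactly the four summands'' of the stated $\tau$ are asserted rather than derived. Since the lemma fixes $\tau$ explicitly, a proof must actually exhibit a high-probability upper bound on $\widehat r$ that does not exceed this $\tau$; that is precisely the part left open, and it is not clear your mechanism reproduces the $(\log 2N+1)$ term at all.

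Two secondary issues. First, your normalization by $\sqrt{t\,\widehat\epsilon(1-\widehat\epsilon)}$ with the promised cancellation against a Bernstein variance proportional to $\epsilon(1-\epsilon)$ requires $\widehat\epsilon$ to be multiplicatively close to $\epsilon$ and bounded away from $0$ and $1$; Hoeffding on $\one^T(\y-\epsilon\one)$ gives only additive $O(\sqrt{\log(1/\delta)/N})$ closeness, which fails to justify this when $\epsilon$ is small. The paper avoids the issue (at the cost of rigor elsewhere) by the crude deterministic bound $\KL(p\,\|\,q)\le 8(p-q)^2$, which is where its factor $8$ originates—not from a Chernoff tail as you suggest. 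Second, the count of node sets with $\ell_1$ cut at most $\rho'$ being polynomial in $N$ to the power $\rho'$ via a spanning tree presumes unweighted (or integrally lower-bounded) edge weights, which should be stated; and a plain union bound over your two good events yields failure probability $2\delta$ rather than the slightly smaller $1-(1-\delta)^2$ (a bookkeeping looseness the paper shares).
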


\begin{myThm}
  \label{thm:relax_scan_statistic}
  Let the attribute strength be sufficiently large,
  \begin{eqnarray}
    \label{eq:relax_scan_condition}
    &&  \left( 1-\frac{|C|}{N}  \right) \sqrt{|C|} (\mu - \epsilon)
    \ \geq  \  \nonumber
    \Bigg( 4 \sqrt{ \log 2} + 6 \sqrt{ \log (\frac{2}{\delta_1} ) } +
    \\  \nonumber
    &&   \frac{ 2 \sqrt{2} ( \log 2N +1 ) }{ \sqrt{ \left( \sqrt{\rho} + \sqrt{\frac{1}{2} \log N} \right)^2 \log N  } }  + 
    \\ \nonumber
    &&
    4 \sqrt{ 2 \left( \sqrt{\rho} + \sqrt{\frac{1}{2} \log N}\right)^2 \log N } +
    \\ 
    &&
    \left(  \sqrt{ \frac{1}{2} } + \sqrt{\frac{|C|}{2N} } \right) \sqrt{ \log (\frac{2}{\delta_2}) }    \Bigg).
  \end{eqnarray}
  Then, by using the convex graph scan statistic $\widehat{r}$
  in Lemma~\ref{lem:relax_scan_statistic}, the type-1 error is $\mathbb{P} ( T = 1 |
  H_0^N \text{ is true} ) \leq 1-(1-\delta_1)^2$ and the type-2 error
  is $\mathbb{P} ( T = 0 | H_1^N \text{ is true} ) \leq
  1-(1-\delta_2)^3$.
\end{myThm}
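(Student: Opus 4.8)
The plan is to follow the template of the proof of Theorem~\ref{thm:graph_scan_statistic} almost verbatim, treating the type-1 and type-2 errors separately; the only genuinely new element is that the combinatorial maximization over indicator vectors $\one_C$ must be replaced by the maximization of a sub-Gaussian process over the convex polytope $P=\{\x\in[0,1]^N:\|\Delta\x\|_1\le\rho\}$. Two elementary facts will be used repeatedly: for $a,b$ bounded away from $0$ and $1$ the Kullback--Leibler divergence is locally quadratic, $c_1(a-b)^2\le\KL(a\|b)\le c_2(a-b)^2$; and Bernoulli averages concentrate at the sub-Gaussian rate by Hoeffding's inequality (the centered noise $\y-\epsilon\one$ has entries bounded in an interval of length one).

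For the type-1 bound I would argue exactly as in Lemma~\ref{lem:relax_scan_statistic}. Under $H_0^N$, $\y\sim\mathrm{Bernoulli}(\epsilon\one_\V)$, so the global average $\bar y:=\one^T\y/N$ stays, with probability at least $1-\delta_1$, in a small interval around $\epsilon$ and in particular away from $0$ and $1$. On that event the upper quadratic bound on $\KL$ together with $\one^T\x\le t$ reduces $\{\widehat r>\tau\}$ to an event of the form $\sup_{\x\in P}\x^T(\y-\epsilon\one)>\tau'$ with $\tau'$ matching $\sqrt{\tau}$ up to constants. To control this supremum I would bound the Gaussian complexity $\mathbb{E}\sup_{\x\in P}\z^T\x$ of $P$ (for $\z$ standard Gaussian): this is where the leading term $2(\sqrt{\rho}+\sqrt{\tfrac12\log N})\sqrt{\log N}$ of $\tau$ comes from, with the lower-order $(\log 2N+1)/\sqrt{(\sqrt{\rho}+\sqrt{\tfrac12\log N})^2\log N}$ term arising from the $\one^T\x\le t\le N$ constraint, and then add a sub-Gaussian deviation-above-the-mean term of order $\sqrt{\log(2/\delta_1)}$. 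A union bound over the two failure events (the one for $\bar y$ and the one for the supremum) gives type-1 error at most $1-(1-\delta_1)^2$, once the constants are tuned to match the $\tau$ of Lemma~\ref{lem:relax_scan_statistic}.

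For the type-2 bound, observe that under $H_1^N$ with true activated set $C$ the vector $\one_C$ lies in $[0,1]^N$ and satisfies the total-variation constraint, hence is feasible for~\eqref{eq:GSS_Re_cvx}, so $\widehat r\ge|C|\,\KL\!\big(\one_C^T\y/|C|\,\|\,\bar y\big)$. I would then intersect three concentration events, each of probability at least $1-\delta_2$: the inside average $\widehat\mu_C:=\one_C^T\y/|C|$ is close to $\mu$; the global average $\bar y$ is close to $\epsilon+\tfrac{|C|}{N}(\mu-\epsilon)$; and a third event placing both arguments of the divergence in the regime where the lower quadratic bound on $\KL$ holds. On this intersection, $|C|\,\KL(\widehat\mu_C\|\bar y)$ is bounded below by a quantity of order $(1-|C|/N)^2|C|(\mu-\epsilon)^2$ minus fluctuation terms — the factor $(1-|C|/N)$ appearing because $\bar y$ is pulled a distance $\tfrac{|C|}{N}(\mu-\epsilon)$ away from $\epsilon$, shrinking the effective gap — and condition~\eqref{eq:relax_scan_condition} is exactly what makes this lower bound exceed $\tau$. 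Hence the test rejects, and the union of the three events gives type-2 error at most $1-(1-\delta_2)^3$.

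The hard part will be the type-1 control of $\sup_{\x\in P}\x^T(\y-\epsilon\one)$: in Theorem~\ref{thm:graph_scan_statistic} a union bound over finitely many candidate sets suffices, but here one must estimate the Gaussian complexity of the convex set $\{\x\in[0,1]^N:\|\Delta\x\|_1\le\rho\}$ (presumably through a level-set/co-area decomposition or a direct chaining argument) and then concentrate the supremum around its mean with sub-Gaussian tails, all while tracking the explicit constants — the leading $8$, the $\sqrt{9/2}$, the $\log 2N+1$ term — so that the threshold comes out precisely as stated. A secondary obstacle is that $\KL(a\|b)$ is not globally comparable to $(a-b)^2$, so both the type-1 upper comparison and the type-2 lower comparison are legitimate only after the concentration events have confined $\bar y$ and $\widehat\mu_C$ away from $0$ and $1$; the bookkeeping of which event contributes which $\delta$-factor is what yields the $(1-\delta_1)^2$ and $(1-\delta_2)^3$ in the conclusion.
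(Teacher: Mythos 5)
Your proposal follows essentially the same route as the paper: under $H_0^N$ the KL objective is upper-bounded by a squared mean difference and the resulting supremum of the centered empirical process over the relaxed TV-constrained polytope is controlled (plus a deviation term for the global average), while under $H_1^N$ one plugs in the feasible pair $t^*=|C|$, $\x^*=\one_C$, uses Hoeffding-type concentration of $\one_C^T\y/|C|$ and $\one^T\y/N$, and compares the resulting lower bound with the threshold of Lemma~\ref{lem:relax_scan_statistic}. The only substantive difference is that what you flag as ``the hard part'' --- the Gaussian-complexity/chaining estimate for $\sup_{\x}\x^T(\y-\epsilon\one)/\sqrt{t}$ over $\{\x\in[0,1]^N,\ \TV_1(\x)\le\rho,\ \one^T\x\le t\}$ --- is resolved in the paper by first noting that the linear maximization over the box is attained at binary vertices (giving the $\sqrt{(N\log 2)/2}$ bound of Lemma~\ref{lem:subgaussian}) and then invoking Theorem~5 of~\cite{SharpnackKS:13a} for the Lov\'asz-extension-type supremum bound, rather than by a from-scratch chaining argument.
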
 

We set $\delta_1 = \delta_2 = 1/N$ and obtain the following corollary.
\begin{myCorollary}
  \label{cor:relax_scan_statistic}
  Using the convex graph scan statistic $\widehat{r}$
  in~\eqref{eq:GSS_Re_cvx}, $H_0^N$ and $H_1^N$ are asymptotically
  distinguishable, that is, $\lim_{N \rightarrow \infty} R_N(T)=0$,
  when
  \begin{displaymath} 
    \left( 1-\frac{|C|}{N}  \right) \sqrt{|C|} (\mu - \epsilon) \ \geq \ O \left( \max( \sqrt{\rho}, \sqrt{\log N} ) \sqrt{\log N}  \right).
  \end{displaymath}
\end{myCorollary}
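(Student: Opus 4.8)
The plan is to obtain Corollary~\ref{cor:relax_scan_statistic} directly from Theorem~\ref{thm:relax_scan_statistic} by setting $\delta_1 = \delta_2 = 1/N$ and then tracking the asymptotic order of every term appearing in the sufficient condition~\eqref{eq:relax_scan_condition}. With this choice, the type-1 error bound becomes $1-(1-1/N)^2$ and the type-2 error bound becomes $1-(1-1/N)^3$, both of which tend to $0$ as $N \to \infty$; hence $R_N(T) \to 0$ as long as~\eqref{eq:relax_scan_condition} holds for all sufficiently large $N$. So it remains to check that, under $\delta_1 = \delta_2 = 1/N$, the right-hand side of~\eqref{eq:relax_scan_condition} is $O(\max(\sqrt{\rho},\sqrt{\log N})\sqrt{\log N})$.

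First I would dispose of the low-order terms. The term $4\sqrt{\log 2}$ is a constant; the term $6\sqrt{\log(2/\delta_1)} = 6\sqrt{\log 2N}$ is $O(\sqrt{\log N})$; and the term $(\sqrt{1/2} + \sqrt{|C|/(2N)})\sqrt{\log(2/\delta_2)}$ is at most $\sqrt{2}\,\sqrt{\log 2N} = O(\sqrt{\log N})$ because $|C|/N \le 1$. The genuinely dominant term is $4\sqrt{2(\sqrt{\rho} + \sqrt{(1/2)\log N})^2 \log N} = 4\sqrt{2}\,(\sqrt{\rho} + \sqrt{(1/2)\log N})\sqrt{\log N}$, and since $\sqrt{\rho} + \sqrt{(1/2)\log N}$ is of order $\max(\sqrt{\rho},\sqrt{\log N})$, this term has order exactly $\max(\sqrt{\rho},\sqrt{\log N})\sqrt{\log N}$.

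The only term requiring a moment's care is the reciprocal term $2\sqrt{2}(\log 2N + 1)/\sqrt{(\sqrt{\rho} + \sqrt{(1/2)\log N})^2 \log N}$, whose numerator grows like $\log N$: one must verify the denominator does not decay. But the denominator equals $(\sqrt{\rho} + \sqrt{(1/2)\log N})\sqrt{\log N} \ge \sqrt{(1/2)\log N}\cdot\sqrt{\log N} = (1/\sqrt{2})\log N$, so this term is at most $4(\log 2N + 1)/\log N = O(1)$ (and it only gets smaller when $\rho$ grows with $N$). Adding up the pieces, the right-hand side of~\eqref{eq:relax_scan_condition} is $O(1) + O(\sqrt{\log N}) + O(1) + O(\max(\sqrt{\rho},\sqrt{\log N})\sqrt{\log N}) + O(\sqrt{\log N})$; since $\max(\sqrt{\rho},\sqrt{\log N})\sqrt{\log N} \ge \log N$, which dominates both $\sqrt{\log N}$ and $1$, the whole right-hand side is $O(\max(\sqrt{\rho},\sqrt{\log N})\sqrt{\log N})$. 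Therefore~\eqref{eq:relax_scan_condition} holds whenever $(1-|C|/N)\sqrt{|C|}(\mu - \epsilon) \ge O(\max(\sqrt{\rho},\sqrt{\log N})\sqrt{\log N})$, and the corollary follows. There is no real obstacle beyond this bookkeeping: all of the analytic content --- the sub-Gaussian bound on $\one_{C}^T(\y - \epsilon\one)/\sqrt{|C|}$ under $H_0^N$ and the concentration of the maximum-likelihood estimate $\one_C^T\y/|C|$ around $\mu$ under $H_1^N$ --- already sits inside Theorem~\ref{thm:relax_scan_statistic}, which we are merely specializing.
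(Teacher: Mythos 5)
Your proposal is correct and follows essentially the same route as the paper, which simply specializes Theorem~\ref{thm:relax_scan_statistic} with $\delta_1=\delta_2=1/N$ so that both error bounds $1-(1-1/N)^2$ and $1-(1-1/N)^3$ vanish, and absorbs the remaining terms of~\eqref{eq:relax_scan_condition} into $O\left(\max(\sqrt{\rho},\sqrt{\log N})\sqrt{\log N}\right)$. Your explicit check that the reciprocal term stays $O(1)$ because its denominator is at least $(1/\sqrt{2})\log N$ is the only bookkeeping the paper leaves implicit, and you did it correctly.
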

 The proofs of Lemma~\ref{lem:relax_scan_statistic} and
Theorem~\ref{thm:relax_scan_statistic} are merged in
Appendix~\ref{sec:Appendices3}. The main idea is to show that under
the null hypothesis, $\x^T (\y - \epsilon \one) / \sqrt{ \one^T \x }$
is a sub-Gaussian random variable with mean zero, while under the
alternative hypothesis, the maximum likelihood estimator $\one_C^T
\y/|C|$ is a sub-Gaussian random variable with mean $\mu$. Similarly
to the graph wavelet statistic and graph scan
statistic,~\eqref{eq:relax_scan_condition} shows that the key to
detecting the activation is related to the properties of the
ground-truth activated node set.

To compute the convex graph scan statistic, we solve
\begin{eqnarray}
\label{eq:GSS_Re_cvx_t}
 \x_t^*  & = & \arg \min_{\x}   -\x^T \y,
\\  \nonumber
&& \text{subject to } \x \in [ 0, 1]^N, \TV_1(\x) \leq \rho,  \one^T \x \leq t,
\end{eqnarray}
for a given $t$.  The objective function is linear and all the
constraints are convex, so~\eqref{eq:GSS_Re_cvx_t} can be easily
solved by a convex optimization solver.
Finally, we optimize over $t$ by evaluating each pair of $t$ and
$\x^*_t$ in the objective function~\eqref{eq:GSS_Re_cvx}. Because of
the convex relaxation, the final solution of $\x$ is not binary and a
higher value of $x_i$ indicates a higher confidence that the $i$th
node is activated.

We summarize thse two methods for graph scan statistic computation in
Algorithm~\ref{alg:GSS}. While in practice the convex graph scan
statistic outperforms the local graph scan statistic, the local graph
scan statistic is more appealing when dealing with large-scale graphs.

\begin{algorithm}[htb]
  \footnotesize
  \caption{\label{alg:GSS}  Graph Scan Statistic}
  \begin{tabular}{@{}lll@{}}
    \addlinespace[1mm]
   {\bf Input} 
      ~~~~~~ $\y$~~~~~input attribute \\
     {\bf Output}  
      ~~~~ $\x^*$~~~~activated local set \\
    \addlinespace[2mm]
    {\bf Function} & &\\
     For a given $t$ \\
	~~~~~~ \emph{local graph scan statistic}: solve~\eqref{eq:GSS_Re} using graph cuts, or \\
	~~~~~~ \emph{convex graph scan statistic}: solve~\eqref{eq:GSS_Re_cvx} using convex optimization solver\\
	 search over $t$, return the largest $ t \KL \left( \frac{ {\x_t^*}^T \y} {t} \| \frac{\one^T \y}{N}  \right)$ and $\x_t^*$ as $\x^*$\\
     \addlinespace[1mm]
  \end{tabular}
\end{algorithm}

\subsection{Discussion}
We now compare the proposed statistics. 
\begin{itemize}

\item The graph wavelet statistic selects a feature by projecting
  given attributes onto a pre-constructed graph wavelet basis and is a
  data-independent and discriminative approach, which works only for
  detection.\footnote{It is possible to use the graph wavelet
    basis and nonlinear approximation to localize the localized
    attribute; this is beyond the scope of this paper.}  The graph
  scan statistic searches over graphs and localizes the localized
  attribute and is a data-dependent and generative approach, which
  works for both detection and localization.

\item From a statistical perspective, the graph wavelet statistic
  requires that the attribute strength $\sqrt{|C|} (\mu - \epsilon)$
  be larger than $O( \rho \log^2 N )$ as in
  Theorem~\ref{thm:wavelet_statistic} while the graph scan statistic
  requirea that the attribute strength $(1-|C|/N)\sqrt{|C|} (\mu -
  \epsilon)$ be larger than $O( \max ( \rho, \log N ) \log N )$ as in
  Theorems~\ref{thm:graph_scan_statistic}
  and~\ref{thm:relax_scan_statistic}. Appendix~\ref{sec:Appendices4}
  shows that the proposed graph statistics significantly outperforms a
  naive approach, which uses the mean value as statistic.

\item From a computational perspective, the graph wavelet statistic is
  the cheapest to compute. The graph scan statistic implemented
  through the local graph scan statistic is also efficient by using
  efficient graph cuts, while the convex graph scan statistic costs
  the most because it needs to solve a series of convex optimization
  problems.

\item From the perspective of empirical performance, the convex graph
  scan statistic typically provides the best performance followed by
  the graph wavelet statistic. Because graph cuts only provide a local
  solution, computing the local graph scan statistic is sensitive to
  the choice of parameters and initial conditions.
\end{itemize}

\section{Experimental Results}
\label{sec:experiments}
We now evaluate our proposed methods on three datasets. We study how
detection performance changes according to parameters on a simulated
dataset. We observe that the size of the ground-truth
activated node set is crucial for the detection, which is consistent
with
Theorems~\ref{thm:wavelet_statistic},~\ref{thm:graph_scan_statistic}
and~\ref{thm:relax_scan_statistic}.  We validate the effectiveness of
our proposed methods on two real-world problems: air-pollution
detection and attribute ranking for community detection.


\begin{figure}[h]
  \begin{center}
    \begin{tabular}{cccc}
 \includegraphics[width=0.4\columnwidth]{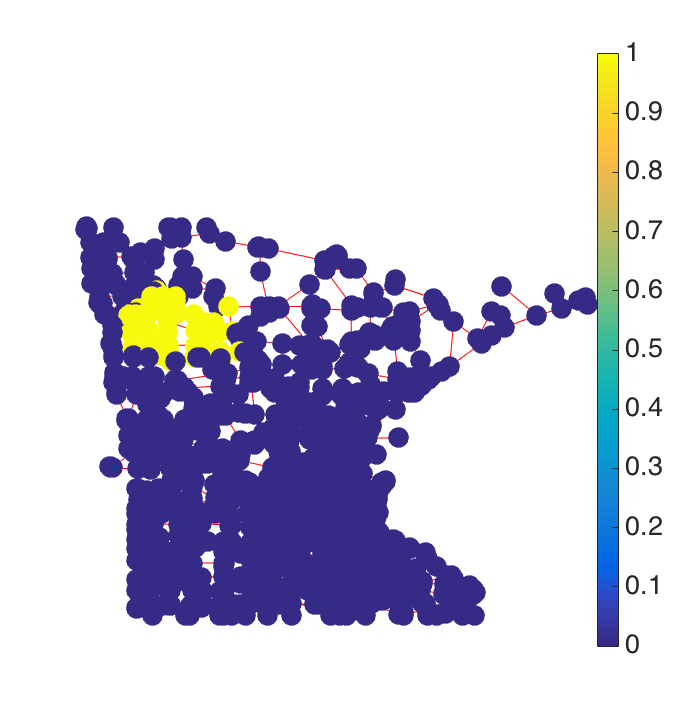} 
 &
  \includegraphics[width=0.4\columnwidth]{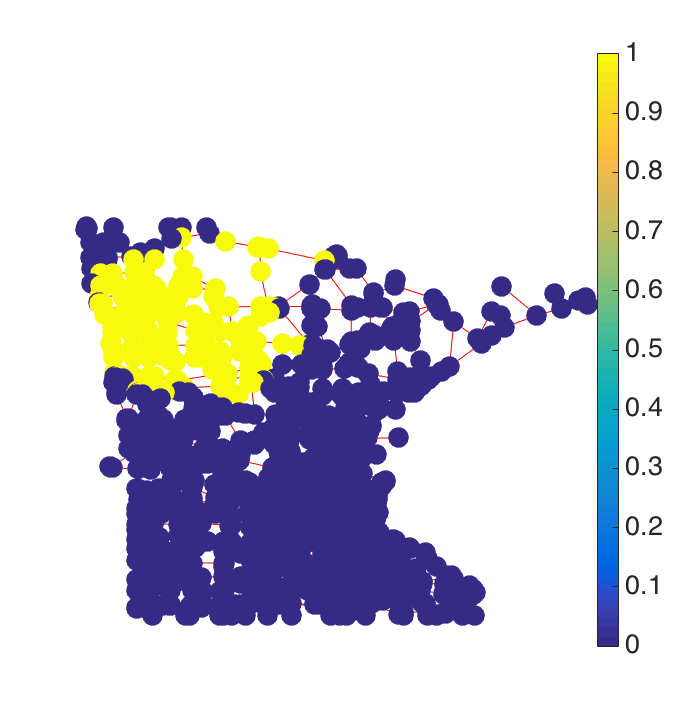} 
  \\
    {\small (a) Activated region (small).} &  {\small (b) Activated region (large).} 
    \\
    \includegraphics[width=0.4\columnwidth]{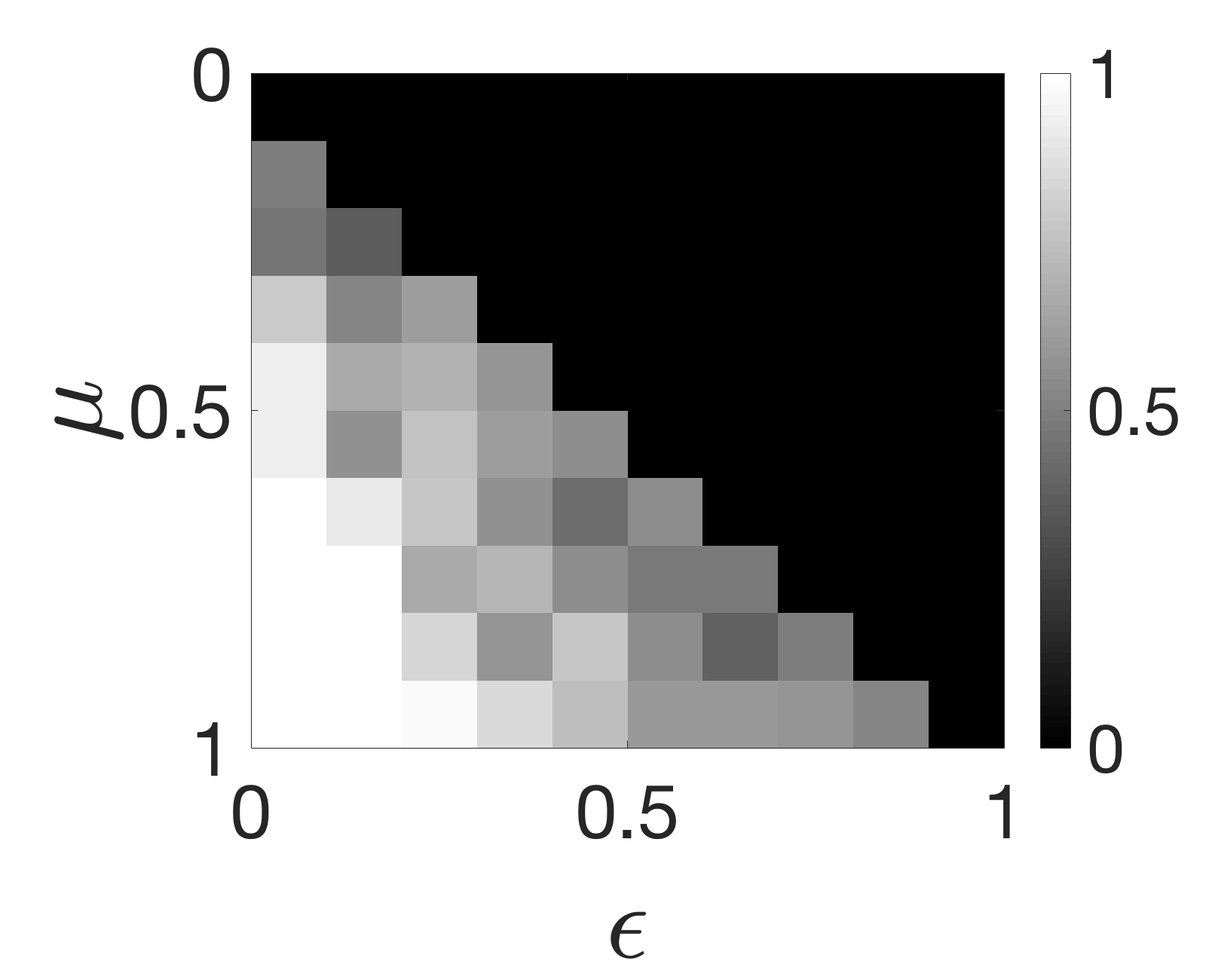}
    &
      \includegraphics[width=0.4\columnwidth]{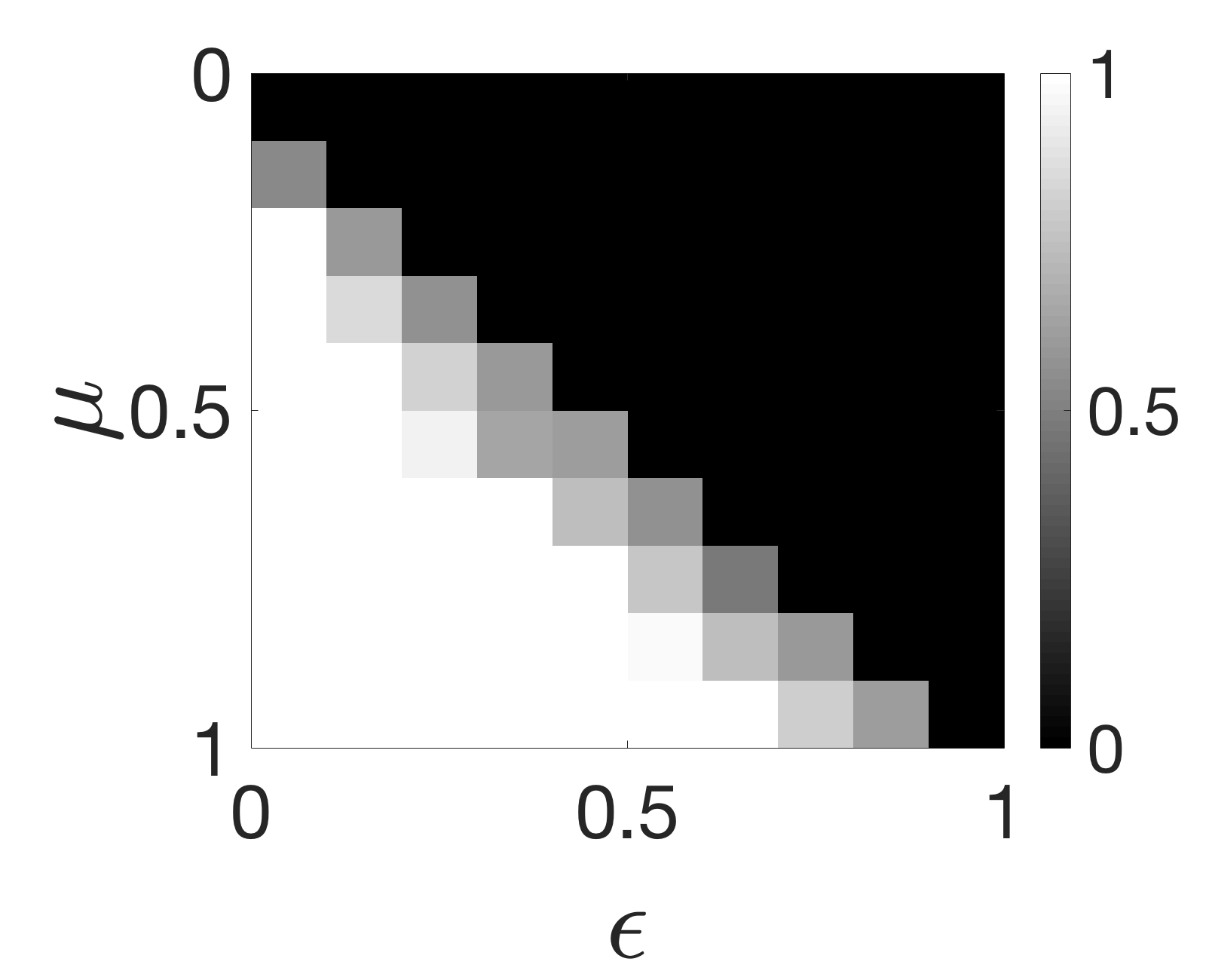}
 \\
   {\small (c) Graph wavelet statistic (small).}  &  {\small (d) Graph wavelet statistic (large).} 
   \\
   \includegraphics[width=0.4\columnwidth]{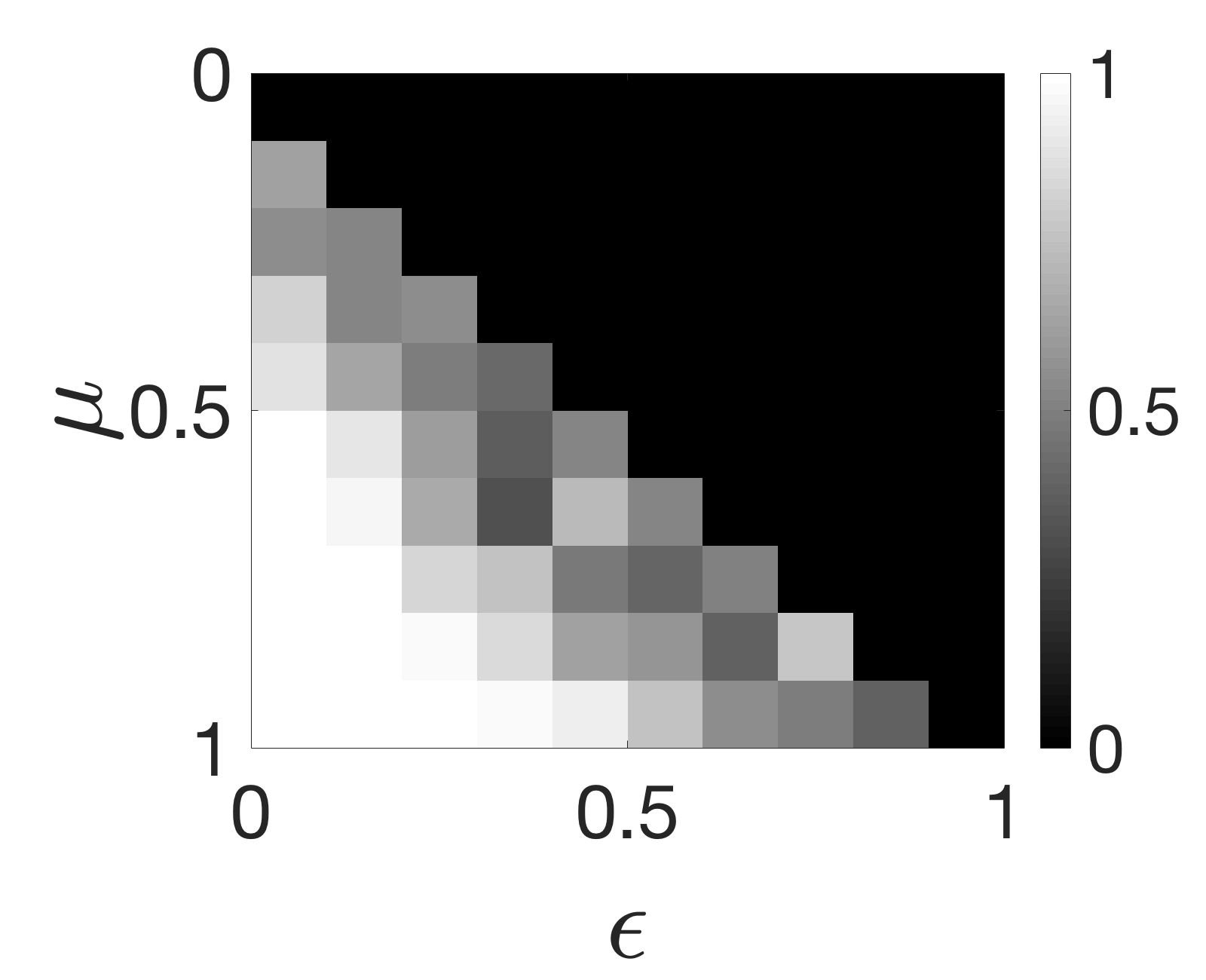}
      &
        \includegraphics[width=0.4\columnwidth]{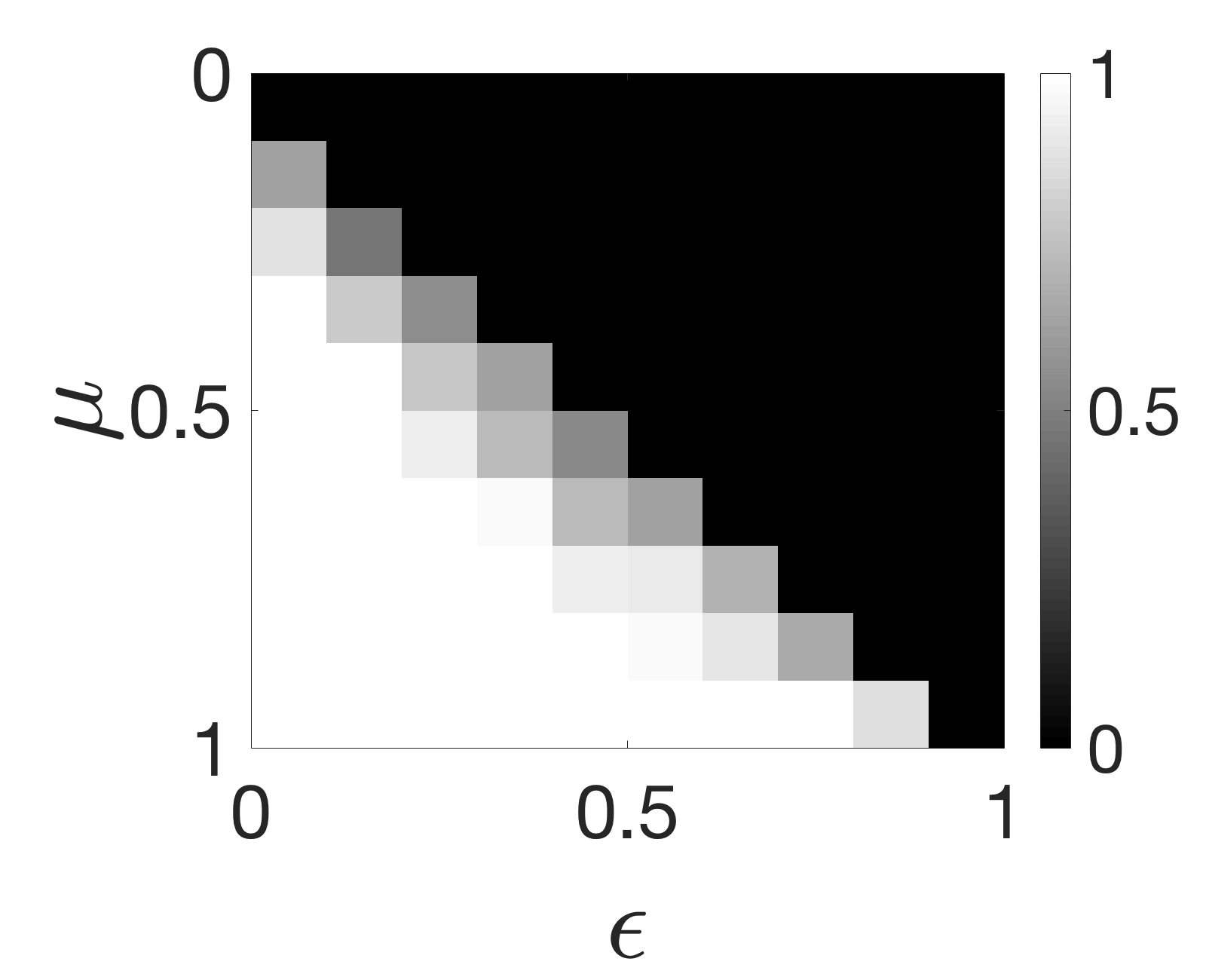}
     \\
      {\small (e) LGSS (small).}  &     {\small (f) LGSS (large).}
     \\
          \includegraphics[width=0.4\columnwidth]{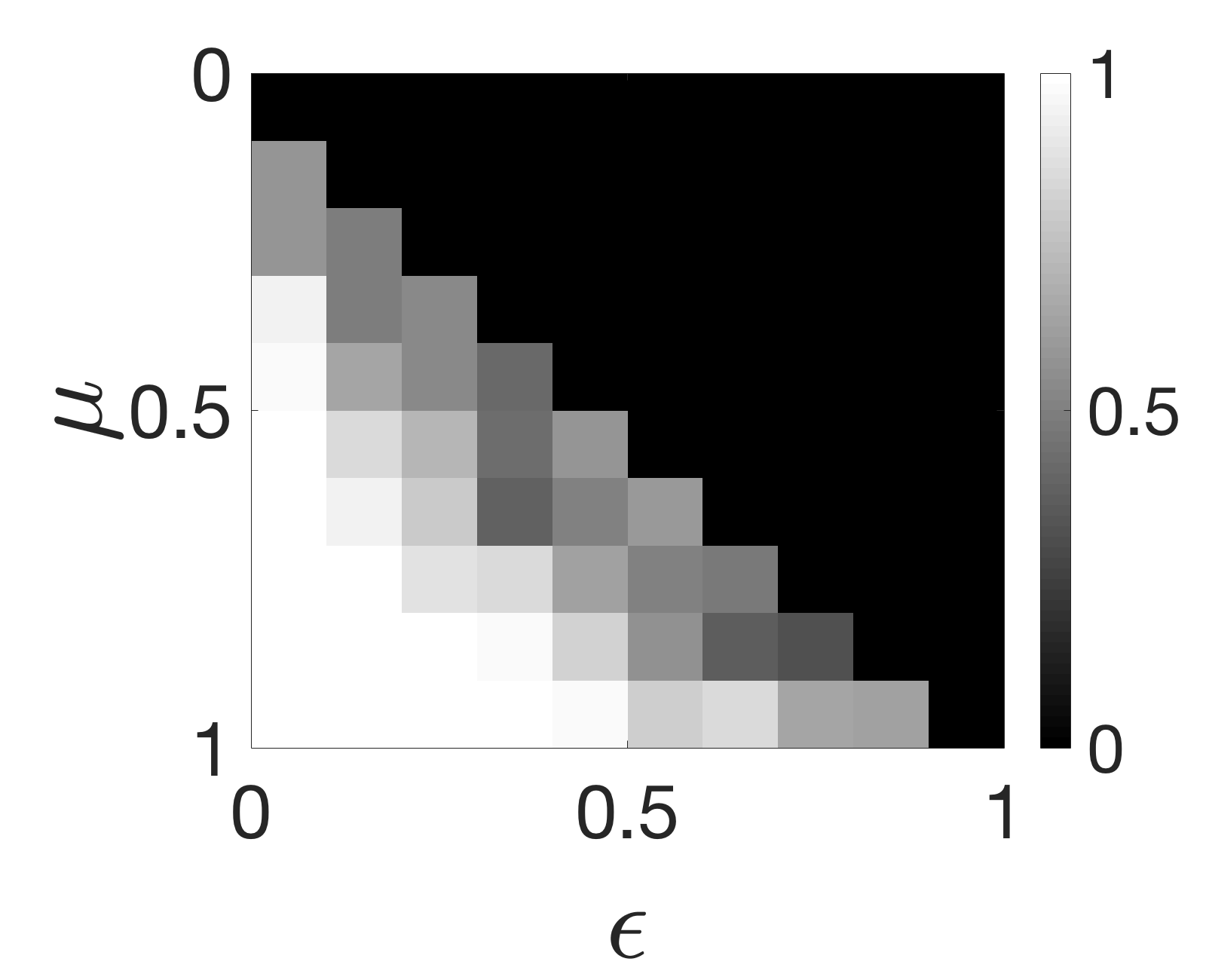}
   &
     \includegraphics[width=0.4\columnwidth]{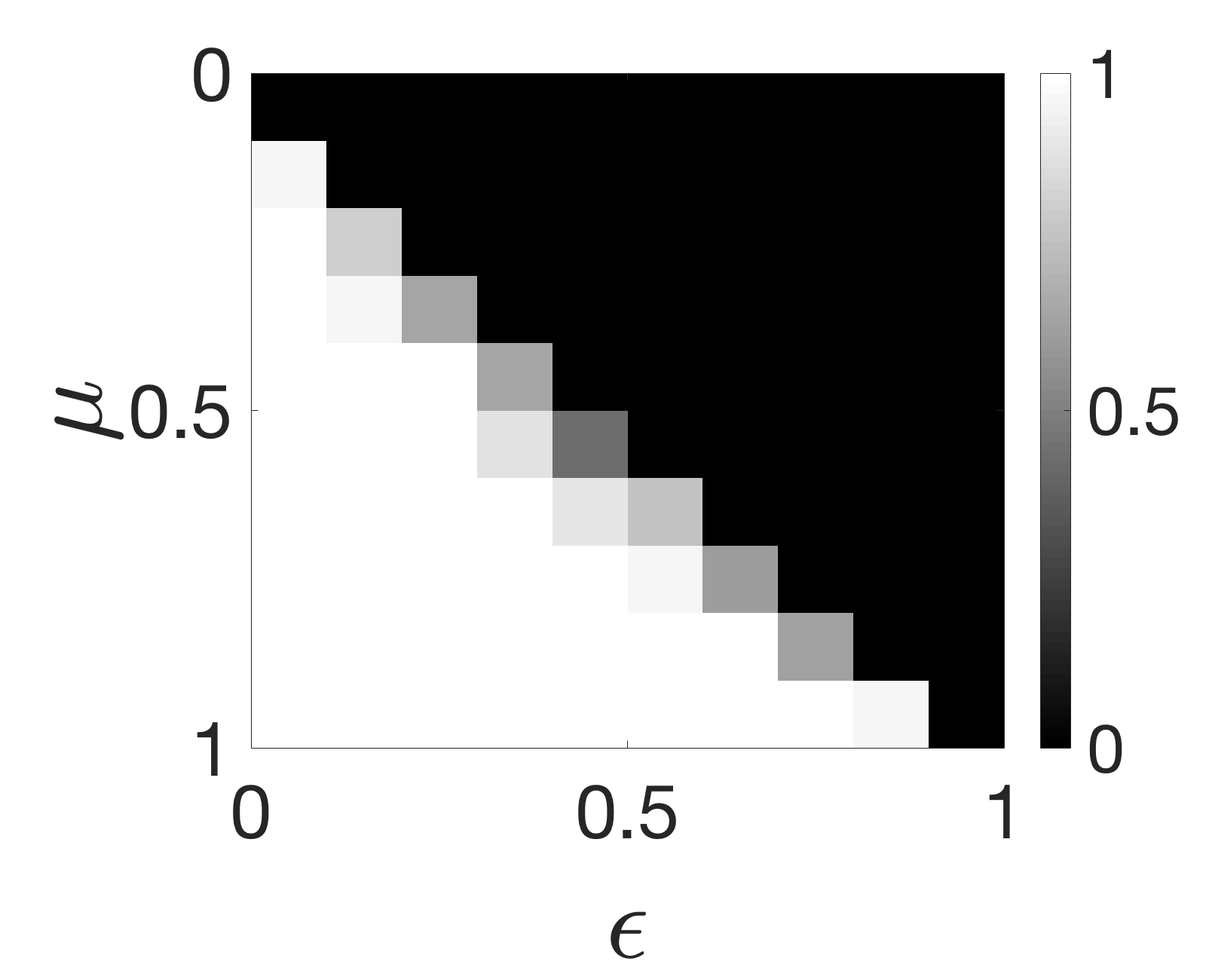}
    \\
   {\small (g) CGSS (small).}  &    {\small (h) CGSS (large).} \\
\end{tabular}
  \end{center}
   \caption{\label{fig:simulation_phase} Comparison of graph wavelet statistic and graph scan statistic on the simulated dataset. The left column shows the results for a small activated region and  the right column shows the results for a large activated region. All methods perform better when an activated region is larger. For a same activated region, all methods perform better when $\mu-\epsilon$ is larger.   }
\end{figure}

\begin{figure*}[htb]
  \begin{center}
    \begin{tabular}{ccccc}
   \includegraphics[width=0.35\columnwidth]{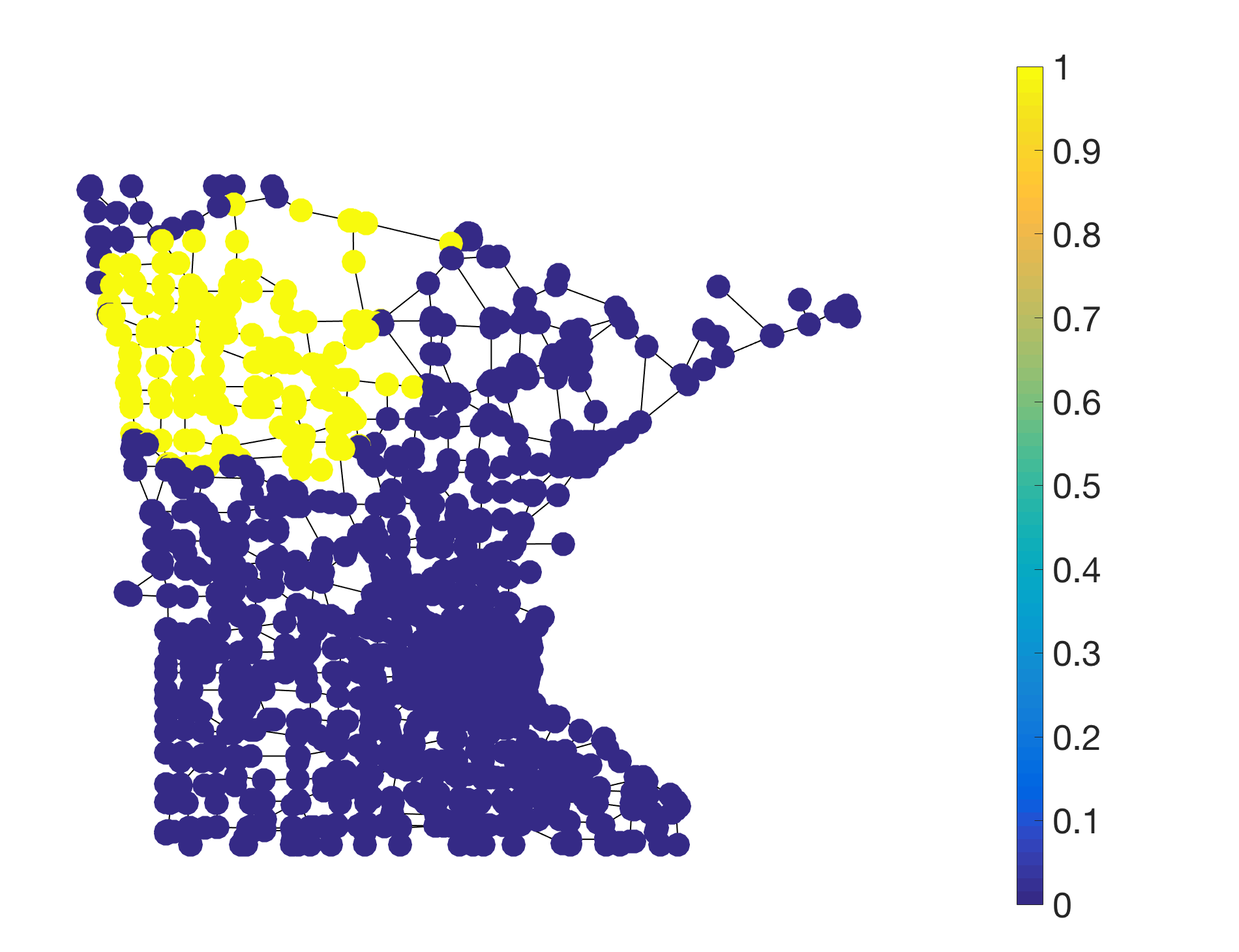} 
   &
     \includegraphics[width=0.35\columnwidth]{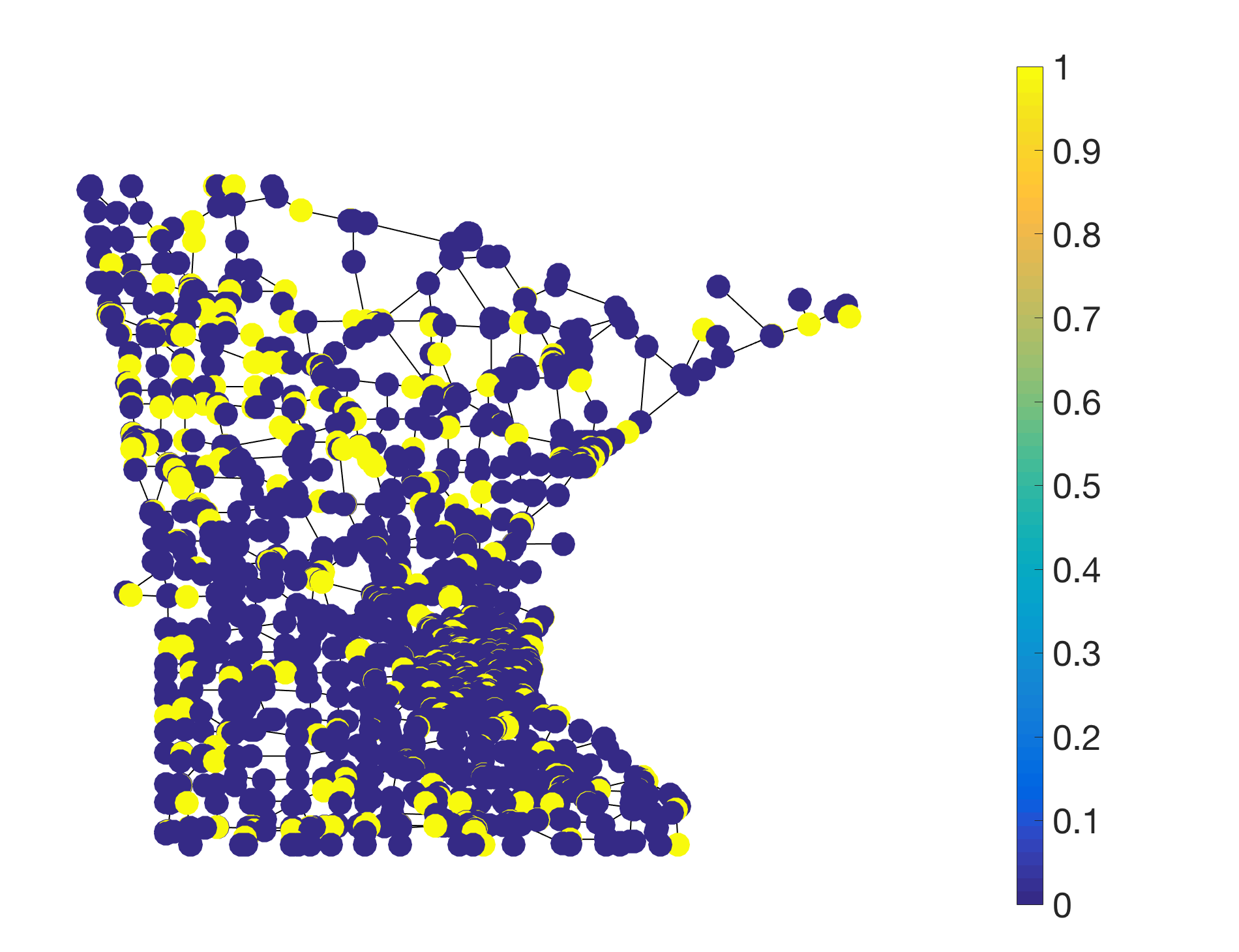} 
   &
 \includegraphics[width=0.35\columnwidth]{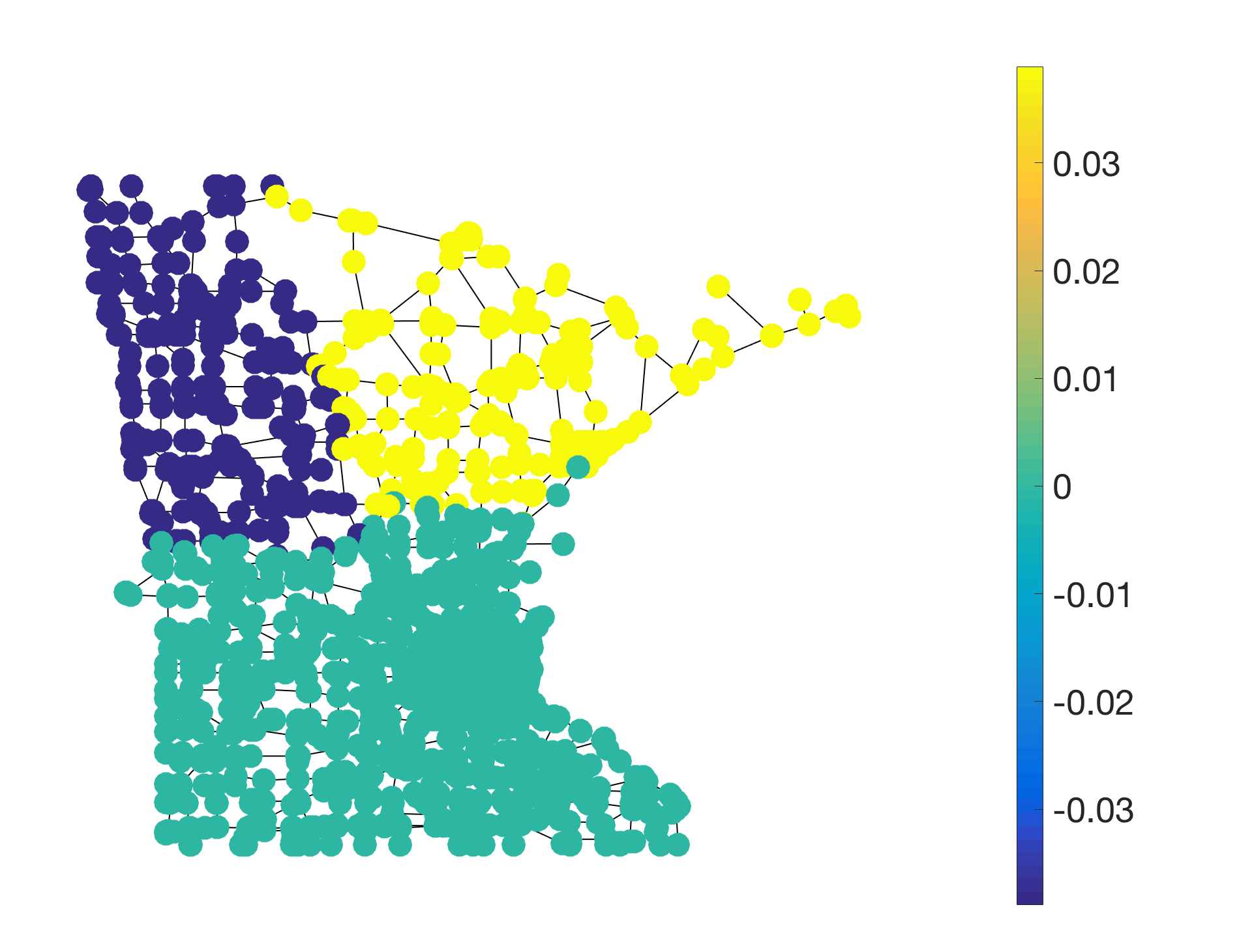} 
   &
    \includegraphics[width=0.35\columnwidth]{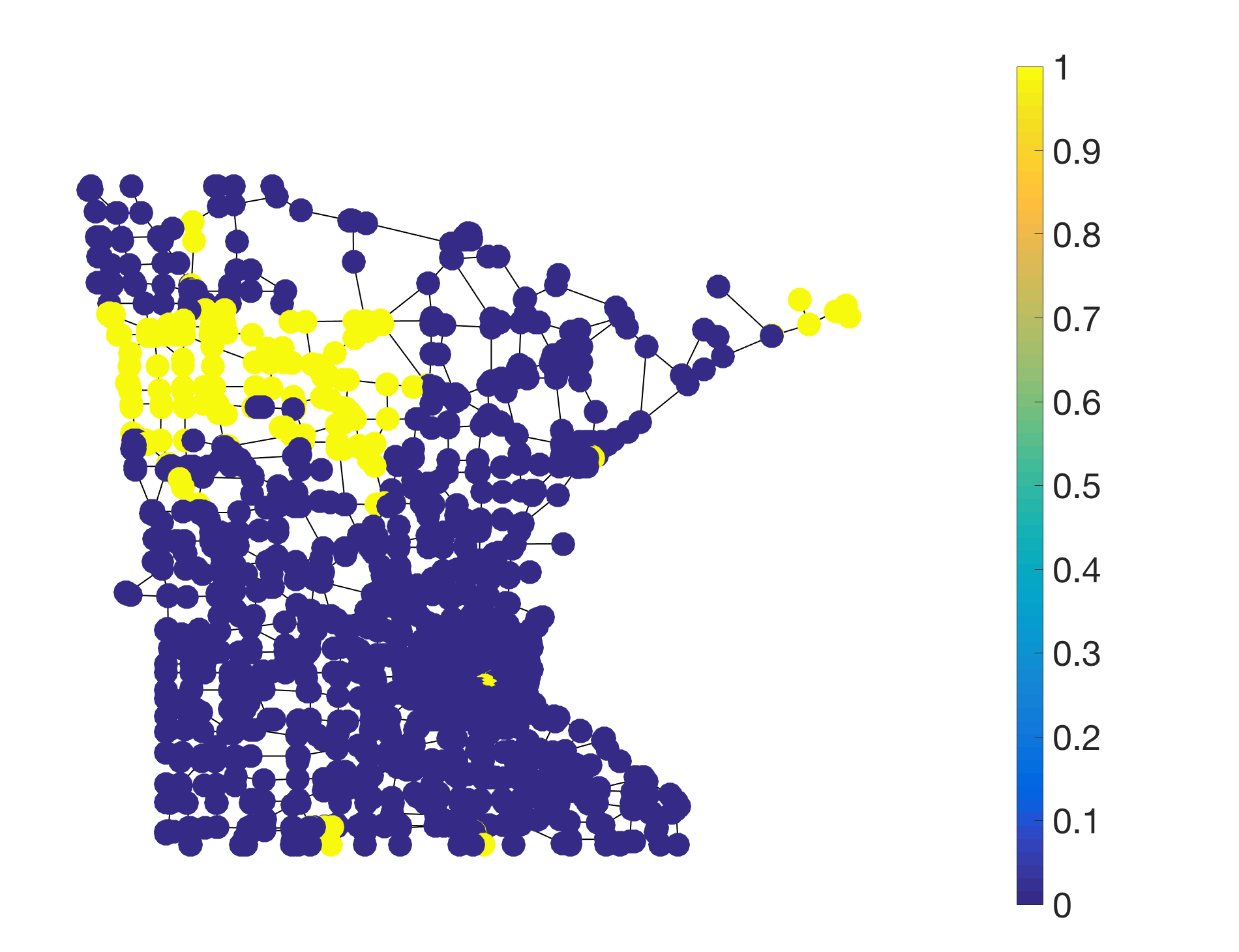} 
   &
 \includegraphics[width=0.35\columnwidth]{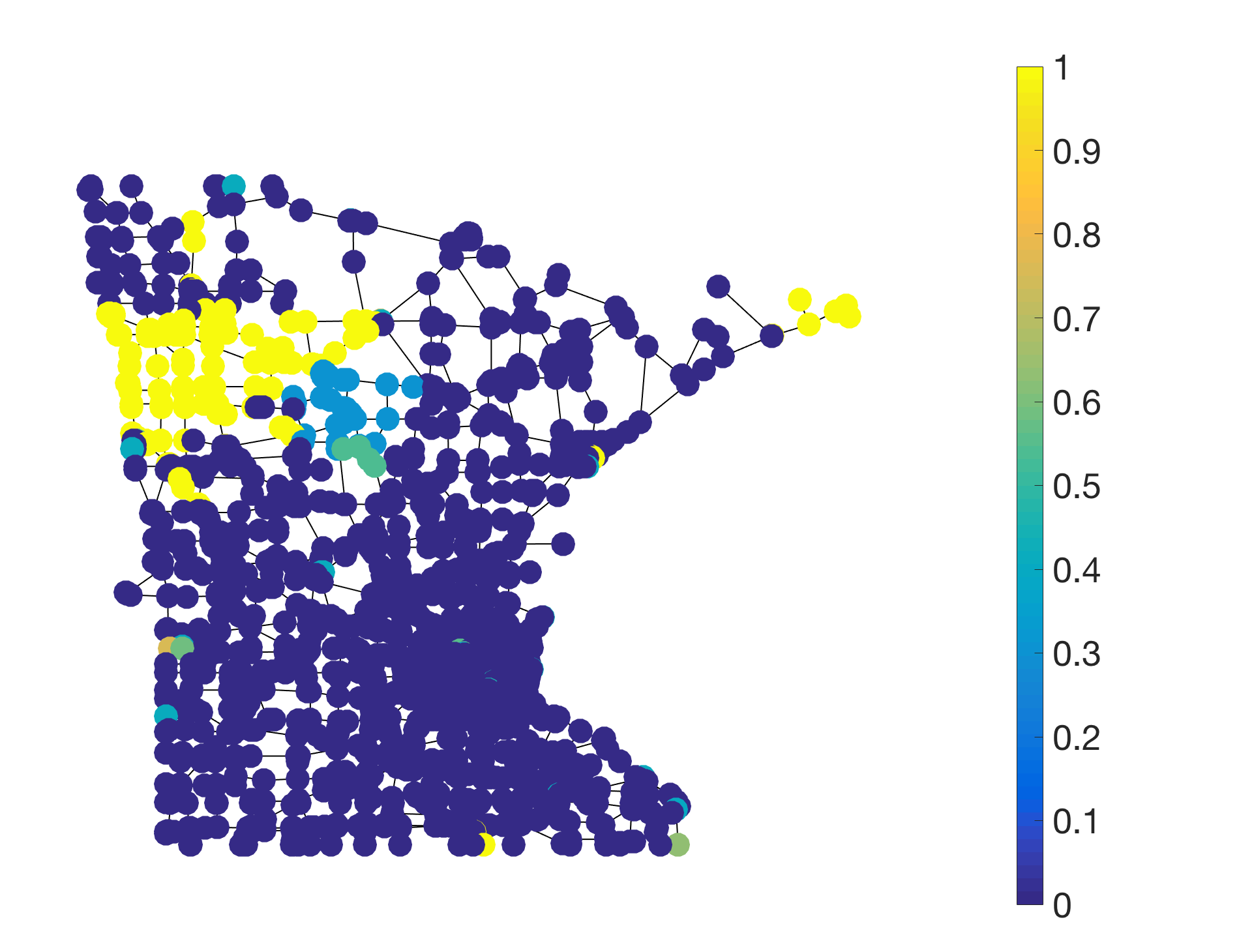} 
   \\
  {\small (a) Original attribute.}  &   {\small (b) Attribute under $H_1^N$.}   &
 {\small (d) Activated basis vector}  &    
    {\small (f) Activated region}  &   {\small (h) Activated region }   
    \\
    &  &  (wavelet) under $H_1^N$. & (LGSS) under $H_1^N$.  &  (CGSS) under $H_1^N$.
  \\
   &
     \includegraphics[width=0.35\columnwidth]{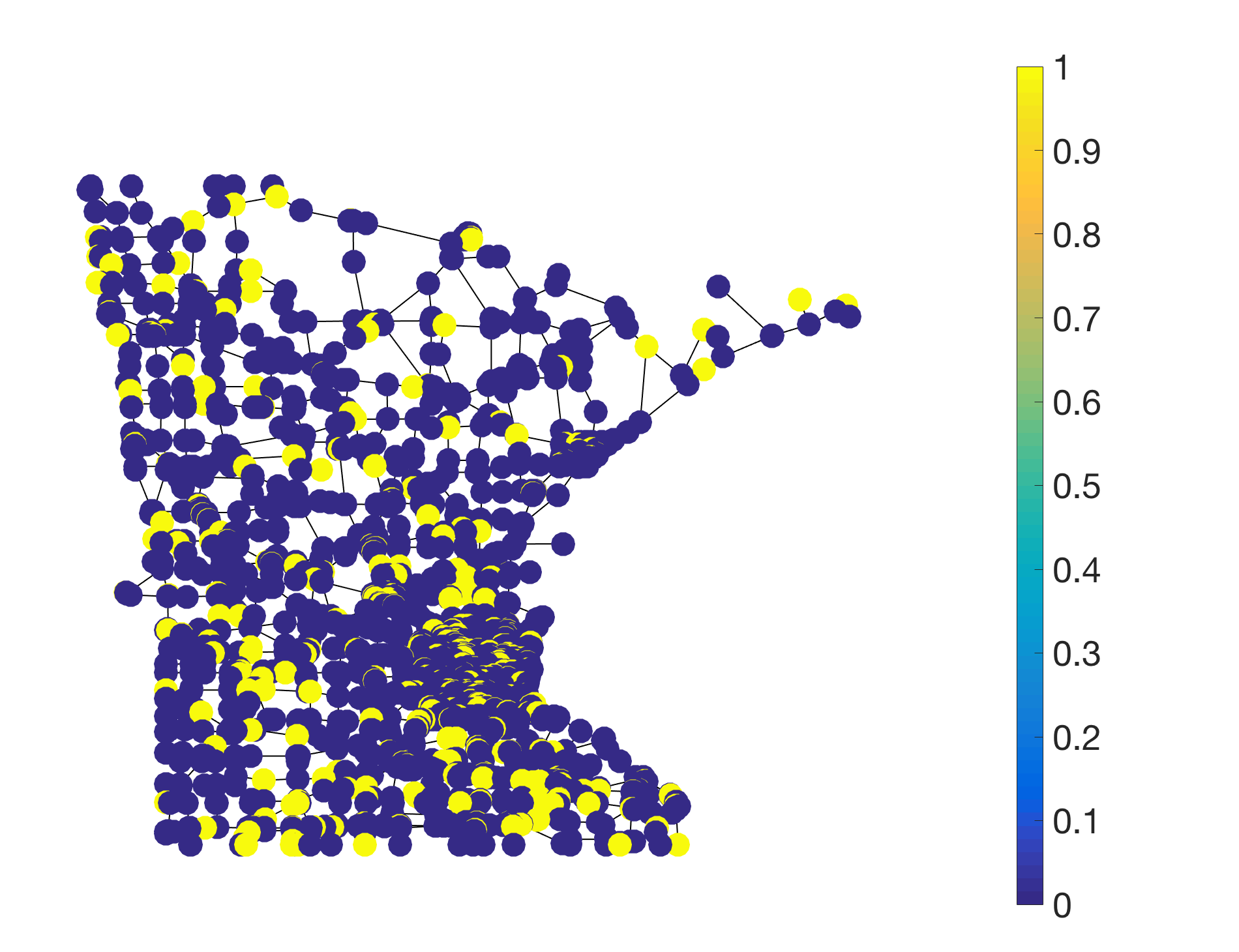} 
   &
 \includegraphics[width=0.35\columnwidth]{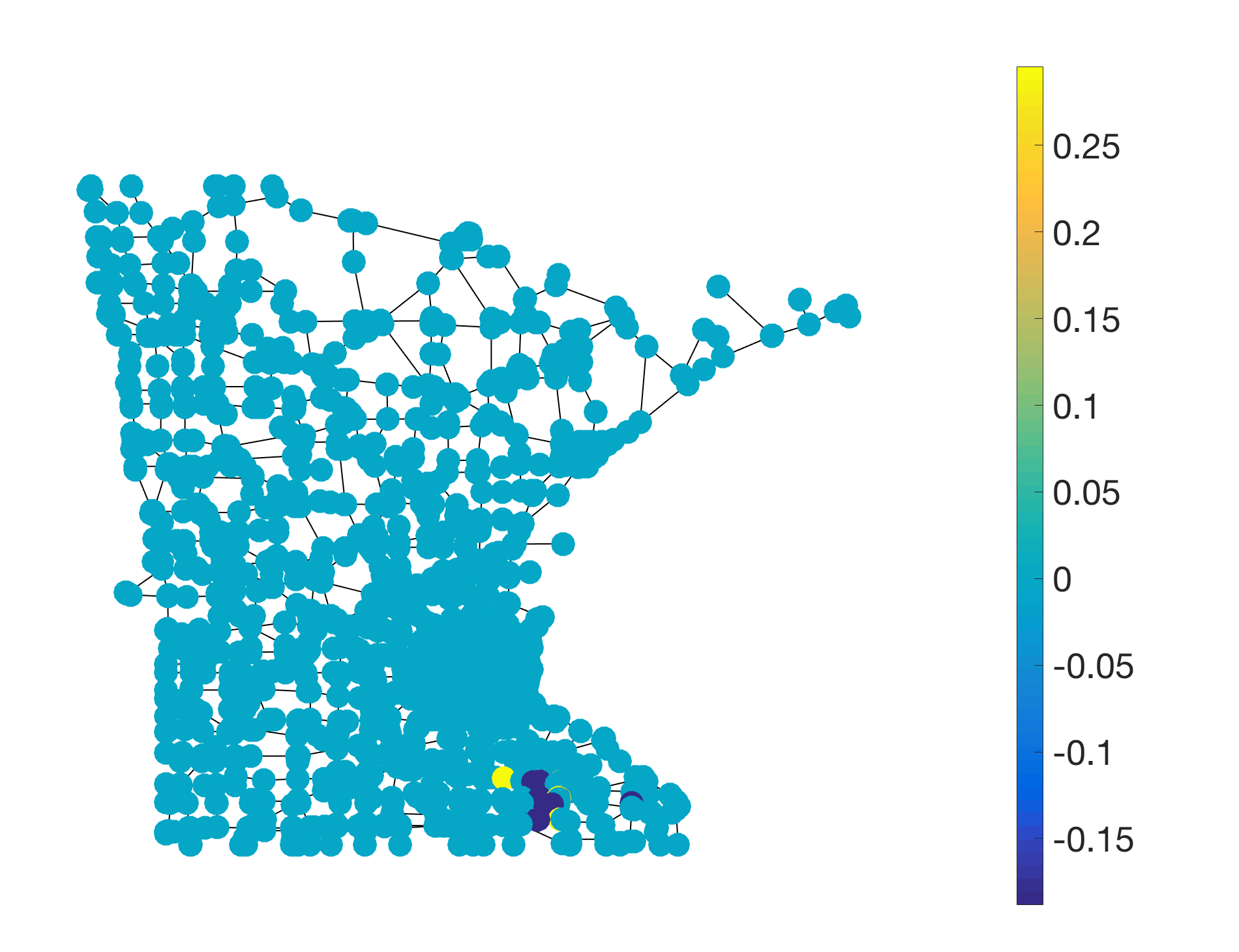} 
   &
    \includegraphics[width=0.35\columnwidth]{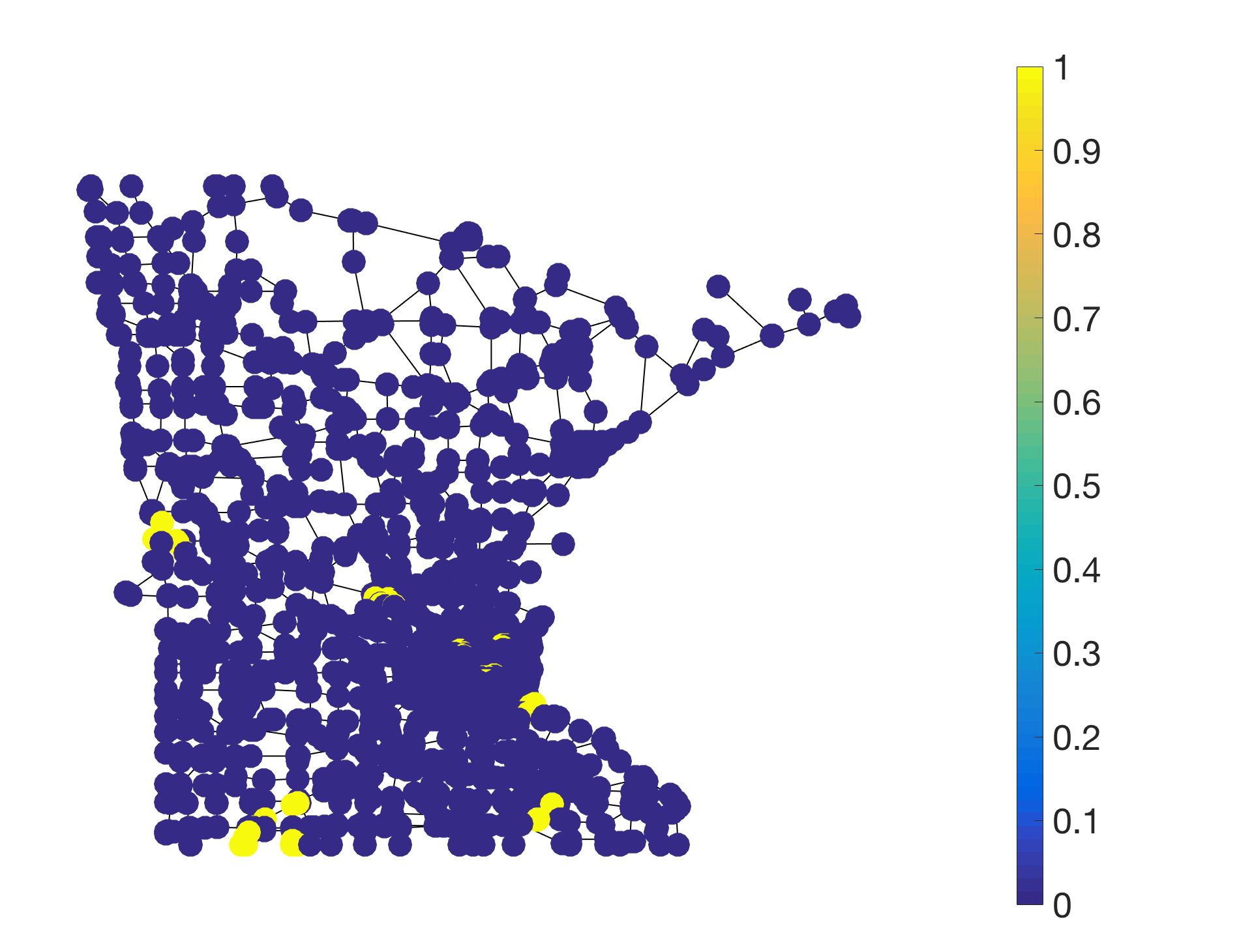} 
   &
 \includegraphics[width=0.35\columnwidth]{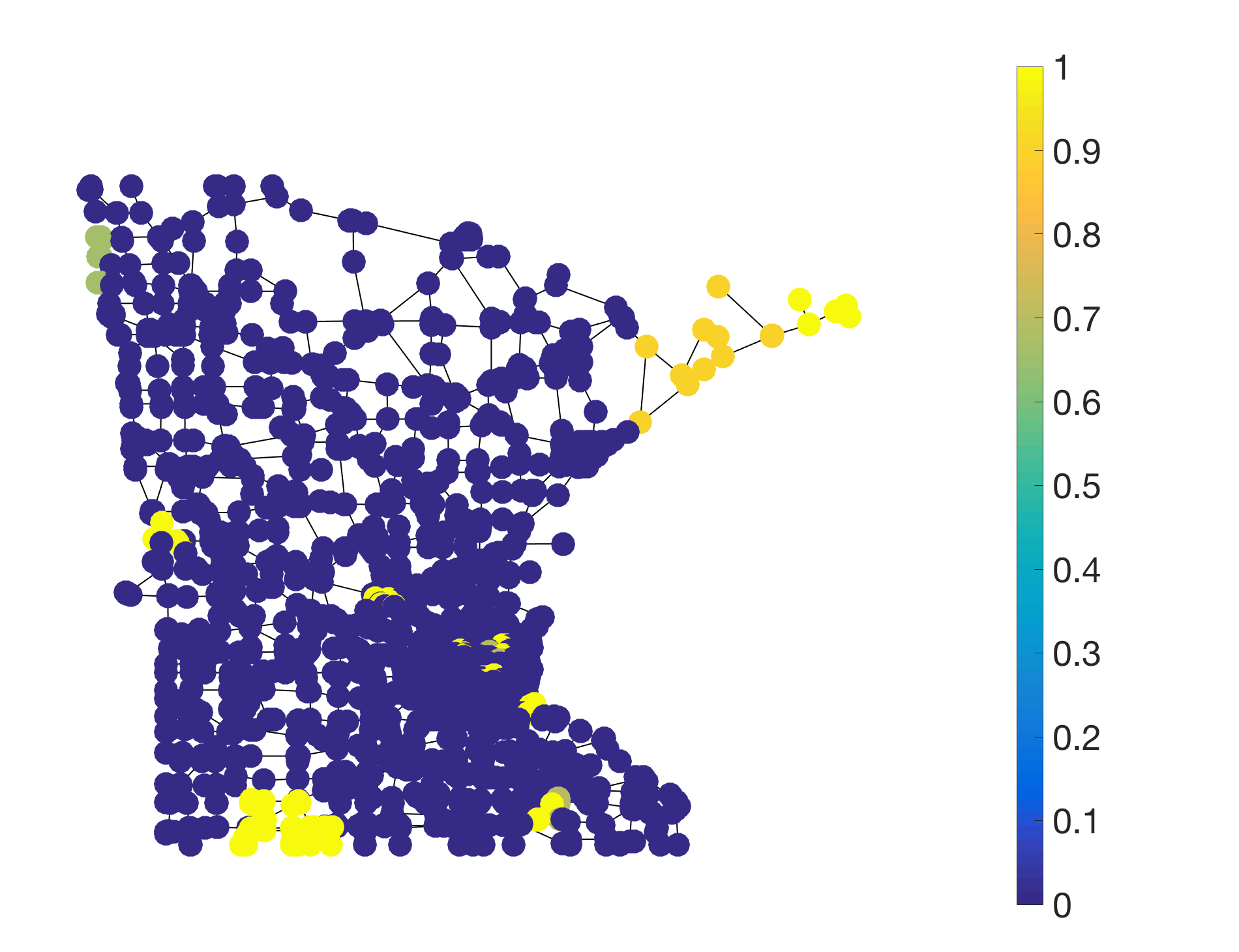} 
   \\
  &   {\small (c) Attribute under $H_0^N$.}   &
 {\small (e) Activated basis vector}  &    
    {\small (g) Activated region}  &   {\small (i) Activated region }   
    \\
    &  &  (wavelet) under $H_0^N$. & (LGSS) under $H_0^N$.  &  (CGSS) under $H_0^N$.
  \\
\end{tabular}
\end{center}
   \caption{\label{fig:example} Illustration of how the proposed statistics work. Under $H_1^N$, the graph wavelet statistic, local graph scan statistic and convex graph scan statistic denoise the given attribute and localize the true community. Under $H_0^N$,  the graph wavelet basis, local graph scan statistic and convex graph scan statistic cannot localize the true community. The denoising procedure is the key to robustness. Graph wavelet statistic extracts features from original attributes and is a discriminative approach. (d) shows a graph wavelet basis vector corresponding to the maximum absolute value of the graph wavelet coefficients. 
 Graph scan statistic recovers denoised attributes and is a generative approach. (f) and (h) show the activated regions recovered by graph scan statistics.  For CGSS, due to the convex relaxation, the recovered activated region is not binary.  A higher value of $x_i$ indicates a higher confidence that the $i$th node is activated.
 }
 \vspace{-3mm}
\end{figure*}

\subsection{Simulation Results}
We generate simulated data on the Minnesota road
graph~\cite{MinnesotaGraph} and study how the parameters, including
the activation probability inside the locaized pattern $\mu$, the noise level $\epsilon$ and the
activation size $|C|$, influence the detection performance. The
Minnesota road graph is a standard dataset including 2642 nodes and
3304 undirected edges~\cite{MinnesotaGraph}. We generate two binary
graph signals as follows: we randomly choose one node as a cluster head
and assign all other nodes that are within $k$ steps to the cluster head to an activated node set, where $k$ varies from $6$ to
$12$. Figures~\ref{fig:simulation_phase} (a) and (b) show these two
binary graph signals, where the nodes in yellow indicates the
activated nodes and the nodes in blue indicates the nonactivated
nodes. Using these two binary graph signals as templates, we then
generate two classes of random graph signals: for attributes under
$H_1^N$, each node inside the activated region is activated with
probability $\mu$ and each node outside the activated region is
activated with probability $\epsilon$. Both $\mu$ and $\epsilon$ vary
from $0.05$ to $0.95$ with interval of $0.1$. For each combination of
$\mu$ and $\epsilon$, we generate corresponding attributes under
$H_0^N$, the activation probability for each node is $(\mu
|C| + \epsilon (N- |C|))/N$. We run 100 random tests to compute the
statistics and quantify the performance by the area under the receiver
operating characteristic curve (AUC)~\cite{Bishop:06}. 


Figures~\ref{fig:simulation_phase} (c), (e) and (g) show AUCs of the graph wavelet statistic, the local graph scan statistic (LGSS) and the convex graph scan statistic (CGSS) for the small activated region, where the step $k=6$. For example, each block in Figure~\ref{fig:simulation_phase} (c) corresponds to the AUC of the graph wavelet statistic given a pair of $\mu$ and $\epsilon$. A whiter block indicates a higher AUC and a better performance. Note that when $\mu$ is smaller than $\epsilon$, we did not run the experiments and directly set the corresponding AUC to zero. We see that the graph wavelet statistic has a similar performance with the convex graph scan statistic and both outperform the local graph scan statistic. Figures~\ref{fig:simulation_phase} (d), (f) and (h) show AUCs of the graph wavelet statistic, the local graph scan statistic and the convex graph scan statistic for the large activated region, where the step $k=12$. We see that the convex graph scan statistic perform the best and the graph wavelet statistic has a slightly better performance than the local graph scan statistic.  Comparing the results from two activated regions (left column versus right column in Figure~\ref{fig:simulation_phase}),  we see that all the methods perform better when the activated region is large. For example, both Figures~\ref{fig:simulation_phase}  (c) and (d) use graph wavelet statistic. Given a fixed pair of $\mu$ and $\epsilon$, a large activated region has a larger AUC, indicating higher probability to be detection.

To have a clearer understanding of how the proposed statistics work, we set the activation probability $\mu = 0.35$ and the noise level $\epsilon = 0.15$.  Figures~\ref{fig:example} (b) and (c) show the attribute under $H_1^N$ and $H_0^N$ given the ground-truth activated region in Figure~\ref{fig:example} (a).  When we compare the attributes under $H_1^N$ and $H_0^N$, it is clear that distinguishing $H_1^N$ from $H_0^N$ is not trivial.


 Figures~\ref{fig:example} (d), (f) and (h) compare the activated regions detected by graph wavelet basis, local graph scan statistic and convex graph scan statistic under $H_1^N$. The graph wavelet basis compares the average values between the nodes in yellow and the nodes in blue. When the difference is large, the activated region is detected. Ideally, we want all the nodes in blue are activated and all the nodes in yellow are nonactivated. Considering the graph wavelet basis is designed before obtaining any data, it captures the activated region fairly well. As expected, local graph scan statistic and convex graph scan statistic perform similarly and capture the activated region well. We also show the noisy attribute and the activated regions detected by graph wavelet basis, local graph scan statistic and convex graph scan statistic under $H_0^N$ in Figures~\ref{fig:example} (c), (e), (g) and (i). The graph wavelet basis, local graph scan statistic and convex graph scan statistic cannot detect regions that are close to the true activated region from the pure noisy attribute.

\begin{table}[htbp]
  \footnotesize
  \begin{center}
    \begin{tabular}{@{}lll@{}}
      \toprule
   &  Attribute under $H_0^N$  &  Attribute under $H_1^N$   \\
   &  Figure~\ref{fig:example} (b)  &  Figure~\ref{fig:example} (c) \\
         \midrule \addlinespace[1mm]
Activated nodes &  $\phantom{+}422$ &   $\phantom{+}420$  \\
Modularity &  $\phantom{+}9.1228$  &   $\phantom{+}1.1422$  \\
Cut cost &  $\phantom{+}845$  &  $\phantom{+}887$  \\
Wavelet &  $\phantom{+}1.50$  &  $\phantom{+}2.14$  \\
LGSS &  $\phantom{+}57.58$  &  $\phantom{+}68.12$  \\
CGSS &  $\phantom{+} 42.97$  &  $\phantom{+}72.33$  \\
\bottomrule
\end{tabular} 
\caption{\label{tab:example}  Facts about the data in Figures~\ref{fig:example} (b) and (c).}
\vspace{-3mm}
\end{center}
\end{table}

Table~\ref{tab:example} shows some facts about data in Figures~\ref{fig:example} (b) and (c), including modularity, cut cost, graph wavelet statistic, graph scan statistic and convex graph scan statistic. Modularity is a popular metric to measure the strength of communities~\cite{Newman:10}. Networks with high modularity have dense connections within communities but sparse connections in different communities. Mathematically, the modularity of a binary attribute $\one_C \in \R^N$ is\footnote{We drop a constant factor $M$ here.}
\begin{equation*}
{\rm Modularity} \ = \ \sum_{i,j} \left(  \Adj_{i,j} - \frac{d_i d_j}{M} \right) (\one_{C})_i (\one_{C})_j,
\end{equation*}
where $d_i$ is the degree of the $i$th node, $M = \sum_i d_i$ is the total number of edges, and $(\one_{C})_i = 1$ when the $i$th node is activated; otherwise, $(\one_{C})_i = 0$. A large modularity means the activated nodes are strongly connected. 

Graph cuts measure the cost to separate a community from the other nodes. Mathematically, the cut cost of the binary attribute $\one_C \in \R^N$ is
$
{\rm Cut} \ = \ \TV_0 (\one_{C} ).
$
A small cut cost means the activated nodes are easily separated from the nonactivated nodes. 

We expect that under $H_1^N$, modularity is larger, indicating dense internal connections, and the cut cost is smaller, indicating few external connections. From Table~\ref{tab:example}, however, we see that attributes under $H_0^N$ and $H_1^N$ contain similar number of activated nodes, the modularity under $H_1^N$ is smaller than the modularity under $H_0^N$, indicating the activated nodes under $H_0$ have even stronger internal connections.  The cut cost under $H_0^N$ is smaller than the cut cost under $H_1^N$, indicating the activated nodes under $H_0^N$ are easier to be separated from the nonactivated nodes. It is clear that both modularity and number of cuts fail when the noise level is high. On the other hand, the graph wavelet statistic, local graph scan statistic and convex graph scan statistic under $H_1^N$ are much higher than those under $H_0^N$, indicating these three proposed statistics succeed even when the noise level is high. Graph wavelet statistic is robust because it selects a useful feature by using the graph wavelet basis. The graph scan statistic also is robust is because it localizes the true activated region first, which is equivalent to denoise the attribute based on the graph structure. Based on the denoised attribute, we compute the statistic values and the results are more robust. In other words, graph wavelet statistic extracts features from original attributes and is a discriminative approach to detect and  graph scan statistic recovers a denoised attributes and is a generative approach.

In terms of the computational complexity, for each random test, it takes around $30$ seconds to construct the graph wavelet basis, around $0.01$ seconds to calculate the graph wavelet statistic, around $5$ seconds to calculate the local graph scan statistic, around $10$ seconds to calculate the convex graph scan statistic. Overall, the proposed statistics provide efficient and effective performances. 

\begin{figure*}[p]
  \begin{center}
    \begin{tabular}{cccc}
      \includegraphics[width=0.35\columnwidth]{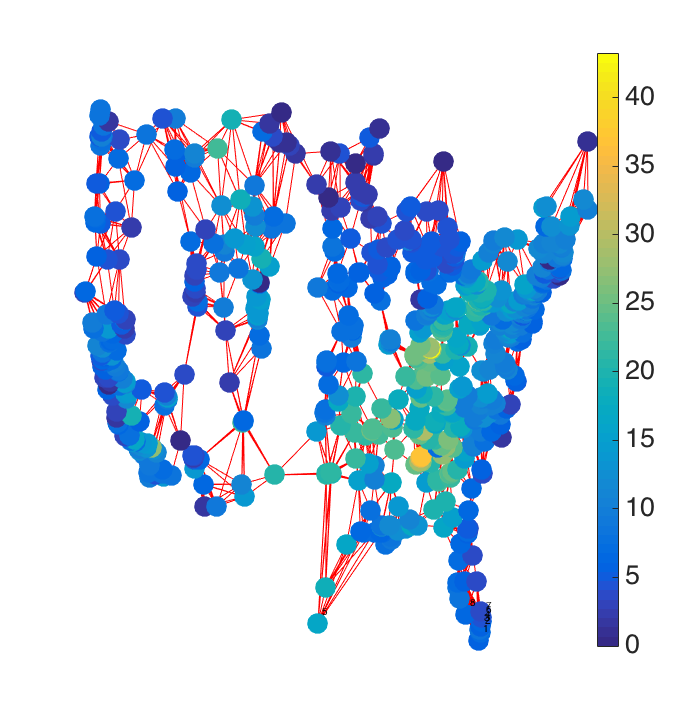} 
      &
      \includegraphics[width=0.35\columnwidth]{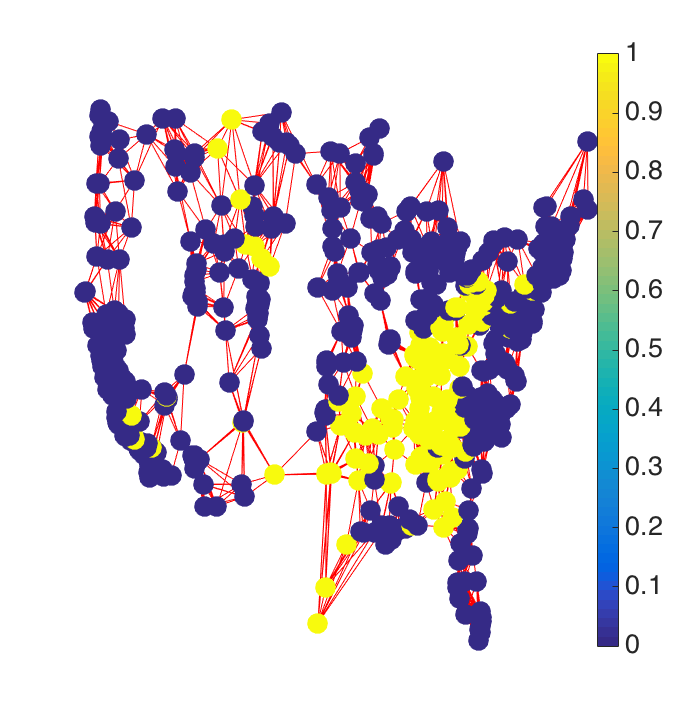} 
      &
      \includegraphics[width=0.35\columnwidth]{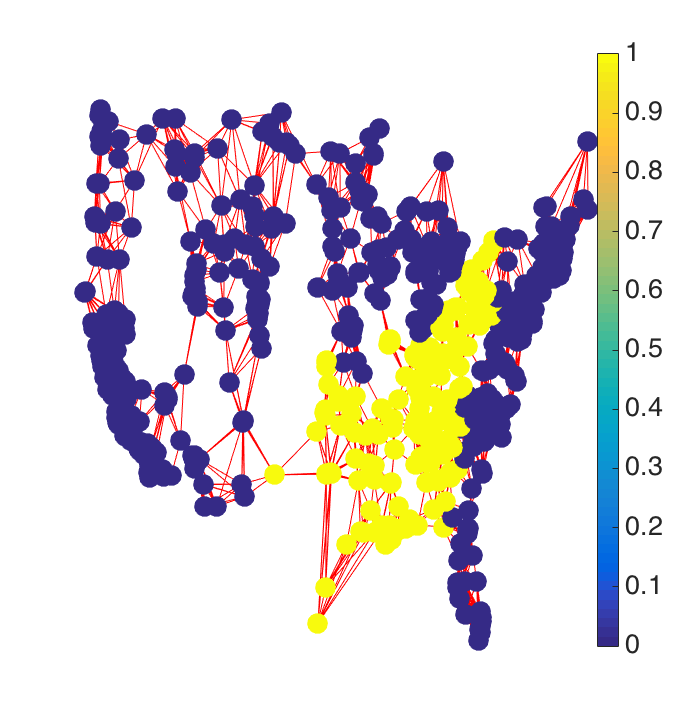} 
      &
      \includegraphics[width=0.35\columnwidth]{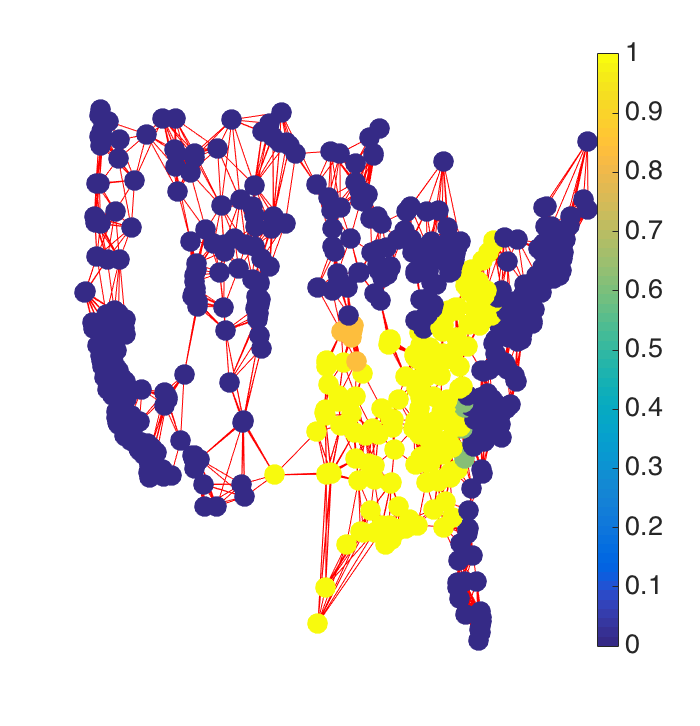} 
      \\
      {\small (a) Original data.}  &   {\small (b) Input attribute ($>15$).}   &
      {\small (c) Localized attribute detected by LGSS.}  &   {\small (d) Localized attribute detected by CGSS.}
      \\
      \\
      &
      \includegraphics[width=0.4\columnwidth]{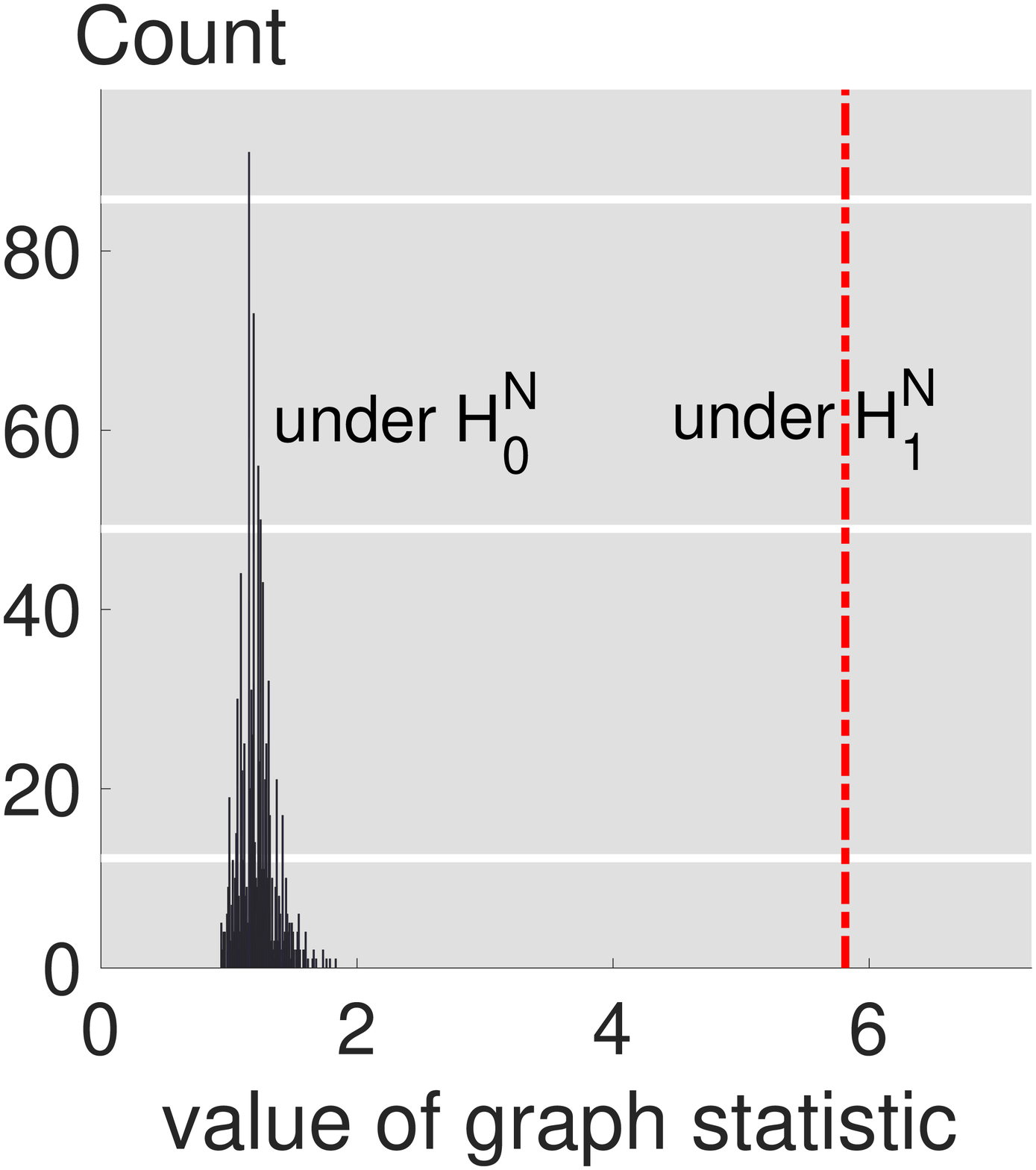}
      &
      \includegraphics[width=0.4\columnwidth]{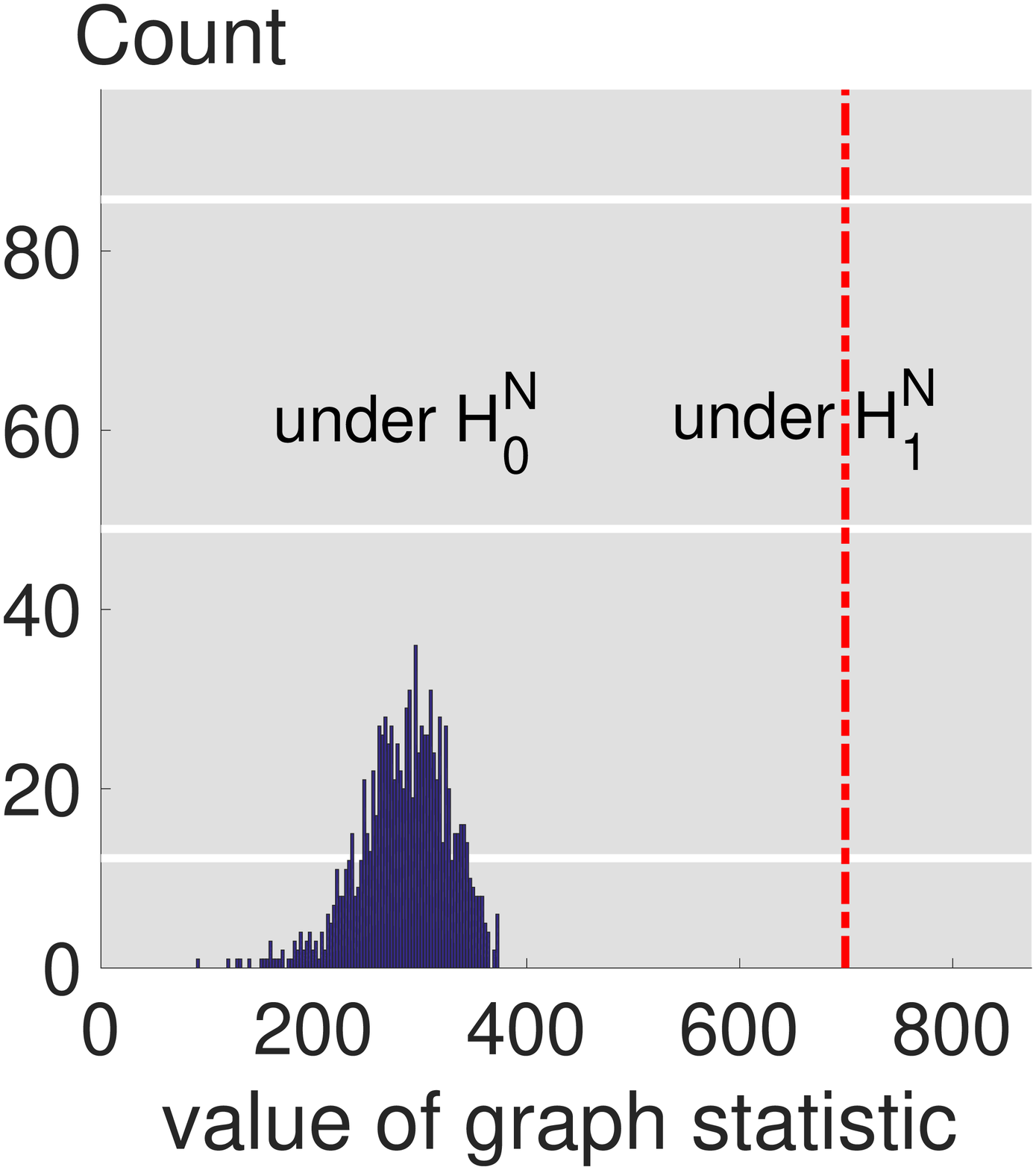}
      &
      \includegraphics[width=0.4\columnwidth]{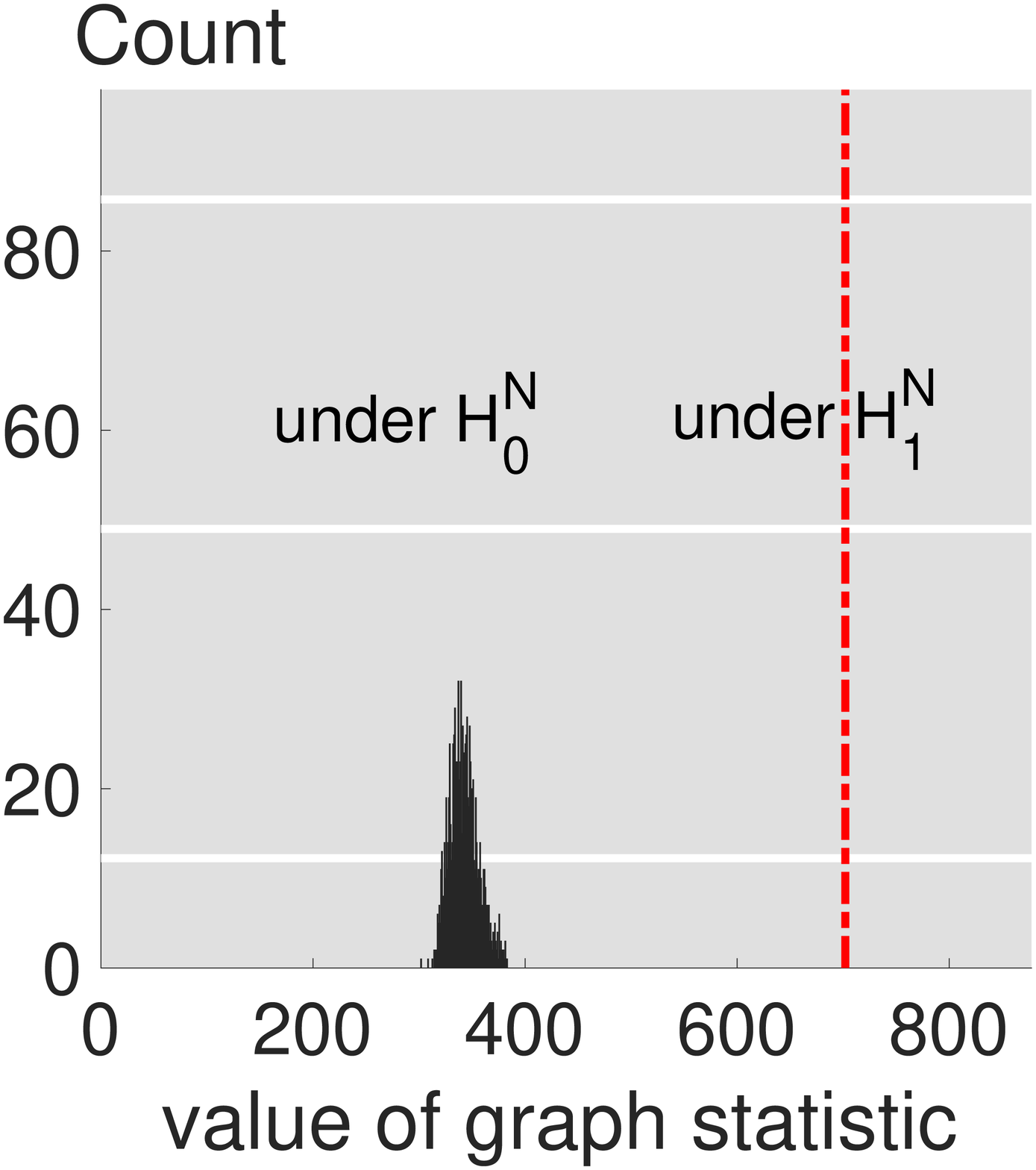}
      \\
      & {\small (e) Graph wavelet statistic.}  &  {\small (f) LGSS.}  &
      {\small (g) CGSS.}  
      \\
\end{tabular}
  \end{center}
  \caption{\label{fig:air_detection_july} Detecting the high-pollution
    region on July 1st, 2014. (a) Original data. (b) High-pollution
    cities (in yellow). (c)--(d) High-pollution regions recovered by
    the graph scan statistics. (e)--(g) Detection of high-pollution regions
    from random attributes. For each plot, the red dashed line shows
    the value of the graph statistic for the real pollution graph
    signal from  (b) and the black curves show the
    empirical histograms of the graph statistics under 1,000 random
    trials. 
  }
\end{figure*}

\begin{figure*}[p]
  \begin{center}
    \begin{tabular}{cccc}
      \includegraphics[width=0.35\columnwidth]{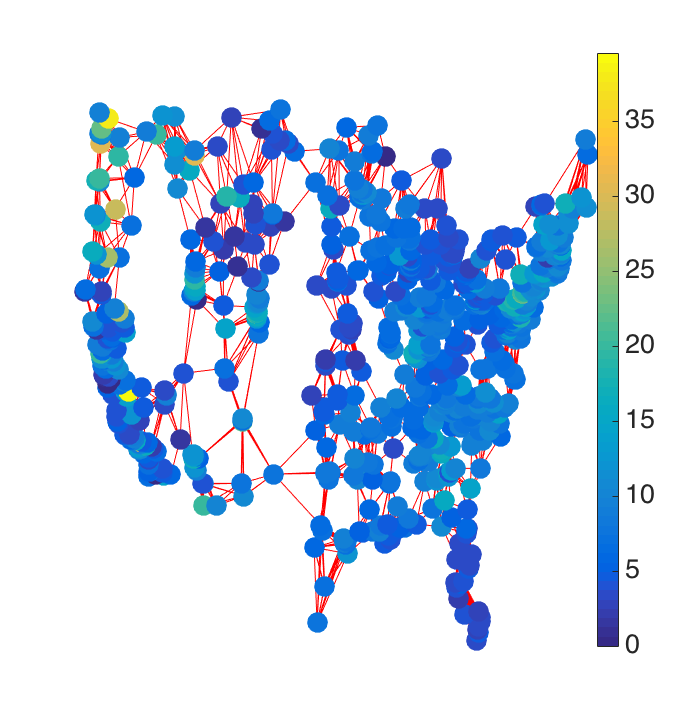} 
      &
      \includegraphics[width=0.35\columnwidth]{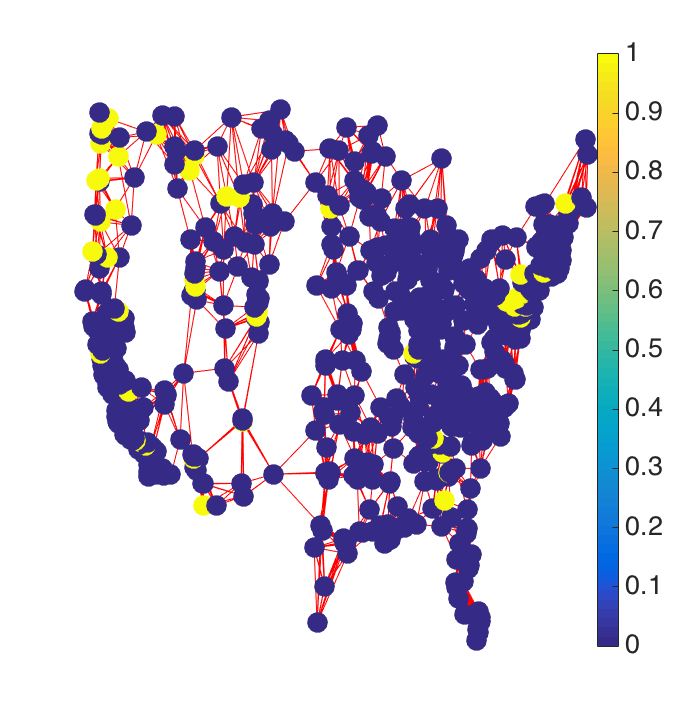} 
      &
      \includegraphics[width=0.35\columnwidth]{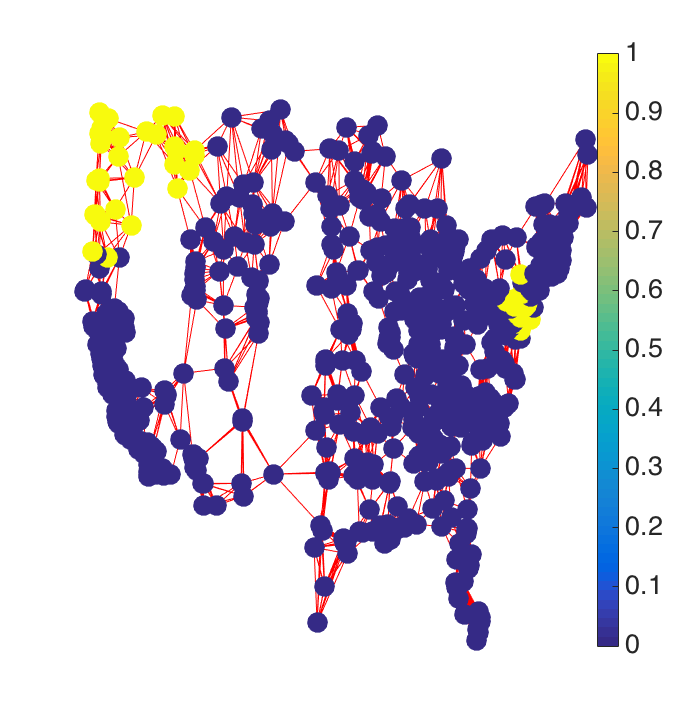} 
      &
      \includegraphics[width=0.35\columnwidth]{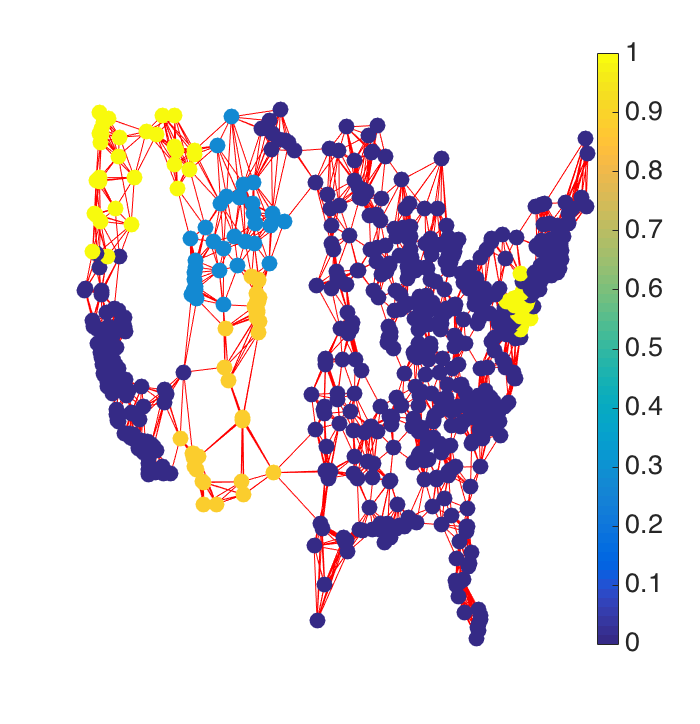} 
      \\
      {\small (a) Original data.}  &   {\small (b) Input attribute ($>15$).}   &
      {\small (c) Localized attribute detected by LGSS.}  &   {\small (d) Localized attribute detected by CGSS.}
      \\
      \\
      &
      \includegraphics[width=0.4\columnwidth]{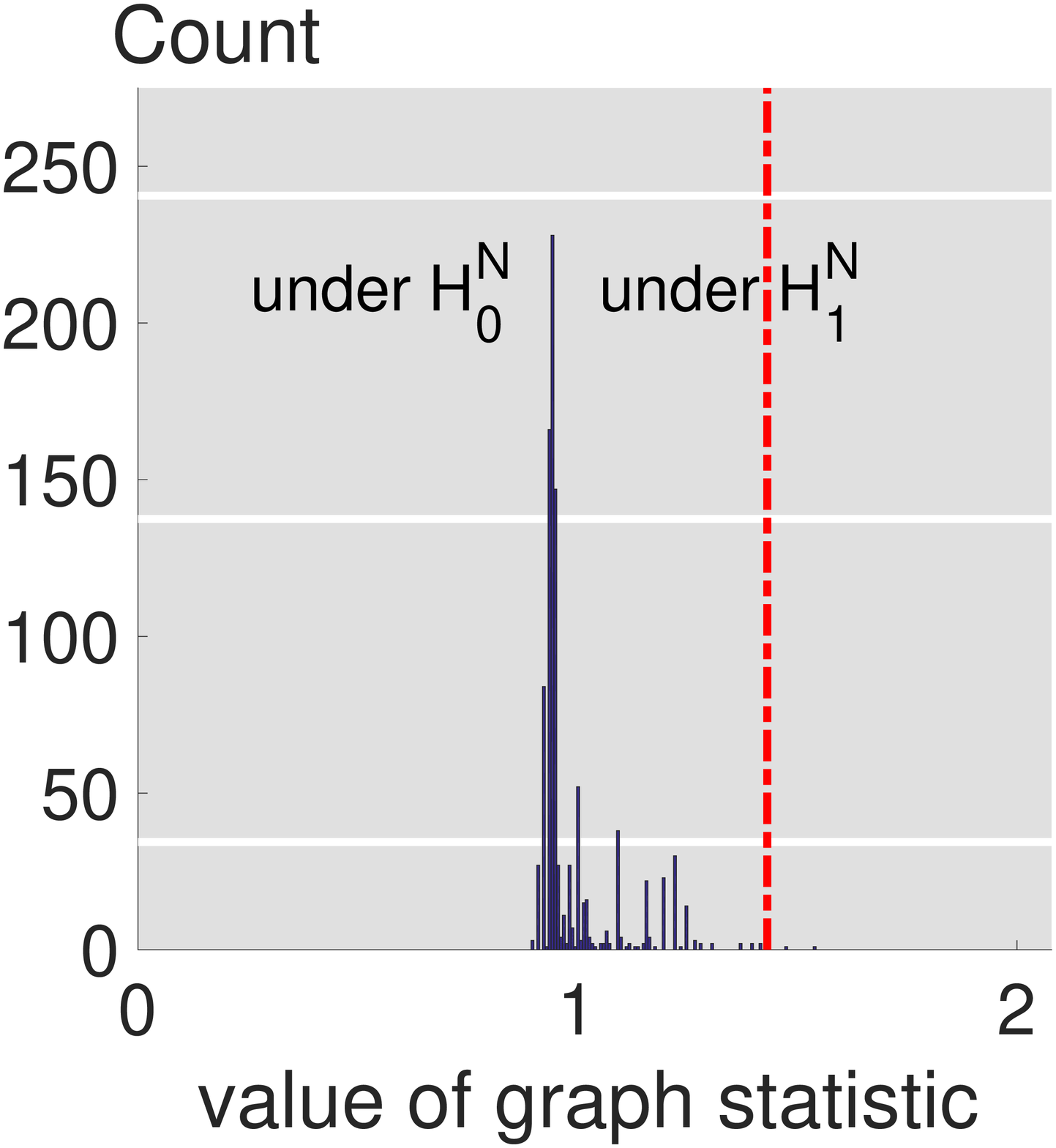}
      &
      \includegraphics[width=0.4\columnwidth]{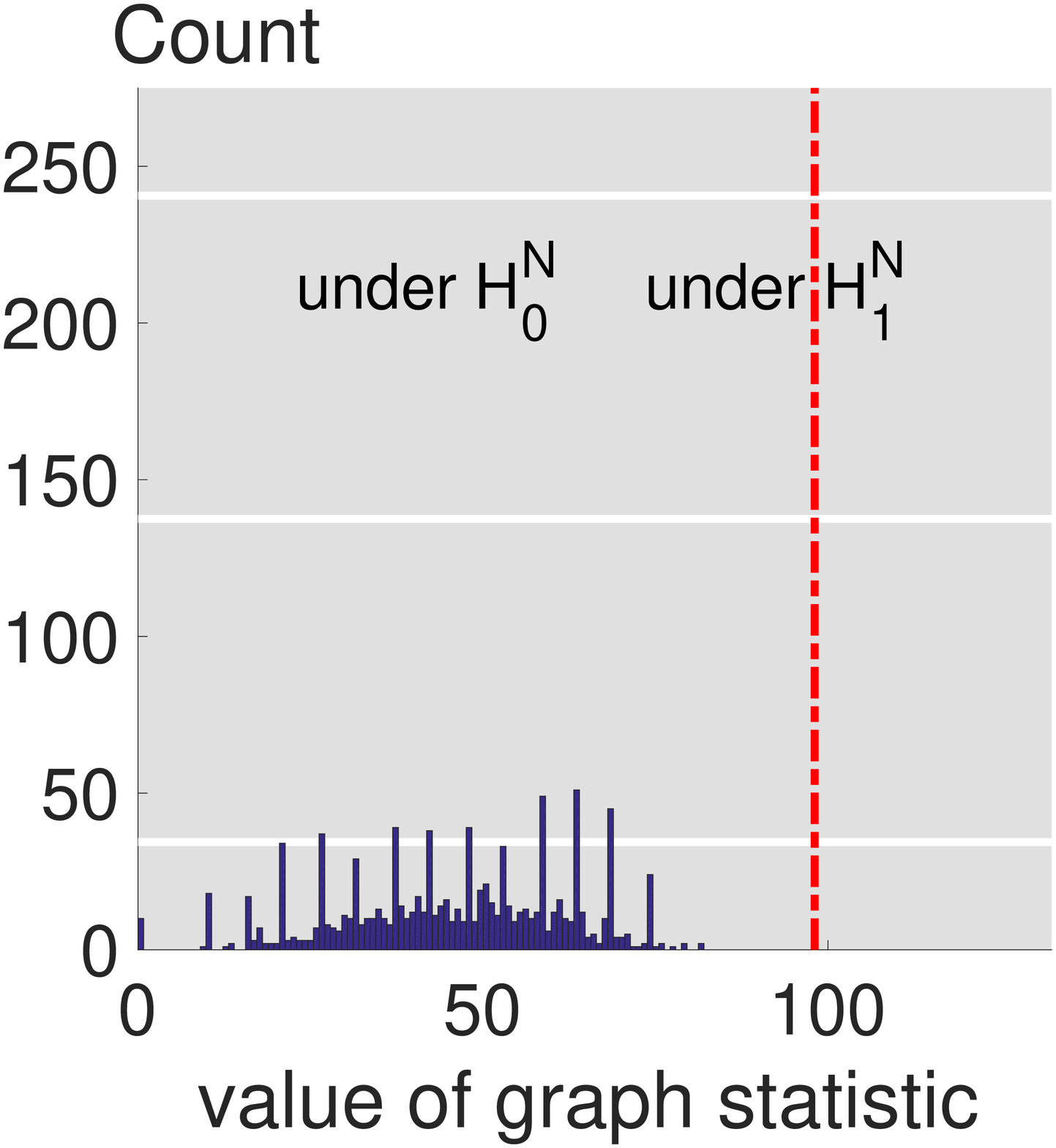}
      &
      \includegraphics[width=0.4\columnwidth]{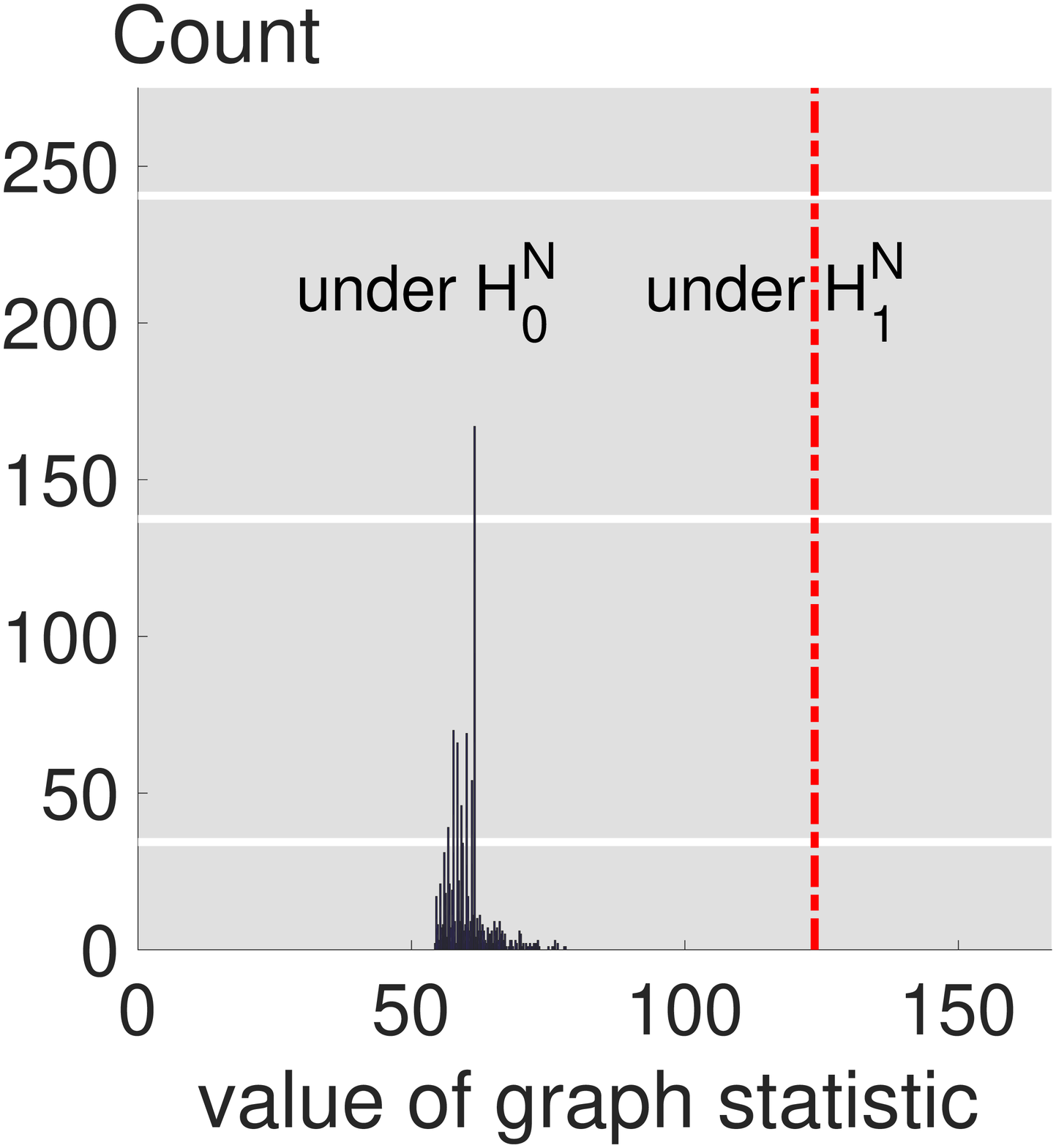}
      \\
      & {\small (e) Graph wavelet statistic.}  &  {\small (f) LGSS.}  &
      {\small (g) CGSS.} 
    \end{tabular}
  \end{center}
  \caption{\label{fig:air_detection_dec} 
Detecting the high-pollution
    region on December 1st, 2014. (a) Original data. (b) High-pollution
    cities (in yellow). (c)--(d) High-pollution regions recovered by
    the graph scan statistics. (e)--(g) Detection of high-pollution regions
    from random attributes. For each plot, the red dashed line shows
    the value of the graph statistic for the real pollution graph
    signal from  (b) and the black curves show the
    empirical histograms of the graph statistics under 1,000 random
    trials.  
  }
\end{figure*}

\subsection{High Air-Pollution Detection}
The first real-world example is on air-pollution detection; we are
interested in particle pollution as indicated by the fine particulate
matter (PM 2.5), particles that are 2.5 micrometers in diameter or
smaller and can only be seen with an electron microscope. These tiny
particles are produced from all types of combustion, including motor
vehicles, power plants, residential wood burning, forest fires,
agricultural burning, and some industrial processes. High PM 2.5
is linked to increased mortality rate for patients suffering from heart
and lung disease~\cite{ZhangC:15}. We aim to provide an efficient and
effective approach for detecting high PM 2.5 regions, which can guide
authorities in designing remedial measures.

The dataset comes from~\cite{AirCondition} and includes 756 operating
sensors that record the daily average at various locations;
Figure~\ref{fig:air_detection_july}(a) shows the PM 2.5 distribution
on July 1st, 2014, in the mainland U.S.  We construct a ten-nearest
neighbor graph with each sensor a node connecting to ten neighboring
sensors. Figure~\ref{fig:air_detection_july}(b) shows the input attribute obtained by thresholding the measurements above 15 (high-pollution cities, marked in yellow).

When high-pollution cities are far from each other, high pollution may
be caused by random events or measurement failures, which makes the
detection sensitive to noise. When high-pollution cities are clustered
together, high pollution is prevalent in the area, which makes the
detection robust. Using detection algorithms here aims to answer
whether high-pollution cities are clustered and provides a more robust
high-pollution detector. In Figure~\ref{fig:air_detection_july}(b),
high-pollution cities seem to concentrate in the mideast part of the
U.S. We now verify whether this claim is true by using the graph
wavelet statistic, graph scan statistic and convex graph scan
statistica.  To make a comparison, we also simulate 1,000 attributes
(graph signals) that have the same number of high-pollution cities,
but are scattered across the
U.S. Figures~\ref{fig:air_detection_july}(e)--(g) show the
values of the graph statistics. For each plot, the red dashed line
shows the value of the graph statistic for the real pollution
attribute as shown in Figure~\ref{fig:air_detection_july}(b) and the
black curves show the empirical histograms of the graph statistics
under $1,000$ random trials.  We see that the values of the graph
statistics of the real attributes are always much larger than those of
the scattered simulated attributes for all three statistics, which
means that it is easy to reject the null hypothesis and confirm that
the high-pollution cities in Figure~\ref{fig:air_detection_july}(b)
form a local cluster. Figures~\ref{fig:air_detection_july}(c)--(d)
show the localized attributes detected by the graph scan statistic and
the convex graph scan statistic, respectively. We see that these two
detected localized attributes are similar and confirm that the
mideast region has relatively high pollution.

We did the same experiments for the data collected on December 1st,
2014, as shown in Figure~\ref{fig:air_detection_dec}(a). The dataset
includes 837 operating sensors (the operating sensors are different
every day) and we still construct a ten-nearest neighbor graph. We see
that high-pollution cities are more scattered, though some of them
seem to cluster in the northwestern
corner. Figures~\ref{fig:air_detection_dec}(e)--(g) show that the
values of the graph statistics of the real attributes are still larger
than those of the scattered simulated attributes for all three
statistics most of the time, which confirms that high-pollution cities
cluster together. Again, the two detected localized attributes in
Figures~\ref{fig:air_detection_dec}(c)-- and (d) confirm that the
northwestern corner has relatively high pollution.

\subsection{Ranking Attributes for Community Detection}
As discussed in Section~\ref{sec:relatedwork}, while relevant node
attributes improve the accuracy of community detection and add meaning
to the detected communities, irrelevant attributes may harm the
accuracy and cause computational inefficiency. By using the proposed
statistics, we can quantify the usefulness of each attribute.  As
localized attributes tend to be related to community structure, our
methods can serve to filter out the most useful attributes for
community detection.

As a dataset, we use the IEEE Xplore database to find working
collaborators~\cite{IEEEX}. We construct three
bipartite networks: papers and journals, papers and authors and papers
and keywords (keywords are automatically assigned by IEEE). We focus
on papers in ten journals: IEEE Transactions on Magnetics, Information
Theory, Nuclear Science, Signal Processing, Electron Devices,
Communications, Applied Superconductivity, Automatic Control,
Microwave Theory and Techniques and Antennas and Propagation.

We project the bipartite network of papers and authors onto authors to
create a co-authorship network where two authors are connected when
they co-author at least four papers. We keep the largest connected
component of the co-authorship network. As explained in more detail
below, we project the network of papers and journals and the network
of papers and authors onto the authors in the largest connected
component to create the author-journal matrix, where rows denote
authors and columns denote journals.  We project the network of papers
and keywords and a network of papers and authors onto the authors in
the largest connected component to create the author-keyword matrix,
where rows denote authors and columns denote keywords.

The entire dataset includes the co-authorship network with 7,330
authors (nodes) and 108,719 co-authorships (edges), the
author-journal matrix with ten journals and the author-keyword matrix
with 3,596 keywords (attributes).
We want to detect academic communities (defined based on the journal)
based on the graph structure and attributes. Since our attributes are
keywords, we use those to improve community detection; for example, in
the signal processing community, some frequently-used keywords are
`filtering', `Fourier transform' and `wavelets'. Our ground truth
communities are the ten journals; when authors publish at least ten
papers in the same journal, we assign them to an eponymous community.

\begin{table*}[htbp]
  \footnotesize
  \begin{center}
    \begin{tabular}{@{}llll@{}}
      \toprule
 Modularity-based ranking &    Cut-based ranking  &  Graph wavelet statistic based ranking &  Local graph scan statistic based ranking \\
      \midrule \addlinespace[1mm]
Dielectrics  &   Data analysis    &  Subtraction techniques  & Maximum likelihood detection \\
Next generation networking  &  Parasitic capacitance  &  Mice   &  Upper bound \\
Undulators   &  Alpha particles  &  Parallel architectures   &  Antenna arrays \\
Pathology & Strips  & Integrated circuit interconnections   &  Biological information theory  \\
Servosystems   & Customer relationship management  &  Yttrium barium copper oxide & Interchannel interference  \\
Electronic design automation &  Biology computing  &  Dielectrics  & Microscopy  \\
Subtraction techniques   &  Uncertainty  & Distributed Bragg reflectors  & Signal analysis  \\
Optical noise  &  Piecewise linear techniques  &  Fuel economy  & Broadcasting  \\
Implants  & Forensics  & Magnetic analysis  & Array signal processing  \\
Biomedical signal processing  &   Oceans   &  Tactile sensors   &  Time-varying systems \\
\bottomrule
\end{tabular} 
\caption{\label{tab:topuniversity}  Top ten most important keywords that potentially form academic communities.}
\vspace{-2mm}
\end{center}
\end{table*}

The goal is to rank all keywords based on their contribution to
community detection, where the value of the corresponding statistic is
used to determine the rank.  We consider four ranking methods: graph
wavelet statistic based ranking, local graph scan statistic based
ranking, modularity-based ranking~\cite{Newman:10} and
cut-based ranking~\cite{Luxburg:07}.  We did not use the convex graph scan statistic due
to computational cost. For the first two ranking methods, we compute
the values of the graph statistics and rank the keywords according to
the values of the graph statistics in a descending order. This is
because a larger statistic means a larger probability that this
keyword forms a community. For the modularity-based ranking, we
compute the modularity of each keyword and rank the keywords according
to modularity in a descending order. For the cut-based ranking, we
compute the cut cost of each keyword and rank the keywords according
to the cut cost in a ascending order. Table~\ref{tab:topuniversity}
lists the top ten most important keywords that potentially form
communities provided by the above four rankings.

To quantify the real community detection power of keywords, we compare
each keyword to the ground-truth community and compute the
correspondence by using the average F1
score~\cite{YangL:13,YangML:13}. Let $C^*$ be a set of the
ground-truth communities and $\widehat{C}$ a set of the activated node
sets provided by the node attributes. Each node set $\widehat{C}_i \in
\widehat{C}$ collects the nodes that have the same attribute. The
average F1 score is
\begin{equation*}
  \frac{1}{2|C^*|} \sum_{C_i \in C^*} \text{F1} (C_i, \widehat{C}_{g(i)}) + \frac{1}{2|\widehat{C}|} \sum_{\widehat{C}_i \in \widehat{C}} \text{F1} (\widehat{C}_{g'(i)}, \widehat{C}_i),
\end{equation*}
where the best matching $g$ and $g'$ are 
\begin{equation*}
  g(i) \ = \  \arg \max_j \text{F1}(C_i, \widehat{C}_j)~\text{and}~g'(i) = \arg \max_j \text{F1}(C_j, \widehat{C}_i), 
\end{equation*}
where $\text{F1}(C_i, \widehat{C}_j)$ is the harmonic mean of
precision and recall. A large average F1 score means that the
community induced by a keyword agrees with the community induced by
journal papers. We also compute the average F1 score of each keyword
and rank the keywords according to the average F1 scores in a
descending order, which is the ground-truth ranking.  We compare the
four estimated rankings with the ground-truth ranking by using the
Spearman's rank correlation coefficient~\cite{MyersWL:10}. The
Spearman correlation coefficient is defined as the Pearson correlation
coefficient between the ranked variables,
\begin{equation*}
  \text{correlation} \ = \  1 - \frac{6 \sum |p_i - q_i|^2}{N(N^2-1)},
\end{equation*}
where $p_i-q_i$ is the difference between two rankings. The Spearman
correlation coefficients of modularity-based ranking, cut-based
ranking, graph wavelet statistic based ranking and local graph scan
statistic based ranking are shown in Figure~\ref{fig:IEEE_rank}. We
see that the graph wavelet statistic and local graph scan statistic
outperform other methods. Cut-based ranking performs poorly because it
may rank infrequent keywords higher. To account for this effect, we
also consider the average modularity---the modularity divided by the
number of activated authors and the average cut cost---the cut cost
divided by the number of activated authors. We see that the average
cut cost performs much better than the total cuts.

\begin{figure}[htb]
  \begin{center}
 \includegraphics[width=0.9\columnwidth]{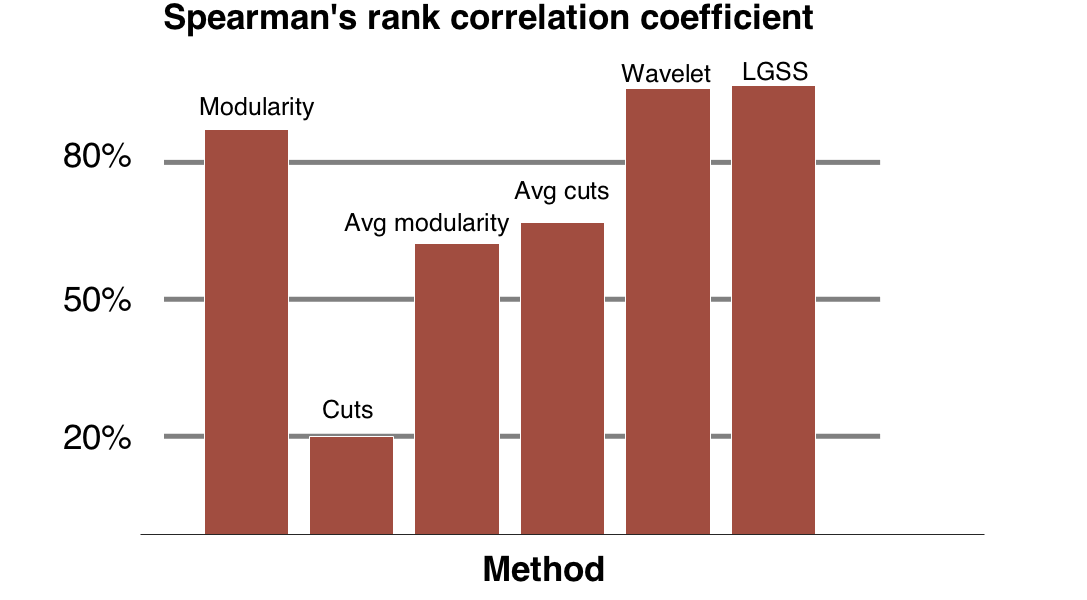} 
	\\
  \end{center}
  \caption{\label{fig:IEEE_rank} Comparison of Spearman's rank
    correlation coefficients; the higher the correlation coefficient,
    the higher correlation to the ground-truth ranking.}
\end{figure}

Figure~\ref{fig:IEEE_aF1} compares the average F1 scores as a function
of individual keywords ordered by the six ranking methods. The
$x$-axis is the ranking provided by the proposed ranking methods and
the $y$-axis is the average F1 score of the corresponding keyword. For
example, since the local graph scan statistic ranks \emph{Maximum
  likelihood detection} first, we put the corresponding average F1
score as the first element on the red curve (leftmost). We expect that
the curve goes down as the rank increases because a good ranking
method ranks the important keywords higher. We also use cluster
affiliation model for big networks (BIGCLAM, shown in black), a
large-scale overlapping community detection algorithm to provide a
baseline~\cite{YangL:13}. We see that the local graph scan statistic
is slightly better than the graph wavelet statistic and both of them
outperform the other methods, which is consistent with the results
given by the Spearman correlation coefficients in
Figure~\ref{fig:IEEE_rank}.  Average cuts rank important keywords
lower, causing the F1 score to increase as the rank decreases. The
average cuts fail because small average cuts may come from just a few
activations. For example, a keyword activating only one author has a
small cut number, although this keyword is actually
trivial. Surprisingly, using high-ranking keywords selected by LGSS
and wavelets works even better than BIGCLAM on the task of community
detection. The reason may be that in this co-authorship network, only
some keywords are informative and strongly related to the journals,
which are the ground-truth communities, while most keywords may
provide trivial or misleading information.

\begin{figure}[htb]
  \begin{center}
    \includegraphics[width=0.75\columnwidth]{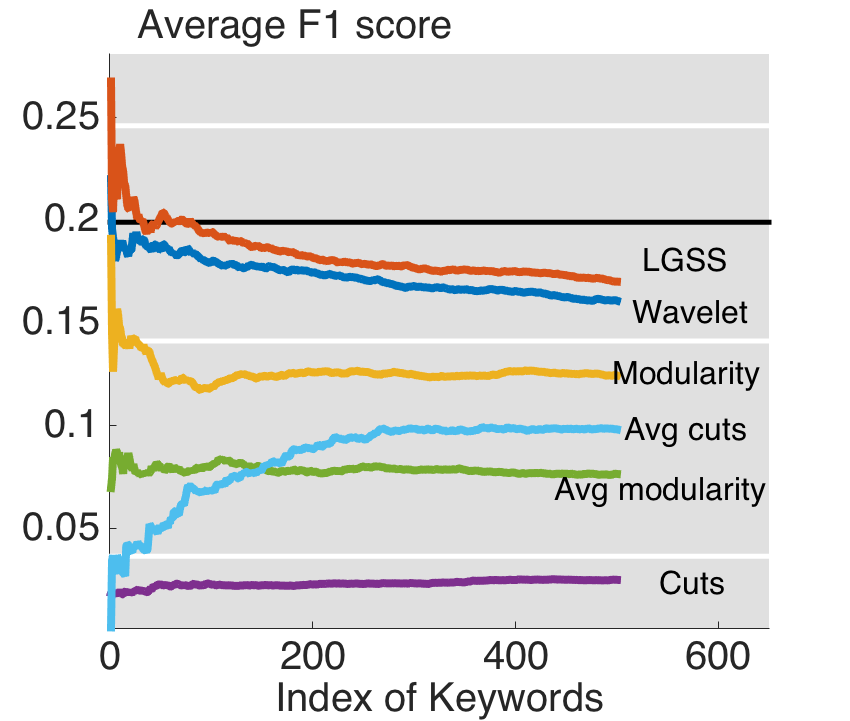} \\
  \end{center}
  \caption{\label{fig:IEEE_aF1} Comparison of the average F1 score as
    a function of the top $k$ ranked keywords; the higher the average
    F1 score, the higher the detection performance by using each
    individual keyword. The black horizontal line shows the
    performance of BIGCLAM, a large-scale overlapping community
    detection algorithm.  }
\end{figure}

\section{Conclusions}
\label{sec:conclusions}
The goal of this paper was to detect localized attributes on a graph
when observations are corrupted by noise. We formulate hypothesis
tests to decide whether the observations activate a community in a
graph corrupted by Bernoulli noise. We model our noisy attributes as 
binary graph signals: a positive signal coefficient indicates an
activated node; when the activated nodes form a cluster, we say that
the attribute contains a localized activation. We proposed two
statistics for testing: graph wavelet statistic and graph scan
statistic, both of which are shown to be efficient and statistically
effective to detect activations. The graph wavelet statistic works by
detecting the boundary of the underlying localized attribute while the
graph scan statistic works by localizing the underlying localized
attribute, which is equivalent to denoising a given
attribute. Theorems~\ref{thm:wavelet_statistic},~\ref{thm:graph_scan_statistic}
and~\ref{thm:relax_scan_statistic} show that the key to distinguishing
the activation is the activation probability difference and the size
of the activated region.  We validate the effectiveness and robustness
of the proposed methods on simulated data first; experimental results
match the theorems well. We further validate them on two real-world
applications: high air-pollution detection and attribute ranking;
experimental results show the proposed statistics are effective and
efficient.

\bibliographystyle{IEEEbib}
\bibliography{bibl_jelena}

\begin{thebibliography}{10}

\bibitem{Jackson:08}
M.~Jackson,
\newblock {\em Social and Economic Networks},
\newblock Princeton University Press, 2008.

\bibitem{Newman:10}
M.~Newman,
\newblock {\em Networks: An Introduction},
\newblock Oxford University Press, 2010.

\bibitem{ChenVSK:15c}
S.~Chen, R.~Varma, A.~Singh, and J.~Kova{\v c}evi{\'c},
\newblock ``Signal recovery on graphs: {F}undamental limits of sampling
  strategies,''
\newblock {\em IEEE Trans, Signal and Inform. Process. over Networks}, vol. 2,
  no. 4, pp. 1--16, Dec. 2016.

\bibitem{GirvanN:02}
M.~Girvan and M.~E.~J. Newman,
\newblock ``Community structure in social and biological networks,''
\newblock {\em Proc. Nat. Acad. Sci.}, vol. 99, pp. 7821--7826, 2002.

\bibitem{BrinP:1998}
S.~Brin and L.~Page,
\newblock ``The anatomy of a large-scale hypertextual web search engine,''
\newblock {\em Comput. Netw. ISDN Syst.}, vol. 30, no. 1-7, pp. 107--117, Apr.
  1998.

\bibitem{XieHTP:2011}
M.~Xie, S.~Han, B.~Tian, and S.~Parvin,
\newblock ``Anomaly detection in wireless sensor networks: A survey,''
\newblock {\em J. Netw. Comput. Appl.}, vol. 34, no. 4, pp. 1302--1325, July
  2011.

\bibitem{ShumanNFOV:13}
D.~I. Shuman, S.~K. Narang, P.~Frossard, A.~Ortega, and P.~Vandergheynst,
\newblock ``The emerging field of signal processing on graphs: {E}xtending
  high-dimensional data analysis to networks and other irregular domains,''
\newblock {\em IEEE Signal Process. Mag.}, vol. 30, pp. 83--98, May 2013.

\bibitem{SandryhailaM:13}
A.~Sandryhaila and J.~M.~F. Moura,
\newblock ``Discrete signal processing on graphs,''
\newblock {\em IEEE Trans. Signal Process.}, vol. 61, no. 7, pp. 1644--1656,
  Apr. 2013.

\bibitem{ChenVSK:15}
S.~Chen, R.~Varma, A.~Sandryhaila, and J.~Kova{\v c}evi{\'c},
\newblock ``Discrete signal processing on graphs: {S}ampling theory,''
\newblock {\em IEEE Trans. Signal Process.}, vol. 63, no. 24, pp. 6510--6523,
  Dec. 2015.

\bibitem{AnisGO:15}
A.~Anis, A.~Gadde, and A.~Ortega,
\newblock ``Efficient sampling set selection for bandlimited graph signals
  using graph spectral proxies,''
\newblock {\em IEEE Trans. Signal Process.}, vol. 64, no. 14, pp. 3775--3789,
  July 2015.

\bibitem{WangLG:14}
X.~Wang, P.~Liu, and Y.~Gu,
\newblock ``Local-set-based graph signal reconstruction,''
\newblock {\em IEEE Trans. Signal Process.}, vol. 63, no. 9, May 2015.

\bibitem{MarquesSGR:15}
A.~G. Marques, S.~Segarra, G.~Leus, and A.~Ribeiro,
\newblock ``Sampling of graph signals with successive local aggregations,''
\newblock {\em IEEE Trans. Signal Process.}, Dec 2015.

\bibitem{NarangGO:13}
S.~K. Narang, Akshay Gadde, and Antonio Ortega,
\newblock ``Signal processing techniques for interpolation in graph structured
  data,''
\newblock in {\em Proc. IEEE Int. Conf. Acoust., Speech, Signal Process.},
  Vancouver, May 2013, pp. 5445--5449.

\bibitem{ChenSMK:14}
S.~Chen, A.~Sandryhaila, J.~M.~F. Moura, and J.~Kova{\v c}evi{\'c},
\newblock ``Signal recovery on graphs: {Variation} minimization,''
\newblock {\em IEEE Trans. Signal Process.}, vol. 63, no. 17, pp. 4609--4624,
  Sept. 2015.

\bibitem{RomeroMG:16}
D.~Romero, M.~Ma, and G.~B Giannakis,
\newblock ``Kernel-based reconstruction of graph signals,''
\newblock {\em arXiv preprint arXiv:1605.07174}, 2016.

\bibitem{KotzagiannidisD:16}
M~S. Kotzagiannidis and P.~L. Dragotti,
\newblock ``Sampling and reconstruction of sparse signals on circulant
  graphs-an introduction to graph-{FRI},''
\newblock {\em arXiv preprint arXiv:1606.08085}, 2016.

\bibitem{ZhuM:12}
X.~Zhu and M.~Rabbat,
\newblock ``Approximating signals supported on graphs,''
\newblock in {\em Proc. IEEE Int. Conf. Acoust., Speech, Signal Process.},
  Kyoto, Mar. 2012, pp. 3921--3924.

\bibitem{ChenVSK:16}
S.~Chen, R.~Varma, A.~Singh, and J.~Kova{\v c}evi{\'c},
\newblock ``Signal representations on graphs: {T}ools and applications,''
  arXiv:1512.05406 [cs.AI]., Dec. 2015.

\bibitem{AgaskarL:13}
A.~Agaskar and Y.~M. Lu,
\newblock ``A spectral graph uncertainty principle,''
\newblock {\em IEEE Trans. Inf. Theory}, vol. 59, no. 7, pp. 4338--4356, July
  2013.

\bibitem{TsitsveroBL:15}
M.~Tsitsvero, S.~Barbarossa, and P.~D. Lorenzo,
\newblock ``Signals on graphs: Uncertainty principle and sampling,''
\newblock {\em IEEE Trans. Signal Process.}, vol. 64, pp. 4845--4860, Sept.
  2016.

\bibitem{PerraudinV:16}
N.~Perraudin and P.~Vandergheynst,
\newblock ``Stationary signal processing on graphs,''
\newblock {\em arXiv preprint arXiv:1603.04667}, Jan. 2016.

\bibitem{MarquesSGR:16}
A.~G. Marques, S.~Segarra, G.~Leus, and A.~Ribeiro,
\newblock ``Stationary graph processes and spectral estimation,''
\newblock {\em arXiv preprint arXiv:1603.04667}, Mar. 2016.

\bibitem{ThanouSF:14}
D.~Thanou, D.~I. Shuman, and P.~Frossard,
\newblock ``Learning parametric dictionaries for signals on graphs,''
\newblock {\em IEEE Trans. Signal Process.}, vol. 62, pp. 3849--3862, June
  2014.

\bibitem{NarangO:12}
S.~K. Narang and A.~Ortega,
\newblock ``Perfect reconstruction two-channel wavelet filter banks for graph
  structured data,''
\newblock {\em IEEE Trans. Signal Process.}, vol. 60, pp. 2786--2799, June
  2012.

\bibitem{NarangO:13}
S.~K. Narang and Antonio Ortega,
\newblock ``Compact support biorthogonal wavelet filterbanks for arbitrary
  undirected graphs,''
\newblock {\em IEEE Trans. Signal Process.}, vol. 61, no. 19, pp. 4673--4685,
  Oct. 2013.

\bibitem{EkambaramFAR:15}
V.~N. Ekambaram, G.~C. Fanti, B.~Ayazifar, and K.~Ramchandran,
\newblock ``Spline-like wavelet filterbanks for multiresolution analysis of
  graph-structured data,''
\newblock {\em {IEEE} Trans. Signal and Information Processing over Networks},
  vol. 1, no. 4, pp. 268--278, 2015.

\bibitem{ZengCO:16}
J.~Zeng, G.~Cheung, and A.~Ortega,
\newblock ``Bipartite subgraph decomposition for critically sampled wavelet
  filterbanks on arbitrary graphs,''
\newblock in {\em Proc. IEEE Int. Conf. Acoust., Speech, Signal Process.},
  Shanghai, Mar. 2016, pp. 6210--6214.

\bibitem{ChenSMK:14a}
S.~Chen, A.~Sandryhaila, J.~M.~F. Moura, and J.~Kova{\v c}evi{\'c},
\newblock ``Signal denoising on graphs via graph filtering,''
\newblock in {\em Proc. IEEE Glob. Conf. Signal Information Process.}, Atlanta,
  GA, Dec. 2014, pp. 872--876.

\bibitem{Tremblay:14}
N.~Tremblay and P.~Borgnat,
\newblock ``Graph wavelets for multiscale community mining,''
\newblock {\em IEEE Trans. Signal Process.}, vol. 62, pp. 5227--5239, Oct.
  2014.

\bibitem{DongFVN:14}
X.~Dong, P.~Frossard, P.~Vandergheynst, and N.~Nefedov,
\newblock ``Clustering on multi-layer graphs via subspace analysis on
  {G}rassmann manifolds,''
\newblock {\em IEEE Trans. Signal Process.}, vol. 62, no. 4, pp. 905--918, Feb.
  2014.

\bibitem{ChenO:14}
P.-Y. Chen and A.O. Hero,
\newblock ``Local {F}iedler vector centrality for detection of deep and
  overlapping communities in networks,''
\newblock in {\em Proc. IEEE Int. Conf. Acoust., Speech, Signal Process.},
  Florence, 2014, pp. 1120--1124.

\bibitem{KarM:09}
S.~Kar and J.~M.~F. Moura,
\newblock ``Distributed consensus algorithms in sensor networks with imperfect
  communication: Link failures and channel noise,''
\newblock {\em IEEE Trans. Signal Process.}, vol. 57, no. 1, pp. 355--369,
  2009.

\bibitem{DuMWP:14}
J.~Du, S.~Ma, Y.{-}C. Wu, and H.~V. Poor,
\newblock ``Distributed hybrid power state estimation under {PMU} sampling
  phase errors,''
\newblock {\em IEEE Trans. Signal Process.}, vol. 62, no. 16, pp. 4052--4063,
  2014.

\bibitem{HammondVG:11}
D.~K. Hammond, P.~Vandergheynst, and R.~Gribonval,
\newblock ``Wavelets on graphs via spectral graph theory,''
\newblock {\em Appl. Comput. Harmon. Anal.}, vol. 30, pp. 129--150, Mar. 2011.

\bibitem{NarangSO:10}
S.l~K. Narang, G.~Shen, and A.~Ortega,
\newblock ``Unidirectional graph-based wavelet transforms for efficient data
  gathering in sensor networks,''
\newblock in {\em Proc. IEEE Int. Conf. Acoust., Speech, Signal Process.},
  Dallas, TX, Mar. 2010, pp. 2902--2905.

\bibitem{ShumanFV:16}
D.~I. Shuman, M.~J. Faraji, , and P.~Vandergheynst,
\newblock ``A multiscale pyramid transform for graph signals,''
\newblock {\em IEEE Trans. Signal Process.}, vol. 64, pp. 2119--2134, Apr.
  2016.

\bibitem{AkogluTK:2015}
L.~Akoglu, H.~Tong, and D.~Koutra,
\newblock ``Graph based anomaly detection and description: {A} survey,''
\newblock {\em Data Min. Knowl. Discov.}, vol. 29, no. 3, pp. 626--688, May
  2015.

\bibitem{AhnBL:10}
Y.~Y. Ahn, J.~P. Bagrow, and S.~Lehmann,
\newblock ``Link communities reveal multiscale complexity in networks,''
\newblock {\em Nature}, vol. 7307, pp. 7821--7826, Aug. 2010.

\bibitem{AbbeBH:16}
E.~Abbe, A.~S. Bandeira, and G.~Hall,
\newblock ``Exact recovery in the stochastic block model,''
\newblock {\em IEEE Trans. Inf. Theory}, vol. 62, no. 1, pp. 471--487, 2016.

\bibitem{Fortunato:10}
S.~Fortunato,
\newblock ``Community detection in graphs,''
\newblock {\em Physics Reports}, vol. 486, pp. 75--174, Feb. 2010.

\bibitem{SharpnackRS:12}
J.~Sharpnack, A.~Rinaldo, and A.~Singh,
\newblock ``Sparsistency of the edge lasso over graphs,''
\newblock in {\em Proc. Int. Conf. Artif. Intell. and Stat.}, La Palma, Apr.
  2012, pp. 1028--1036.

\bibitem{WangSST:15}
Y-X Wang, J.~Sharpnack, A.~Smola, and R.~J. Tibshirani,
\newblock ``Trend filtering on graphs,''
\newblock in {\em Proc. Int. Conf. Artif. Intell. and Stat.}, San Diego, CA,
  May 2015.

\bibitem{CastroCD:11}
E.~Arias-Castro, E.~J. Cand{\`e}s, and A.~Durand,
\newblock ``Detection of an anomalous cluster in a network,''
\newblock {\em Ann. Stat.}, vol. 39, no. 1, pp. 278--304, 2011.

\bibitem{SharpnackRS:13}
J.~Sharpnack, A.~Rinaldo, and A.~Singh,
\newblock ``Changepoint detection over graphs with the spectral scan
  statistic,''
\newblock in {\em Proc. Artif. Intell. and Stat.}, 2013.

\bibitem{SharpnackKS:13a}
J.~Sharpnack, A.~Krishnamurthy, and A.~Singh,
\newblock ``Near-optimal anomaly detection in graphs using {L}ovasz extended
  scan statistic,''
\newblock in {\em Proc. Neural Information Process. Syst.}, 2013.

\bibitem{YangML:13}
J.~Yang, J.~J. McAuley, and J.~Leskovec,
\newblock ``Community detection in networks with node attributes,''
\newblock in {\em Proc. IEEE Int. Conf. Data Mining}, Dallas, TX, Dec. 2013,
  pp. 1151--1156.

\bibitem{VetterliKG:12}
M.~Vetterli, J.~Kova{\v c}evi{\'c}, and V.~K. Goyal,
\newblock {\em Foundations of Signal Processing},
\newblock Cambridge University Press, Cambridge, 2014,
\newblock http://foundationsofsignalprocessing.org.

\bibitem{North:63}
D.~O. North,
\newblock ``An analysis of the factors which determine signal/noise
  discrimination in pulsed-carrier systems,''
\newblock {\em Proc. IEEE}, vol. 51, pp. 1016--1027, July 1963.

\bibitem{Gonzalez:02}
R.~C. Gonzalez and R.~E. Woods,
\newblock {\em Digital Image Processing},
\newblock Prentice Hall, Englewood Cliffs, NJ, 2002.

\bibitem{MahalanobisKC:87}
A.~Mahalanobis, B.~V. K.~V. Kumar, and D.~Casasent,
\newblock ``Minimum average correlation energy filters,''
\newblock {\em Appl. Opt.}, vol. 26, pp. 3633--3640, 1987.

\bibitem{HauptCN:09}
J.~D. Haupt, R.~M. Castro, and R.~D. Nowak,
\newblock ``Distilled sensing: selective sampling for sparse signal recovery,''
\newblock in {\em Proc. Int. Conf. Artif. Intell. and Stat.}, Clearwater Beach,
  FL, Apr. 2009, pp. 216--223.

\bibitem{CevherIHB:09}
V.~Cevher, P.~Indyk, C.~Hegde, and R.~G. Baraniuk,
\newblock ``Recovery of clustered sparse signals from compressive
  measurements,''
\newblock in {\em Proc. Sampling Theory Appl.}, Marseille, May 2009.

\bibitem{ZouLP:2016}
S.~Zou, Y.~Liang, and V.~H. Poor,
\newblock ``Nonparametric detection of an anomalous disk over a two-dimensional
  lattice network,''
\newblock in {\em Proc. IEEE Int. Conf. Acoust., Speech, Signal Process.},
  Shanghai, 2016, pp. 2394--2398.

\bibitem{HuCSFLL:13}
C.~Hu, L.~Cheng, J.~Sepulcre, G.~E. Fakhri, Y.~M. Lu, and Q.~Li,
\newblock ``Matched signal detection on graphs: Theory and application to brain
  network classification,''
\newblock in {\em Proc. Int. Conf. Inf. Process. Med. Imaging}, Pacific Grove,
  CA, 2013.

\bibitem{Krishnamurthy:15}
A.~Krishnamurthy,
\newblock ``Minimaxity in structured normal means inference,''
\newblock {\em http://arxiv.org/pdf/1506.07902}, 2015.

\bibitem{HeydariTP:15}
J.~Heydari, A.~Tajer, and H.~V. Poor,
\newblock ``Quickest detection of {G}aussian {M}arkov random field,''
\newblock in {\em Annual Allerton Conference on Communication, Control, and
  Computing}, Monticello, IL, Sept. 2015, pp. 808--814.

\bibitem{SharpnackKS:13}
J.~Sharpnack, A.~Krishnamurthy, and A.~Singh,
\newblock ``Detecting activations over graphs using spanning tree wavelet
  bases,''
\newblock in {\em Proc. Int. Conf. Artif. Intell. and Stat.}, Scottsdale, AZ,
  Apr. 2013.

\bibitem{BoykovVZ:01}
Y.~Boykov, O.~Veksler, and R.~Zabih,
\newblock ``Fast approximate energy minimization via graph cuts,''
\newblock {\em IEEE Trans. Pattern Anal. Mach. Intell.}, vol. 20, no. 11, pp.
  1222--1239, Nov. 2001.

\bibitem{Luxburg:07}
U.~V. Luxburg,
\newblock ``A tutorial on spectral clustering,''
\newblock {\em Statistics and Computing.}, vol. 17, pp. 395--416, 2007.

\bibitem{YangL:13}
J.~Yang and J.~Leskovec,
\newblock ``Overlapping community detection at scale: a nonnegative matrix
  factorization approach,''
\newblock in {\em Proc. ACM Int. Conf. Web Search Data Mining}, Rome, Feb.
  2013, pp. 587--596.

\bibitem{ZhangLZ:15}
J.~Zhu Y.~Zhang, E.~Levina,
\newblock ``Community detection in networks with node features,''
\newblock {\em http://arXiv:1509.01173}, 2015.

\bibitem{VetterliK:95}
M.~Vetterli and J.~Kova{\v c}evi{\'c},
\newblock {\em Wavelets and Subband Coding},
\newblock Prentice Hall, Englewood Cliffs, NJ, 1995,
\newblock http://waveletsandsubbandcoding.org/.

\bibitem{Bishop:06}
C.~M. Bishop,
\newblock {\em Pattern Recognition and Machine Learning},
\newblock Information Science and Statistics. Springer, 2006.

\bibitem{Ledoux:01}
M.~Ledoux,
\newblock {\em The concentration of measure phenomenon}, vol.~89,
\newblock Am. Math. Soc., 2001.

\bibitem{SinghNC:10}
A.~Singh, R.~D. Nowak, and A.~R. Calderbank,
\newblock ``Detecting weak but hierarchically-structured patterns in
  networks,''
\newblock in {\em Proc. Int. Conf. Artif. Intell. and Stat.}, 2010, pp.
  749--756.

\bibitem{KolmogorovZ:04}
V.~Kolmogorov and R.~Zabih,
\newblock ``What energy functions can be minimized via graph cuts?,''
\newblock {\em IEEE Trans. Pattern Anal. Mach. Intell.}, vol. 26, no. 2, pp.
  147--159, Feb. 2004.

\bibitem{MinnesotaGraph}
D.~Gleich,
\newblock ``The matlabbgl matlab library,''
  http://www.cs.purdue.edu/homes/dgleich/packages/matlab bgl/index.html.

\bibitem{ZhangC:15}
Y-L Zhang and F.~Cao,
\newblock ``Fine particulate matter (pm2.5) in china at a city level,''
\newblock {\em Nature Sci. Reports}, 2015.

\bibitem{AirCondition}
``Learn about air,'' http://www.epa.gov/learn-issues/learn-about-air.

\bibitem{IEEEX}
``I{E}{E}{E} {X}plore digital library,''
  http://ieeexplore.ieee.org/search/advsearch.jsp.

\bibitem{MyersWL:10}
J.~L. Myers, A.~D. Well, and R.~F.~Lorch Jr,
\newblock {\em Research Design and Statistical Analysis},
\newblock Routledge, 2010.

\bibitem{Massart:07}
P.~Massart,
\newblock {\em Concentration inequalities and model selection},
\newblock Springer-Verlag, 2007.

\end{thebibliography}


\appendix
\section{Appendices}
\subsection{Proof of Theorem~\ref{thm:wavelet_statistic} }
\label{sec:Appendices1}
Under $H_0^N$, the observation is $\y = \text{Bernoulli}( \epsilon
\one_\V )$. Let the $i$th graph wavelet coefficient be
\begin{eqnarray*}
  \w_i^T \y  & = &  \sqrt{ \frac{|S^{(i)}_1| |S^{(i)}_2|}{|S^{(i)}_1|+|S^{(i)}_2|}} \left( \frac{ \one_{S^{(i)}_1}^T \y  }{|S^{(i)}_1|} - \frac{ \one_{S^{(i)}_2}^T \y  }{|S^{(i)}_2|} \right)
  \\
  & = &  \frac{ \sqrt{ |S^{(i)}| } } {2} \left( \frac{ \one_{S^{(i)}_1} }{|S^{(i)}_1|} - \frac{ \one_{S^{(i)}_2} }{|S^{(i)}_2|} \right)^T \y,
\end{eqnarray*}
where $S^{(i)}_1, S^{(i)}_2$ are two local child sets that form $\w_i$
and $S^{(i)}$ is the parent local set. The second equality follows
from the even partition. Since each element in $\y$ is a Bernoulli
random variable that takes value one with probability $\epsilon$, the
first term in the parentheses is the mean of $|S^{(i)}_1|$ Bernoulli
random variables with success probability of $\epsilon$. We then can
show that
\begin{eqnarray*}
  \Ep \left(  { \one_{S^{(i)}_1}^T \y }/{|S^{(i)}_1|} \right) = { |S^{(i)}_1| \epsilon }/{|S^{(i)}_1|} = \epsilon.
\end{eqnarray*}
Following from the Hoeffding inequality (Proposition~2.7~in
\cite{Massart:07}), for any $\eta$, $ \mathbb{P} \left( \left| {
      \one_{S^{(i)}_1}^T \y }/{|S^{(i)}_1|} - \epsilon \right| \geq
  \eta \right) \leq 2 e^{-2|S^{(i)}_1|\eta^2 }.  $ Thus,
$(\one_{S^{(i)}_1}^T \y / |S^{(i)}_1| - \epsilon)$ is $1/ (2
\sqrt{|S^{(i)}_1|})$-sub-Gaussian random variable. Combing two terms,
we obtain that $\w_i^T \y$ is $\sqrt{2}/2$-sub-Gaussian random
variable. Then,
\begin{displaymath}
  \Ep \left(  \left\| \W_{(-1)}^T \y \right\|_\infty  \right) = 
  \Ep \left(  \max_{2 \leq i \leq N}  \w_i^T \y \right) \leq \sqrt{ \log N}.
\end{displaymath}
When var$(Z) = \sigma^2$, the Cirelson-Ibragimov-Sudakov inequality
(Page 10 in~\cite{Massart:07}) ensures $\mathbb{P} \left(Z \geq
  \mathbb{E} Z + u \right) \ \leq \ e^{-u^2/2\sigma^2}. $ Thus, with
probability at least $1-\delta_1$, we obtain
\begin{eqnarray}
  \label{eq:wavelet_H0}
  \widehat{w} \ = \ \left\| \W_{(-1)}^T \y \right\|_\infty  & \leq &  \Ep \left(  \left\| \W_{(-1)}^T \y \right\|_\infty  \right) + \sqrt{ \log (1/\delta_1)}
  \nonumber \\
  & = & \sqrt{\log N} + \sqrt{\log(1/\delta_1)}.
\end{eqnarray}
In other words, when a threshold is $\tau = \sqrt{\log N} +
\sqrt{\log(1/\delta_1)}$, the type-1 error is upper bounded by
$1-\delta_1$. This proves Lemma~\ref{lem:wavelet_statistic}.

Under $H_1^N$, the observation is $\y = \text{Bernoulli}( \mu \one_C +
\epsilon \one_\V )$. We need to show there exists at least one wavelet
basis vector capturing $C$ well. We start with the following lemmas.
\begin{myLem} 
  \label{lem:concentration}
  \begin{displaymath}
    \left\| \W^T_{(-1)} \one_C  \right\|_\infty \  \geq \  \sqrt{ \frac{ |C| (1-|C|/N) }{  1+ \rho \log N } }.
  \end{displaymath}
\end{myLem}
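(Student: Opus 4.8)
The plan is to derive this directly from the two structural properties of the wavelet basis already established, namely energy preservation (Lemma~\ref{lem:orthonormal}) and sparsity (Lemma~\ref{lem:sparse}), by running the exact chain of inequalities displayed just before Theorem~\ref{thm:wavelet_statistic} with the noisy observation $\y$ replaced by the clean indicator $\one_C$:
\[
\left\| \W_{(-1)}^T \one_C \right\|_{\infty}^2 \ \geq \ \frac{ \left\| \W_{(-1)}^T \one_C \right\|_{2}^2 }{ \left\| \W_{(-1)}^T \one_C \right\|_{0} } \ \geq \ \frac{ \left\| \one_C \right\|_{2}^2 - \left( \tfrac{1}{\sqrt{N}} \one_\V^T \one_C \right)^2 }{ 1 + \left\| \Delta \one_C \right\|_0 L }.
\]
Here the first inequality is the elementary bound $\|\x\|_\infty^2 \ge \|\x\|_2^2 / \|\x\|_0$; the second combines Lemma~\ref{lem:orthonormal} (so that deleting the constant column $\w_1 = \tfrac{1}{\sqrt N}\one_\V$ from $\W$ removes exactly the amount $(\w_1^T\one_C)^2$ from the Parseval identity) with Lemma~\ref{lem:sparse} applied to $\one_C$, together with the trivial fact that $\|\W_{(-1)}^T\one_C\|_0 \le \|\W^T\one_C\|_0$ since $\W_{(-1)}^T\one_C$ is $\W^T\one_C$ with one coordinate deleted.

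What remains is purely arithmetic. First, $\|\one_C\|_2^2 = |C|$ and $\w_1^T\one_C = \tfrac{1}{\sqrt N}\one_\V^T\one_C = |C|/\sqrt N$, so the numerator equals $|C| - |C|^2/N = |C|\,(1-|C|/N)$. Second, since $C \in \mathcal{C}$ (recall Lemma~\ref{lem:wavelet_statistic} uses $p=0$), the class membership gives $\|\Delta\one_C\|_0 = \TV_0(\one_C) \le \rho$, and because Algorithm~\ref{alg:wavelet} splits each local set as evenly as possible via 2-means clustering, the decomposition depth obeys $L \le \log N$. Substituting these two bounds into the denominator yields $1 + \|\Delta\one_C\|_0 L \le 1 + \rho \log N$, and taking square roots gives the claimed inequality.

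I expect no genuine obstacle: the argument is a short specialization of the displayed bound. The only step that really uses a design choice of the construction — rather than just orthonormality and the sparsity bound — is the estimate $L \le \log N$, which is where the balanced-split requirement on the wavelet tree enters; without balanced splits the denominator, and hence the lower bound, would degrade. One should also state explicitly that $\W_{(-1)}$ deletes precisely the single constant column, so that the numerator is exactly $\|\one_C\|_2^2 - (\w_1^T\one_C)^2$ and not something smaller.
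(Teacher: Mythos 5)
Your proposal is correct and follows essentially the same route as the paper's own proof: the chain $\|\W_{(-1)}^T\one_C\|_\infty^2 \geq \|\W_{(-1)}^T\one_C\|_2^2/\|\W_{(-1)}^T\one_C\|_0$, then Lemma~\ref{lem:orthonormal} to handle the removed constant column and Lemma~\ref{lem:sparse} with $\TV_0(\one_C)\leq\rho$ and the balanced splits ($L \leq \log N$) for the denominator, with the same arithmetic $\|\one_C\|_2^2 - |C|^2/N = |C|(1-|C|/N)$ in the numerator. No gaps; your explicit remarks about deleting only the single constant column and about where the even-partition assumption enters are exactly the points the paper uses implicitly.
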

\noindent
\begin{proof}
\begin{eqnarray*}
\left\| \W^T_{(-1)} \one_C  \right\|_\infty^2  & \geq &
  \frac{ \left\|  \W^T_{(-1)}  \one_C \right\|_{2}^2 }{ \left\|  \W^T_{(-1)} \one_C \right\|_{0} } 
\\
& \geq & \frac{ \left\|  \W^T \one_C \right\|_{2}^2 - \left( \frac{1}{\sqrt{N}}\one_{\V}^T \one_C \right)^2 } { 1+ \rho \log N  }
\\
& \geq & \frac{  \left\| \one_C \right\|_{2}^2 - |C|^2 /N }{  1+\rho \log N } \  \geq \  \frac{ |C| (1-|C|/N) }{  1+ \rho \log N }.
\end{eqnarray*}
The second inequality follows from Lemma~\ref{lem:sparse} (Sparsity) in the paper with $\TV_0( \one_C) \leq \rho$ and the even partition.
\end{proof}

 Let
 \begin{eqnarray*}
   i  & = & 
   \arg \max_{2 \leq j \leq N}  \w_j^T \one_C  
   \\
   & = &  \arg \max_{2 \leq j \leq N}  \frac{ \sqrt{S^{(j)}} } {2} \left( \frac{|S^{(j)}_1 \cap C|}{|S^{(j)}_1|} - \frac{|S^{(j)}_2 \cap C|}{|S^{(j)}_2|}  \right),
 \end{eqnarray*}
 where $S^{(j)}_1, S^{(j)}_2$ are the children local sets of
 $S^{(j)}$. Following from Lemma~\ref{lem:concentration},
\begin{eqnarray}
  \label{eq:wavelet_reform}
  \frac{ \sqrt{S^{(i)}} } {2} \left( \frac{|S^{(i)}_1 \cap C|}{|S^{(i)}_1|} - \frac{|S^{(i)}_2 \cap C|}{|S^{(i)}_2|}  \right) \geq   \sqrt{ \frac{ |C| (1-|C|/N)  }{  1+ \rho \log N } }.
\end{eqnarray}

 \begin{myLem} 
   \label{lem:binomial}
   When $a \sim \text{Binomial} \left( n_1, p_1 \right), b \sim
   \text{Binomial} \left( n_2, p_1 \right)$, with probability
   $(1-\delta)^2$,
  \begin{equation*}
    \left| a  + b - (n_1 p_1 + n_2 p_2) \right| \ \leq \ \sqrt{ \frac{n_1}{2} \log (\frac{2}{\delta})} + \sqrt{ \frac{n_2}{2} \log (\frac{2}{\delta})}.
 \end{equation*}
\end{myLem}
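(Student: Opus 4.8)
The plan is to treat $a$ and $b$ as sums of independent Bernoulli trials, apply a Hoeffding-type concentration bound to each one separately, and then combine the two deviation bounds through independence and the triangle inequality. (We read the second binomial as $b\sim\text{Binomial}(n_2,p_2)$, matching the right-hand side of the claim.)

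First, write $a=\sum_{i=1}^{n_1}a_i$ with $a_i$ i.i.d.\ $\text{Bernoulli}(p_1)$, so that $\mathbb{E}(a)=n_1 p_1$ and each $a_i\in[0,1]$. By the Hoeffding inequality (the same Proposition~2.7 in~\cite{Massart:07} invoked above), for any $\eta>0$,
\[
\mathbb{P}\left(\left|a-n_1 p_1\right|\geq\eta\right)\ \leq\ 2e^{-2\eta^2/n_1}.
\]
Choosing $\eta=\sqrt{\frac{n_1}{2}\log(2/\delta)}$ makes the right-hand side equal to $\delta$, so with probability at least $1-\delta$ we have $|a-n_1 p_1|\leq\sqrt{\frac{n_1}{2}\log(2/\delta)}$. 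Running the identical argument on $b=\sum_{i=1}^{n_2}b_i$ gives $|b-n_2 p_2|\leq\sqrt{\frac{n_2}{2}\log(2/\delta)}$ with probability at least $1-\delta$.

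Next, since $a$ and $b$ are independent, the two deviation events hold simultaneously with probability at least $(1-\delta)^2$. On that intersection the triangle inequality yields
\[
\left|a+b-(n_1 p_1+n_2 p_2)\right|\ \leq\ \left|a-n_1 p_1\right|+\left|b-n_2 p_2\right|\ \leq\ \sqrt{\frac{n_1}{2}\log(\frac{2}{\delta})}+\sqrt{\frac{n_2}{2}\log(\frac{2}{\delta})},
\]
which is exactly the asserted bound.

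There is no substantive obstacle in this lemma; the only points requiring care are (i) using the $[0,1]$-bounded form of Hoeffding so that, after solving $2e^{-2\eta^2/n}=\delta$ for $\eta$, the constant under the square root is $\tfrac12$ rather than something larger, and (ii) recording the success probability in the multiplicative form $(1-\delta)^2$, which uses the independence of $a$ and $b$ (a plain union bound would only give $1-2\delta$, but the rest of the appendix consistently propagates the product form). This lemma will then feed into the $H_1^N$ analysis by controlling the Binomial fluctuations of $\one_{S^{(i)}_1}^T\y$ and $\one_{S^{(i)}_2}^T\y$ around their means in~\eqref{eq:wavelet_reform}.
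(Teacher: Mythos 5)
Your proposal is correct and follows essentially the same route as the paper's own (very brief) proof: apply Hoeffding's inequality to each binomial with $\eta=\sqrt{\tfrac{n}{2}\log(2/\delta)}$ so each deviation event holds with probability at least $1-\delta$, then combine the two events (using independence to get the product form $(1-\delta)^2$) and apply the triangle inequality. Your reading of the statement's typo ($b\sim\text{Binomial}(n_2,p_2)$ rather than $p_1$) also matches what the paper intends, as seen in how the lemma is applied to $\one_{S^{(i)}_1}^T\y$ under $H_1^N$.
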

\noindent
\begin{proof}
Following from the Hoeffding inequality, with probability $1-\delta$, 
$ | a - n_1 p_1 | \ \leq \ \sqrt{ n_1\log (2/\delta)/2}$. We bound both $a$ and $b$, and rearrange the terms to obtain Lemma~\ref{lem:binomial}.
\end{proof}

 We aim to show that $
|\w_i^T \y| $ is sufficiently large. Here, the term
$\one_{S^{(i)}_1}^T \y$ counts how many ones appear inside the local
set $S^{(i)}_1$ and is a random variable under the distribution of
$\text{Binomial} \left(|S^{(i)}_1 \cap C |, \mu \right) +
\text{Binomial} \left(|S^{(i)}_1 \cap \overline{C} |, \epsilon
\right)$. Following Lemma~\ref{lem:binomial}, with probability
$(1-\delta)^2$,
  \begin{eqnarray*}
    && \left| \one_{S^{(i)}_1}^T \y -  \left( |S^{(i)}_1  \cap C |  \mu + |S^{(i)}_1  \cap \overline{C} | \epsilon | \right) \right| 
    \\
    & \leq & \sqrt{ \frac{|S^{(i)}_1  \cap C | }{2} \log (\frac{2}{\delta})} + \sqrt{ \frac{|S^{(i)}_1  \cap \overline{C} |}{2} \log (\frac{2}{\delta})},
   \end{eqnarray*}
   where the similar formulation also holds for $S^{(i)}_2$ . Without
   losing generality, we assume that $|S^{(i)}_1 \cap C | \geq
   |S^{(i)}_2 \cap C |$. With probability $(1- \delta_2)^4$,
   \begin{eqnarray*}
     \one_{S^{(i)}_1}^T \y  & \geq & |S^{(i)}_1 \cap C | \mu + |S^{(i)}_1 \cap \overline{C} | \epsilon 
     \\
     &&  - \left( \sqrt{ |S^{(i)}_1  \cap C | } +  \sqrt{ |S^{(i)}_1  \cap \overline{C} | }  \right)  \sqrt{ \frac{1}{2} \log (\frac{2}{\delta_2})},
   \end{eqnarray*}
   where the similar formulation also holds for $S^{(i)}_2$. We have
   \begin{eqnarray}
     \label{eq:wavelet_H1}
     &&  | \w_i^T \y |  \ = \   | \frac{1}{\sqrt{|S^{(i)}|} } \left( \one_{S^{(i)}_1}^T \y -  \one_{S^{(i)}_2}^T \y  \right)|
     \nonumber \\  \nonumber
     & \geq &  \frac{ |S^{(i)}_1 \cap C|  - |S^{(i)}_2 \cap C| }{ \sqrt{S^{(i)}}} \left( \mu - \epsilon \right)
     -  \sqrt{ 2 \log (\frac{2}{\delta_2})}
     \\
     & \geq &   \sqrt{ \frac{ |C| (1-|C|/N)  }{  1+ \rho \log N } }  \left( \mu - \epsilon \right) -  \sqrt{ 2 \log (\frac{2}{\delta_2})}.
   \end{eqnarray}
   The last inequality follows from~\eqref{eq:wavelet_reform}. To
   simultaneously bound both type-1 and type-2 errors, the right hand
   side of~\eqref{eq:wavelet_H1} should be larger than the right hand
   side of~\eqref{eq:wavelet_H0}. Thus, we obtain
   \begin{eqnarray*}
     \sqrt{ |C| \left(1- \frac{|C|}{N} \right) }  \left( \mu - \epsilon \right) & \geq & \sqrt{1+ \rho \log N } \Bigg( \sqrt{ \log N } +
     \\
     && \sqrt{ 2 \log(\frac{2}{\delta_1}) } + \sqrt{ 2 \log(\frac{2}{\delta_2})  }  \Bigg),
   \end{eqnarray*}
which proves Theorem~\ref{thm:wavelet_statistic}.

\subsection{Proof of Theorem~\ref{thm:graph_scan_statistic} }
\label{sec:Appendices2}
Under $H_0^N$, the observation is $\y = \text{Bernoulli}( \epsilon
\one_{\V})$. Let $\mathcal{C} = \{ C: \TV_1 (\one_C) \leq \rho
\}$. Similarly to Lemma 4, by using the Hoeffding inequality, we can
show that both $z_1(C) = \one_{C}^T (\y-\epsilon \one) / \sqrt{|C|}$
and $z_2 = \one^T (\y-\epsilon \one) / \sqrt{N}$ are sub-Gaussian
random variables with mean zero and variance $1/2$. The graph scan
statistic is then
\begin{eqnarray*}
  \widehat{g} & = &  \max_{C \in \mathcal{C}}  |C| \KL \left( \frac{ \one_C^T \y} {|C| } \| \frac{ \one^T \y} {N } \right)
  \\
  & \leq & 8 \max_{C \in \mathcal{C}}  |C|  \left( \frac{ \one_C^T (\y - \epsilon \one) } {|C| } - \frac{ \one^T (\y- \epsilon \one)} {N } \right)^2
  \\
  & = &  8 \max_{C \in \mathcal{C}}  \left( z_1(C) -  \sqrt{ \frac{|C|}{N} }  z_2  \right)^2
  \\
  & = &  8  \max \Bigg[ \Bigg(   \max_{C \in \mathcal{C}} \left(  z_1(C) -  \sqrt{ \frac{|C|}{N} }  z_2 \right) \Bigg)^2,
  \\
  && ~~~~~~~~~~
  \Bigg( \min_{C \in \mathcal{C}} \left( z_1(C) -  \sqrt{ \frac{|C|}{N} }  z_2 \right)   
  \Bigg)^2 \Bigg]
\end{eqnarray*}
The random variable $z_1(C) - \sqrt{ {|C|}/{N} } z_2 $ is the
difference between two sub-Gaussian random variables with same
mean. The last equality takes care of both maximum and minimum. Since
the distribution of $z_1(C) - \sqrt{ {|C|}/{N} } z_2 $ is symmetric
and the tail probabilities on two sides are same, we only need to
consider one side of the tail probability of $z_1(C) - \sqrt{
  {|C|}/{N} } z_2 $.

Similarly to Theorem 5 in~\cite{SharpnackKS:13a}, we can show that
with probability $1-\delta_1$,
\begin{eqnarray*}
  \max_{C \in \mathcal{C}} z_1(C)
  & \leq & \left( \sqrt{\rho} + \sqrt{\frac{1}{2} \log N} \right) \sqrt{ 2 \log (N-1)} +
  \\
  &&  \sqrt{2 \log 2} + \sqrt{ 2 \log \frac{1}{\delta_1}}.
\end{eqnarray*}
Since $z_2$ is a sub-Gaussian random variable.  $z_2 >
-\sqrt{\log(2/\delta_1)/2}$ with probability $1-\delta$. Finally, with
probability at least $(1-\delta_1)^2$, we have
\begin{eqnarray*}
  && \max_{C \in \mathcal{C}} \left(  z_1(C) -  \sqrt{ \frac{|C|}{N} }  z_2 \right) \Bigg)  \ \leq \   \sqrt{ \frac{9}{2}\log \frac{2}{\delta_1}} +
  \\
  & &  \left( \sqrt{\rho} + \sqrt{\frac{1}{2} \log N} \right) \sqrt{ 2 \log (N-1)} + \sqrt{2 \log 2}.
\end{eqnarray*}

Theorem 5~\cite{SharpnackKS:13a} is concerned with a Gaussian variable
and here we are concerned with a sub-Gaussian variable. Thus, under
the null hypothesis $H_0^N$, with probability $(1-\delta_1)^2$,
\begin{eqnarray}
  \label{eq:gss_H0}
  \widehat{g}  & \leq &  8 \Bigg( \bigg( \sqrt{\rho} + \sqrt{\frac{1}{2} \log N} \bigg) \sqrt{ 2 \log (N-1)} +
  \\ \nonumber
  &&  \sqrt{2 \log 2} + \sqrt{ \frac{9}{2} \log (2/\delta_1)} \Bigg)^2.
\end{eqnarray}
In other words, when we set the threshold $\tau$ to be the right hand
side of the above formula, the type-1 error is upper bounded by
$(1-\delta_1)^2$. This proves Lemma~\ref{lem:graph_scan_statistic}.

Under $H_1^N$, the observation is $\y = \text{Bernoulli}( \mu \one_{C}
+ \epsilon \one_{\overline{C}} )$. Based on Lemma~\ref{lem:binomial},
with probability $(1-\delta_2)^2$,
\begin{displaymath}
  | \frac{\one^T \y}{N} - \epsilon - (\mu-\epsilon)\frac{|C|}{N} |
  \leq   \sqrt{ \frac{|C|}{2|N|^2} \log (\frac{2}{\delta_2})} + \sqrt{ \frac{N - |C|}{2 N^2} \log (\frac{2}{\delta_2})}.
\end{displaymath}
Following from the Hoeffding inequality, with probability $1-\delta$,
$|\one_C^T \y/{ |C| } -\mu | \leq \sqrt{ \frac{1}{2|C|} \log
  (2/\delta_2) }. $ Then, with probability $(1-\delta_2)^3$,
 \begin{eqnarray}
   \label{eq:gss_H1}
   & \widehat{g} &  =  \max_{C \in \mathcal{C}}  |C| \KL \left( \frac{ \one_C^T \y} {|C| } \| \frac{ \one^T \y} {N} \right)
   \nonumber \\ \nonumber
   & \geq &  \max_{C \in \mathcal{C}} |C|  \left( \frac{ \one_C^T \y} {|C| } - \epsilon - (\mu-\epsilon) \frac{|C|}{N} - \sqrt{\frac{1}{2N} \log(2/\delta_2) } \right)^2
   \\ \nonumber
   & \geq &  |C|  \left( \frac{(N-|C|) \left( \mu - \epsilon \right)}{N}  - (  \sqrt{ \frac{1}{2|C|} } + \sqrt{\frac{1}{2N} } ) \sqrt{ \log(\frac{2}{\delta_2}) } \right)^2.
   \\
\end{eqnarray}
To simultaneously bound both type-1 and type-2 errors, the right hand
side of~\eqref{eq:gss_H1} should be larger than the right hand side
of~\eqref{eq:gss_H0}. Thus, we obtain
\begin{eqnarray*}
  &&  \mu - \epsilon 
  \ \geq  \
  \frac{N}{N -|C|} \Bigg(  \frac{2\sqrt{2}}{\sqrt{|C|}} \Bigg( \sqrt{2 \log 2} + \sqrt{ \frac{9}{2} \log (\frac{2}{\delta_1} ) } +
  \\
  &&   \bigg( \sqrt{\rho} + \sqrt{\frac{1}{2} \log N} \bigg) \sqrt{ 2 \log (N-1)} \Bigg) +  
  \\ 
  &&
  \left(  \sqrt{ \frac{1}{2|C|} } + \sqrt{\frac{1}{2N} } \right) \sqrt{ \log (\frac{2}{\delta_2}) }    \Bigg).
\end{eqnarray*}
We can reformulate the above formula and obtain
Theorem~\ref{thm:graph_scan_statistic}.

\subsection{Proof of Theorem~\ref{thm:relax_scan_statistic} }
\label{sec:Appendices3}
Under $H_0^N$, the observation is $\y = \text{Bernoulli}( \epsilon
\one_{\V})$. Let $\mathcal{X}_t = \{ \x: \x \in [0, 1]^N, \TV_1(\x)
\leq \rho, \one^T \x \leq t. \}$. Let $z_1(\x, t) = \x^T (\y -
\epsilon \one) / t $ and $z_2 = \one^T (\y - \epsilon \one) / N$. The
statistic is then
\begin{eqnarray*}
  \widehat{r}  
  & = & \max_{t, \x \in \mathcal{X}_t}  t \KL \left( \frac{ \x^T \y} {t} \| \frac{ \one^T \y} {N} \right) 
  \\
  & \leq & 8   \max_{ t, \x \in \mathcal{X}_t}  t \left( \frac{ \x^T (\y - \epsilon \one) } {t} -  \frac{ \one^T (\y - \epsilon \one) } {N} \right)^2
  \\
  & = &  8  \max_{ t, \x \in \mathcal{X}_t}  \left( z_1(\x, t) -  \sqrt{ \frac{t}{N} }  z_2  \right)^2
  \\
  & = &  8  \max \Bigg[ \Bigg(    \max_{ t, \x \in \mathcal{X}_t}  \left(  z_1(\x, t) -  \sqrt{ \frac{t}{N} }  z_2 \right) \Bigg)^2,
  \\
  && ~~~~~~~~~~
  \Bigg(  \min_{ t, \x \in \mathcal{X}_t}  \left( z_1( \x, t) -  \sqrt{ \frac{t}{N} }  z_2 \right)   
  \Bigg)^2 \Bigg]
\end{eqnarray*}
The last equality takes care of both maximum and minimum. Since the
distribution of $z_1(\x ,t) - \sqrt{ {t}/{N} } z_2 $ is symmetric, we
only need to consider one side of the distribution.

\begin{myLem}
  \label{lem:subgaussian}
  \begin{eqnarray*}
    \Ep_{\y}   \max_{ t, \x \in [0,1]^N, \one^T \x \leq t }  z_1(\x, t)  \ \leq \  \sqrt{ \frac{N \log 2}{2}}. 
  \end{eqnarray*}
\end{myLem}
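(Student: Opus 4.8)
\emph{Proof proposal.} I would begin by writing $\e = \y - \epsilon\one$, so that under $H_0^N$ the coordinates $\e_i = y_i - \epsilon$ are independent, mean zero, and supported in an interval of length one. The plan is to pass from the continuous maximization over the pair $(\x,t)$ to a maximum over a finite family of at most $2^N$ random variables, each $\tfrac12$-sub-Gaussian, and then to apply the standard sub-Gaussian maximal inequality $\Ep\max_{1\le j\le m} X_j \le \sigma\sqrt{2\log m}$. Setting $m = 2^N$ and $\sigma = \tfrac12$ reproduces exactly the bound $\sqrt{N\log 2/2}$, so the whole content of the lemma is this reduction together with the sub-Gaussian bookkeeping.

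For the reduction, the key observation is that for each fixed (integer) $t$ the objective $\x \mapsto \x^T\e/t$ is linear and the feasible set $\{\x \in [0,1]^N : \one^T\x \le t\}$ is a polytope whose vertices are precisely the indicator vectors $\one_S$ of node sets $S$ with $|S| \le t$ --- this is the independent-set polytope of the uniform matroid $U_{t,N}$, hence integral. Therefore $\max_{\x} \x^T\e/t = \max_{|S|\le t}\one_S^T\e/t$; maximizing further over $t$, using that $\x = 0$ is always feasible and that for $\one_S^T\e > 0$ the ratio $\one_S^T\e/t$ is largest at the admissible value $t = |S|$, I get
\[
  \max_{t,\x} z_1(\x,t) \ \le\ \max\Bigl\{\, 0,\ \max_{\emptyset \ne S \subseteq \V} \tfrac{\one_S^T\e}{|S|} \,\Bigr\},
\]
which is a maximum of at most $2^N$ quantities.

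It then remains to check that each of these quantities is $\tfrac12$-sub-Gaussian. By Hoeffding's lemma a mean-zero variable supported in an interval of length one is $\tfrac12$-sub-Gaussian, so $\one_S^T\e$, being a sum of $|S|$ independent such variables, is $\tfrac{\sqrt{|S|}}{2}$-sub-Gaussian; dividing by $|S| \ge 1$ leaves $\one_S^T\e/|S|$ at worst $\tfrac{1}{2\sqrt{|S|}} \le \tfrac12$-sub-Gaussian, and the deterministic term $0$ is trivially $\tfrac12$-sub-Gaussian. Plugging $m \le 2^N$ and $\sigma = \tfrac12$ into the maximal inequality gives $\Ep_\y \max_{t,\x} z_1(\x,t) \le \tfrac12\sqrt{2 \log 2^N} = \sqrt{N\log 2/2}$, as claimed.

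The one step I expect to require care is the vertex characterization of the constraint polytope, i.e.\ the claim that restricting $\x$ to $\{0,1\}^N$ does not lower the maximum of the linear objective; this is exactly what keeps the bound at order $\sqrt{N}$, since the vertex set has size $2^N$ but enters the maximal inequality only through its logarithm. I should also note that Lemma~\ref{lem:subgaussian} drops the constraint $\TV_1(\x) \le \rho$ present in~\eqref{eq:GSS_Re_cvx}; since that only enlarges the feasible set, the resulting bound remains valid (indeed rather loose) for the constrained problem, which is all the subsequent argument needs.
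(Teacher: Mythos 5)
Your proposal is correct and is essentially the paper's own proof: reduce the maximization over $\x\in[0,1]^N$ to binary vectors via integrality of the constraint polytope, take $t=\one^T\x$ as the optimal admissible value, observe that each resulting statistic is $\tfrac12$-sub-Gaussian by Hoeffding, and apply the sub-Gaussian maximal inequality over at most $2^N$ variables to obtain $\sqrt{N\log 2/2}$. The one mismatch is the normalization: you took $z_1(\x,t)=\x^T(\y-\epsilon\one)/t$ as literally written, whereas the paper's proof of the lemma (and the algebra preceding it, cf.\ the definition $z_1(C)=\one_C^T(\y-\epsilon\one)/\sqrt{|C|}$ in the proof of Theorem~\ref{thm:graph_scan_statistic}) works with $z_1(\x,t)=\x^T(\y-\epsilon\one)/\sqrt{t}$, the ``$/t$'' in the text being a typo. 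Your argument transfers verbatim to that version: the same reduction gives $\max\bigl\{0,\ \max_{S\neq\emptyset}\one_S^T(\y-\epsilon\one)/\sqrt{|S|}\bigr\}$, each term is exactly $\tfrac12$-sub-Gaussian by the same Hoeffding computation, and the maximal inequality yields the identical bound, which is the form actually used downstream.
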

\noindent
\begin{proof}
\begin{eqnarray*}
&& \Ep_{\y}   \max_{ t, \x \in [0,1]^N, \one^T \x \leq t }  \frac{ \x^T (\y - \epsilon \one)} { \sqrt{t} } 
\\
& = & \Ep_{\y}  \max_{t} \frac{1}{ \sqrt{t} } \max_{\x \in [0,1]^N, \one^T \x \leq t }   \x^T (\y - \epsilon \one)
\\
& = & \Ep_{\y}  \max_{t} \frac{1}{ \sqrt{t} } \max_{\x \in \{ 0,1 \}^N, \one^T \x \leq t }   \x^T (\y - \epsilon \one)
\\
& = & \Ep_{\y} \max_{\x \in \{ 0,1 \}^N }  \frac{ \x^T (\y - \epsilon \one) }{ \sqrt{ \one^T \x } }.
\end{eqnarray*}
When $\x$ is a binary variable, $\x^T (\y - \epsilon \one) / \sqrt{ \one^T \x } $ is a 1/2-sub-Gaussian random variable as shown previously. The cardinality of the set $ \{ 0,1 \}^N$ is $2^N$. Thus, 
$$
\Ep_{\y}   \max_{ t, \x \in [0,1]^N, \one^T \x \leq t }  \frac{ \x^T (\y - \epsilon \one)} { \sqrt{t} }  \leq \sqrt{\frac{1}{2} \log 2^N}.
$$
\end{proof}

Combining
Lemma~\ref{lem:subgaussian} and to Theorem 5
in~\cite{SharpnackKS:13a}, we can show that with probability
$1-\delta_1$, 
\begin{eqnarray*}
  && \max_{ t, \x \in \mathcal{X}_t}  z_1(\x, t)
  \ \leq \ \frac{ \log 2N +1 }{ \sqrt{ \left( \sqrt{\rho} + \sqrt{\frac{1}{2} \log N} \right)^2 \log N  } }
  \\
  && + 2 \sqrt{ \left( \sqrt{\rho} + \sqrt{\frac{1}{2} \log N}\right)^2 \log N } + \sqrt{2 \log 2} + \sqrt{ 2 \log \frac{1}{\delta_1}}.
\end{eqnarray*}
Since $z_2$ is a sub-Gaussian random variable.  $z_2 >
-\sqrt{\log(2/\delta_1)/2}$ with probability $1-\delta_1$. Finally,
with probability at least $(1-\delta_1)^2$, we have
\vspace{-1mm}
\begin{eqnarray*}
  && \max_{ t, \x \in \mathcal{X}_t}   \left( z_1(\x, t) -  \sqrt{ \frac{t}{N} }  z_2 \right) \Bigg) 
  \\
  &\leq & \frac{ \log 2N +1 }{ \sqrt{ \left( \sqrt{\rho} + \sqrt{\frac{1}{2} \log N} \right)^2 \log N  } } + \sqrt{2 \log 2} 
  \\
  && + 2 \sqrt{ \left( \sqrt{\rho} + \sqrt{\frac{1}{2} \log N}\right)^2 \log N } + \sqrt{ \frac{9}{2} \log \frac{2}{\delta_1}}.
\end{eqnarray*}
Thus, under the null hypothesis $H_0^N$, with probability
$(1-\delta_1)^2$,
\begin{eqnarray}
  \label{eq:rss_H0}
  && \widehat{r}  \ \leq \  8 \Bigg( \frac{ \log 2N +1 }{ \sqrt{ \left( \sqrt{\rho} + \sqrt{\frac{1}{2} \log N} \right)^2 \log N  } }  + \sqrt{2 \log 2} 
  \nonumber  \\  
  && + 2 \sqrt{ \left( \sqrt{\rho} + \sqrt{\frac{1}{2} \log N}\right)^2 \log N }+ \sqrt{ \frac{9}{2} \log (\frac{2}{\delta_1})} \Bigg)^2.
\end{eqnarray}
In other words, when we set the threshold $\tau$ to be the right hand
side of the above formula, the type-1 error is upper bounded by
$(1-\delta_1)^2$. This proves Lemma~\ref{lem:relax_scan_statistic}.

Under $H_1^N$, the observation is $\y = \text{Bernoulli}( \mu \one_{C}
+ \epsilon \one_{\overline{C}} )$. Let $t^* = |C|, \x^* = \one_C$.
Similarly to the proof in Theorem~\ref{thm:graph_scan_statistic}, with
probability at least $(1-\delta_2)^3$,

\begin{eqnarray}
  \label{eq:rss_H1}
  && \widehat{r} \ = \  t^* \KL \left( \frac{ \y^T \x^*} {t^*} \| \frac{\one^T \y}{N} \right) \ = \  |C| \KL \left( \frac{ \one_C^T \y} {|C| } \| \frac{\one^T \y}{N} \right)
  \nonumber  \\ \nonumber
  & \geq &  |C|  \left( \frac{N-|C|}{N} \left( \mu - \epsilon \right) - \left(  \sqrt{ \frac{1}{2|C|} } + \sqrt{\frac{1}{2N} } \right) \sqrt{ \log( \frac{2}{\delta_2} ) } \right)^2.
  \\
\end{eqnarray}

To simultaneously bound both type-1 and type-2 errors, the right hand
side of~\eqref{eq:rss_H1} should be larger than the right hand side
of~\eqref{eq:rss_H0}. Thus, we obtain
\begin{eqnarray*}
  &&  \mu - \epsilon 
  \ \geq  \
  \frac{N}{N -|C|} \Bigg(  \frac{2\sqrt{2}}{\sqrt{|C|}} \Bigg( \frac{ \log 2N +1 }{ \sqrt{ \left( \sqrt{\rho} + \sqrt{\frac{1}{2} \log N} \right)^2 \log N  } }  
  \\
  &&  + \sqrt{2 \log 2}  + 2 \sqrt{ \left( \sqrt{\rho} + \sqrt{\frac{1}{2} \log N}\right)^2 \log N }+ \sqrt{ \frac{9}{2} \log (\frac{2}{\delta_1})}  \Bigg)
  \\
  &&   +  
  \left(  \sqrt{ \frac{1}{2|C|} } + \sqrt{\frac{1}{2N} } \right) \sqrt{ \log (\frac{2}{\delta_2}) }  \Bigg);
\end{eqnarray*}
We can reformulate the above formula and obtain
Theorem~\ref{thm:relax_scan_statistic}.

\subsection{Naive Approach }
\label{sec:Appendices4}
As mentioned in Section~\ref{sec:gss}, a naive statistic is to compute
the average value $\widehat{n} = \one_{\V}^T \y$.
\begin{myThm}
  \label{thm:naive}
  The sufficient condition to distinguish $H_1^N$ from $H_0^N$ by the
  average statistic is
  \begin{displaymath}
    |C| (\mu-\epsilon) \ \geq \ O(\sqrt{N}).
  \end{displaymath}
\end{myThm}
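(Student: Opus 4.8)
The plan is to exploit the fact that the naive statistic $\widehat{n} = \one_\V^T \y$ is, under either hypothesis, a sum of $N$ independent $\{0,1\}$-valued random variables, so its fluctuations are entirely governed by its mean together with a Hoeffding bound. First I would record the two means: under $H_0^N$ we have $\Ep(\widehat{n}) = N\epsilon$, whereas under $H_1^N$ with activated set $C$ we have $\Ep(\widehat{n}) = |C|\mu + (N-|C|)\epsilon = N\epsilon + |C|(\mu-\epsilon)$. The gap between the null and alternative means is therefore exactly $|C|(\mu-\epsilon)$, and the test rejects $H_0^N$ when $\widehat{n}$ exceeds the midpoint threshold $\tau = N\epsilon + \tfrac12 |C|(\mu-\epsilon)$.

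Next I would control the two error probabilities. The type-1 error $\mathbb{P}(\widehat{n} > \tau \mid H_0^N)$ is the upper tail of a $\text{Binomial}(N,\epsilon)$ variable exceeding its mean by $\tfrac12|C|(\mu-\epsilon)$; since every summand lies in $[0,1]$, the Hoeffding inequality already invoked in Lemma~\ref{lem:binomial} gives $\mathbb{P}(\widehat{n} > \tau \mid H_0^N) \le \exp(-|C|^2(\mu-\epsilon)^2/(2N))$. Symmetrically, under $H_1^N$ the mean of $\widehat{n}$ sits a distance $\tfrac12|C|(\mu-\epsilon)$ above $\tau$, and $\widehat{n}$ is still a sum of $N$ independent $[0,1]$-valued variables, so Hoeffding gives $\mathbb{P}(\widehat{n} \le \tau \mid H_1^N) \le \exp(-|C|^2(\mu-\epsilon)^2/(2N))$. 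Adding the bounds, $R_N(T) \le 2\exp(-|C|^2(\mu-\epsilon)^2/(2N))$, which tends to $0$ exactly when $|C|^2(\mu-\epsilon)^2/N \to \infty$, i.e. when $|C|(\mu-\epsilon) \ge O(\sqrt{N})$ — the claimed sufficient condition.

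There is no serious obstacle here; the argument is two applications of Hoeffding plus a threshold choice. The only points worth a remark are the standard ones shared with the paper's other theorems: the threshold uses the (unknown) quantities $\epsilon$ and $|C|$, which is acceptable because the statement concerns the fundamental detectability of a given localized attribute rather than an implementable rule; and the supremum over $C$ in the definition of $R_N(T)$ is harmless, since the law of $\widehat{n}$ under $H_1^N$ depends on $C$ only through $|C|$ (a larger $|C|$ merely pushes the alternative mean further from $\tau$), so no union bound over $\mathcal{C}$ is needed. The value of the result is the comparison it enables: the naive test pays the full $\sqrt{N}$, whereas Theorems~\ref{thm:wavelet_statistic},~\ref{thm:graph_scan_statistic} and~\ref{thm:relax_scan_statistic} replace $|C|$ by $\sqrt{|C|}$ and $\sqrt{N}$ by a polylogarithmic factor in $N$ (modulated by $\rho$), so for localized attributes with $|C| \ll N$ the graph statistics are dramatically more powerful.
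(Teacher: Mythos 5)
Your proposal is correct and follows essentially the same route as the paper's own proof: two applications of the Hoeffding inequality to the sample mean under $H_0^N$ and $H_1^N$, whose means differ by exactly $|C|(\mu-\epsilon)$, with a threshold placed between them. Your version is just slightly more explicit (midpoint threshold, explicit exponential error bounds), but there is no substantive difference from the argument in Appendix~\ref{sec:Appendices4}.
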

\begin{proof}
  Using Hoeffding inequality, we can show that under 
  $H_0^N$, $ \mathbb{P} \left( \left| { \one^T \y }/{N} - \epsilon
    \right| \geq \eta \right) \leq 2 e^{-2 N \eta^2 }, $ for any
  $\eta$, while under $H_1^N$, $ \mathbb{P} \left( \left| { \one^T \y
      }/{N} - {\left(\mu |C| + \epsilon (N - |C|) \right)}/{N} \right|
    \geq \eta \right) \leq 2 e^{-2 N \eta^2 }, $ for any $\eta$. Thus,
  when $ |C| (\mu-\epsilon) \geq \sqrt{ (N/2) \log({2}/{\delta}
    ) } $ we have that $\mathbb{P}\{ T = 1 | H_0^N \text{ is true} \} \leq
  \delta$ and $\mathbb{P}\{ T = 0 | H_1^N \text{ is true}\} \geq
  1-\delta$.
\end{proof}
We thus see that the sufficient condition in Theorem~\ref{thm:naive}
is much looser than the sufficient conditions in
Theorems~\ref{thm:wavelet_statistic} and
~\ref{thm:graph_scan_statistic}, indicating that the proposed
approaches are provably better than this naive approach.
\end{document}